\newcommand{\paperOption}{%
 a4paper,onecolumn,12pt,twoside%
}
\newcommand{\geometryOption}{%
 hscale=0.8,vscale=0.8,centering%
}
\newcommand{\hyperrefOption}{%
 pagebackref,pdfstartview=FitH,hidelinks%
}
\newcommand{\docVersion}{?.?.?}
\newcommand{\docBuildNumber}{?}
\newenvironment{keywords}%
 {\begin{quote}\textbf{Index Terms} ---}%
 {\relax\end{quote}}
\newtheorem{theorem}{Theorem}[section]
\newtheorem{algorithm}[theorem]{Algorithm}
\newtheorem{conjecture}[theorem]{Conjecture}
\newtheorem{definition}[theorem]{Definition}
\newtheorem{example}[theorem]{Example}
\newtheorem{proposition}[theorem]{Proposition}
\newtheorem{remark}[theorem]{Remark}
\newcommand{\eqdef}{:=}
\newcommand{\relvar}[2]{\buildrel#1\over#2}
\newcommand{\eqvar}[1]{\relvar{\mathrm{#1}}{=}}
\newcommand{\fromTo}[2]{[\![#1,#2]\!]}
\newcommand{\real}{\mathbf{R}}
\newcommand{\set}[2]{{\left\{#1\colon#2\right\}}}
\newcommand{\size}[1]{{\left|#1\right|}}
\newcommand{\transpose}[1]{#1^\mathsf{T}}
\newcommand{\vt}[1]{\bm{#1}}
\DeclareMathOperator{\convexHull}{conv}
\DeclareMathOperator{\rank}{rank}
\DeclareMathOperator{\support}{supp}
\DeclareMathOperator{\weight}{wt}
\newcommand{\canonicalBasis}{\mathrm{e}}
\newcommand{\ceil}[1]{{\left\lceil#1\right\rceil}}
\newcommand{\channel}{\mathcal{W}}
\newcommand{\deterministic}{\mathcal{D}}
\newcommand{\divergence}[2]{D{\left(#1\|#2\right)}}
\newcommand{\floor}[1]{{\left\lfloor#1\right\rfloor}}
\newcommand{\ellone}[1]{\left\|#1\right\|_1}
\newcommand{\map}[3]{#1\colon#2\to#3}
\newcommand{\probability}{\mathcal{P}}
\newcommand{\useless}{\mathrm{U}}
\DeclareMathOperator{\capacity}{C}
\DeclareMathOperator{\decompose}{dec}
\DeclareMathOperator{\diagonal}{diag}
\DeclareMathOperator{\channelDistance}{d}
\DeclareMathOperator{\statDistance}{d}
\DeclareMathOperator{\intrinsic}{IC}
\DeclareMathOperator{\lowerIC}{\underline{\intrinsic}}
\DeclareMathOperator{\upperIC}{\overline{\intrinsic}}
\DeclareMathOperator{\metric}{d}
\DeclareMathOperator{\productMetric}{d_{\vee}}
\DeclareMathOperator{\rankProbability}{\Gamma}
\DeclareMathOperator{\lowerRP}{\underline{\rankProbability}}
\DeclareMathOperator{\upperRP}{\overline{\rankProbability}}
\newcommand{\encoder}{\mathrm{E}}
\newcommand{\decoder}{\mathrm{D}}
\begin{document}

\title{Intrinsic Capacity}
\author{
 Shengtian Yang, Rui Xu, Jun Chen, Jian-Kang Zhang%
 \footnotetext{%
  This work was supported in part by the National Natural Science
  Foundation of China under Grant 61571398 and in part by the
Natural Science and Engineering Research Council (NSERC) of Canada
under a Discovery Grant.
  This paper is to be presented in part at the 2017 IEEE International Symposium on Information Theory.
 }
 \footnotetext{%
  S.~Yang is with the School of Information and Electronic
  Engineering, Zhejiang Gongshang University, Hangzhou 310018, China,
  and was also with the Department of Electrical and Computer
  Engineering, McMaster University, Hamilton, ON L8S 4K1, Canada
  (e-mail: \mbox{yangst@codlab.net}).
 }
 \footnotetext{%
  R.~Xu, J.~Chen, and J.-K. Zhang are with the Department of
  Electrical and Computer Engineering, McMaster University, Hamilton,
  ON L8S 4K1, Canada
  (e-mail: \mbox{xur27@mcmaster.ca}; \mbox{junchen@ece.mcmaster.ca}
  \mbox{jkzhang@ece.mcmaster.ca}).
 }
}
\date{}

\maketitle

\begin{abstract}
Every channel can be expressed as a convex combination of deterministic channels with each deterministic channel
corresponding to one particular intrinsic state. Such convex combinations are in general not unique, each giving
rise to a specific intrinsic-state distribution. In this paper we study the maximum and the minimum capacities of
a channel when the realization of its intrinsic state is causally available at the encoder and/or the decoder.
Several conclusive results are obtained for binary-input channels and binary-output channels. Byproducts of our
investigation include a generalization of  the Birkhoff-von Neumann theorem and a condition on the uselessness of causal state
information at the encoder.
\end{abstract}

\begin{keywords}
Birkhoff-von Neumann theorem, channel capacity, deterministic channel, state information.
\end{keywords}

\section{Introduction}

A discrete channel is commonly viewed as a black box with the input-output relation characterized by a stochastic matrix.
In practice, it is often possible to obtain some additional information (known as the state information) by probing the channel. The knowledge
of the state information might be useful in increasing the channel capacity. Note that, given each state, the channel can again be viewed as
a black box and can potentially be further probed. One may continue this process until the black box is fully opened, i.e., the channel becomes deterministic
given the acquired state information. This line of thought suggests that every channel has its own intrinsic state, which fully captures the randomness of the channel,
and any state information acquired via channel probing is a degenerate version of this intrinsic state. As such, the intrinsic capacity, defined as the capacity of a channel
when its intrinsic state is revealed, determines the ultimate capacity gain one can hope for by probing the channel.

It turns out that the intrinsic capacity of a channel is not necessarily uniquely defined. Consider a binary symmetric channel with crossover probability
$0.5$:
\(
W
= (W_{x,y})_{x\in\{0,1\},y\in\{0,1\}}
= (\begin{smallmatrix}
0.5 &0.5\\
0.5 &0.5
\end{smallmatrix}),
\)
where each entry $W_{x,y}$ denoting the conditional probability
$W(y\mid x)$ of output $y$ given input $x$.
The capacity of $W$ is clearly zero.
For this channel, we consider the following two models:
\[
F(x) = x\oplus N
\mbox{ and }
G(x) = N,
\]
where $\oplus$ denotes the modulo-$2$ addition and $N$ is
uniformly distributed over $\{0,1\}$.
It is easy to verify that they both have the conditional probability
distribution $W$.
If the actual model of $W$ is $F$, then for every realization of $N$,
$W$ becomes a deterministic perfect channel,
\(
(\begin{smallmatrix} 1 &0\\ 0 &1\end{smallmatrix})
\)
or
\(
(\begin{smallmatrix} 0 &1\\ 1 &0\end{smallmatrix}),
\)
so that the capacity of $W$ with  $N$ available at
the encoder and/or the decoder increases to one.
On the other hand, if the actual model of $W$ is $G$, then for
every realization of $N$, $W$ becomes a deterministic useless channel,
\(
(\begin{smallmatrix} 1 &0\\ 1 &0\end{smallmatrix})
\)
or
\(
(\begin{smallmatrix} 0 &1\\ 0 &1\end{smallmatrix}),
\)
and hence, even with $N$ known at both sides, the capacity of $W$ is
still zero. In fact, it will be seen that, for every number $r\in[0,1]$, one can find a
model for $W$ such that the resulting intrinsic capacity is $r$.

This example indicates that a channel may admit different decompositions into deterministic channels. All these decompositions are mathematically legitimate though the actual
way the deterministic channels are mixed to produce the given channel depends on the underlying physical mechanism. In this work we study the minimum and the maximum intrinsic capacities of a channel over all admissible decompositions. They will be referred to as the lower intrinsic capacity and the upper
intrinsic capacity. For the aforementioned channel $W$, its lower and upper intrinsic capacities are 0 and 1, respectively.
Since the causal state information may be available at the encoder,
the decoder, or both, there are totally three different notions of lower and upper
intrinsic capacities of a channel $W$, denoted by $\lowerIC_f(W)$ and
$\upperIC_f(W)$, for $f=10, 01, 11$, where the two bits indicate if
the state information is available at the encoder and the decoder,
respectively.

The main contributions of this work are:

1)
We study the structure of the convex polytope $\decompose(W)$, which
consists of all convex combinations of deterministic channels for
channel $W$, with a particular focus on its vertices.
It is shown that $\upperIC_f(W)$ for all $f\in\{10,01,11\}$ and $\lowerIC_{11}(W)$ are
attained at certain vertices of $\decompose(W)$
(Theorem~\ref{convexDec}).

2)
We prove a generalization of the Birkhoff-von Neumann theorem for a
family $\channel[a,b]$ of channel matrices with integer-valued
column-sum vector constraints $a$ and $b$ from below and above,
respectively (Theorem~\ref{birkhoff}).
It is shown that $\channel[a,b]$ is convex and its vertices are
exactly all deterministic channels in $\channel[a,b]$.
Using this fundamental result, we determine the exact values of
$\lowerIC_{11}(W)$ and $\upperIC_{11}(W)$ when the input or the output
is binary.
General lower and upper bounds are further provided for the nonbinary
cases (Theorems~\ref{lowerICBound} and \ref{upperICBound}), and in
some cases, the exact value of $\upperIC_{11}(W)$ is also determined.

3)
We obtain the exact values of
$\lowerIC_{10}(W)$ and $\upperIC_{10}(W)$ when $W$ is a binary-output channel (Theorem~\ref{IC10}),
and obtain the exact values of
$\lowerIC_{01}(W)$ and $\upperIC_{01}(W)$ (Proposition~\ref{IC01}) when $W$ is a binary-input channel.
An interesting phenomenon observed is that
$\lowerIC_{10}(W)=\capacity(W)$ for binary-output $W$, where
$\capacity(W)$ denotes the capacity of $W$.
In other words, every binary-output channel can be generated through a certain mechanism such that the capacity remains the same if the source of randomness is
causally revealed to the encoder.
We further prove that the causal state information at the encoder is useless for a broad class of channels
(Theorem~\ref{stateInfoUseless}).
Finally, by providing some counterexamples, we show that the results
such as
$\lowerIC_{10}(W)=\capacity(W)$ and
$\upperIC_{01}(W)=\upperIC_{11}(W)$ are specific to binary-input or binary-output channels, and do not hold in general
(Example~\ref{counterExampleOfIC10} and
Proposition~\ref{counterExampleOfIC01}).

The rest of this paper is organized as follows.
Section~\ref{notations} lists some common notations used
throughout this paper.
Section~\ref{problem} provides the definitions of various notions of (lower/upper) intrinsic capacity and a summary of the main results of this paper.
The proofs and some other relevant findings are
presented in Section~\ref{general} and the appendices.

\section{Notations}\label{notations}

Although most notations will be defined at their first occurrences,
some common ones are listed here for easy
reference.

\begin{description}%
[align=right,parsep=0ex,
 labelwidth=4em,labelsep=0.5em,leftmargin=4.5em]

\item[$\fromTo{x}{y}$]
The set of integers in the interval $[x,y]$.

\item[$B^A$]
The set of all maps $\map{f}{A}{B}$, or equivalently, the set of all
indexed families $x=(x_i\in B)_{i\in A}$ (a generalized form of
sequences).
If $A=\fromTo{1}{n}$, then $B^A$ degenerates to the Cartesian product
$B^n$.
In this paper, a vector (for example, in $\real^n$) will be regarded
as a row vector, and an all-$c$ vector is usually denoted by $\vt{c}$.

\item[$x\wedge y$]
The minimum of $x$ and $y$.

\item[$x\vee y$]
The maximum of $x$ and $y$.

\item[$\support(x)$]
The support set $\set{i\in I}{x_i\ne 0}$ of $x=(x_i)_{i\in I}$.

\item[$\weight(x)$]
The weight $\size{\support(x)}$ of $x=(x_i)_{i\in I}$.

\item[$\floor{x}$]
The largest integer $\le x$.
If the argument is a sequence $x=(x_i)_{i\in I}$, then
$\floor{x}\eqdef (\floor{x_i})_{i\in I}$.
The same convention also applies to other functions such as $|x|$,
$\ceil{x}$, $(x)_+$, and $(x)_-$.

\item[$\ceil{x}$]
The smallest integer $\ge x$.

\item[$(x)_+$]
$x\vee 0$.

\item[$(x)_-$]
$x\wedge 0$.

\item[$\log x$]
$\log_2 x$.
\end{description}

\section{Definitions and Main Results}\label{problem}

Let $\mathcal{X}$ and $\mathcal{Y}$ be two finite sets.
A channel $\map{W}{\mathcal{X}}{\mathcal{Y}}$ is a stochastic matrix
with each entry $W_{x,y}$, or conventionally, $W(y\mid x)$ denoting
the probability of output $y\in\mathcal{Y}$ given input
$x\in\mathcal{X}$.
A deterministic channel $\map{D}{\mathcal{X}}{\mathcal{Y}}$ is a
special channel whose stochastic matrix is a zero-one matrix, as such
it uniquely identifies a map of $\mathcal{X}$ into $\mathcal{Y}$.
In the sequel, deterministic channels and maps will be regarded as
equivalent objects and denoted using the same notation.

It is clear that the set of all channels forms a convex polytope in
$\real^{\mathcal{X}\times\mathcal{Y}}$.
We denote this polytope by $\channel_{\mathcal{X},\mathcal{Y}}$, or
succinctly, $\channel$.
The deterministic channels are exactly the vertices of $\channel$, and every channel can be expressed as a convex combination of
them. This simple observation suggests that, for any channel, one can define a random state variable (referred to as the intrinsic state) given which the channel becomes deterministic.
We are interested in characterizing the capacity of a channel when its intrinsic state is available at the encoder and/or the decoder. Such capacity results are of fundamental importance
since they delineate the potential gain that can be achieved by probing the channel.

For a given channel, there are often multiple ways to write it as a convex combination of deterministic channels; as a consequence, the distribution of its intrinsic state is in general not uniquely defined.
Let $\deterministic_{\mathcal{X},\mathcal{Y}}$ (or simply $\deterministic$) denote the set of all deterministic channels $\mathcal{X}\to\mathcal{Y}$.
Then the set of all possible convex decompositions of a channel $W$ is given by
\[
\decompose(W)
\eqdef \set{\lambda\in\probability_\deterministic}%
 {W=\sum_{D\in\deterministic} \lambda_D D},
\]
where $\probability_\deterministic$ is the set of all probability
distributions over $\deterministic$ and can be regarded as the set
$\channel_{\{\emptyset\},\deterministic}$ of matrices or vectors.
For each intrinsic-state distribution $\lambda\in\probability_\deterministic$, we define the resulting capacities when the intrinsic state is causally available at the encoder, the
decoder, or both, by
\begin{eqnarray*}
\capacity_{10}(\lambda)
&\eqdef &\max_{\mu\in\probability_{\mathcal{X}^{\deterministic}}}
 J_{10}(\lambda,\mu)\\
&= &\max_{\mu\in\probability_{\mathcal{X}^{\deterministic}}}
 I{\left(\mu, \left(\sum_{D}
 \lambda_D D_{u(D),y}\right)_{u\in\mathcal{X}^{\deterministic},
 y\in\mathcal{Y}}\right)},\\
\capacity_{01}(\lambda)
&\eqdef &\max_{\mu\in\probability_\mathcal{X}} J_{01}(\lambda,\mu)
= \max_{\mu\in\probability_\mathcal{X}}
 \sum_{D} \lambda_D I(\mu,D),\\
\capacity_{11}(\lambda)
&\eqdef &\max_{\kappa\in\channel_{\deterministic,\mathcal{X}}}
 \sum_{D} \lambda_D I(\kappa_{D,*},D)
= \sum_{D} \lambda_D \log\rank(D),
\end{eqnarray*}
respectively (see \cite[Chapter~7]{el_gamal_network_2011}), where
\[
I(\mu,W)
\eqdef \sum_{x} \mu_x \divergence{W_{x,*}}{\mu W}
\]
and the flag $f\in\{10,01,11\}$ indicates the availability of the intrinsic state at
the encoder and the decoder.
For example, $10$ means that the intrinsic state is available at the
encoder but not at the decoder.
For completeness, we also define the capacity with no encoder and decoder side
information:
\[
\capacity(W)
= \capacity_{00}(\lambda)
\eqdef \max_{\mu\in\probability_\mathcal{X}} I{\left(\mu,
 \sum_{D\in\deterministic} \lambda_D D\right)}
= \max_{\mu\in\probability_\mathcal{X}} I(\mu,W).
\]
Then, given a channel $W$, we can define its intrinsic-capacity set
by
\[
\intrinsic_f(W)
\eqdef \set{\capacity_f(\lambda)}{\lambda\in\decompose(W)}.
\]
Furthermore, we define the lower intrinsic capacity and the upper
intrinsic capacity of $W$ for $f\in\{10,01,11\}$  by
\[
\lowerIC_f(W)
\eqdef \inf \intrinsic_f(W)
\]
and
\[
\upperIC_f(W)
\eqdef \sup \intrinsic_f(W),
\]
respectively.

\begin{remark}
Using the functional representation lemma
\textnormal{\cite[p.~626]{el_gamal_network_2011}\cite[Lemma~1]{wczcp11}}, it can be easily
shown  that $\upperIC_f(W)$ provides an upper
bound on the capacity of $W$ with any form of state information whose availability at
the encoder and the decoder is specified by $f$.
On the other hand, from the minimax theorem
\textnormal{\cite{nikaido_von_1954}},
Proposition~\ref{propertyOfJ}, and
\textnormal{\cite[Theorems~7.1 and 7.2, Eqs.~(7.2) and (7.3), and
 Remark~7.6]{el_gamal_network_2011}},
it follows that $\lowerIC_f(W)$ is exactly the capacity of the compound channel
$(S)_{p_S\in\decompose(W)}$ with the availability of $S$ at the encoder and
the decoder specified by $f$, where $S$ is $\deterministic$-valued, i.e., a random
deterministic channel, and $p_S$ is selected arbitrarily from
$\decompose(W)$.
\end{remark}

The main results of this paper are given as follows.
With no loss of generality, we assume from now on that the channel $W$
is from $\fromTo{1}{m}$ to $\fromTo{1}{n}$, where $m,n\ge 2$.

\begin{definition}\label{rankP}
Let
\[
\rankProbability_\lambda(r)
\eqdef \lambda\set{D\in\deterministic}{\rank(D)=r}
\]
be the rank probability function over $\deterministic$ induced by
$\lambda\in\decompose(W)$.
The lower and the upper rank-$r$ probabilities of $W$ are then defined
by
\[
\lowerRP_W(r)
\eqdef \min_{\lambda\in\decompose(W)}\rankProbability_\lambda(r)
\ \mbox{and}\
\upperRP_W(r)
\eqdef \max_{\lambda\in\decompose(W)}\rankProbability_\lambda(r),
\]
respectively.
\end{definition}

Bounds for $\lowerRP_W(r)$ and $\upperRP_W(r)$ when $r=1$ and
$r=m\wedge n$ are given by Propositions~\ref{useless} and
\ref{perfect}, respectively.
Most of our results will be expressed in terms of these quantities.

\begin{theorem}\label{lowerICBound}
\[
\lowerIC_{11}(W)
\le \begin{cases}
(1-\upperRP_W(1))\log\gamma
 &\text{$\upperRP_W(1)<1$,}\\
0 &\text{otherwise,}
\end{cases}
\]
\[
\lowerIC_{11}(W)\ge 1-\upperRP_W(1),
\]
where
\[
\upperRP_W(1)
= \sum_{j=1}^n \alpha_j,
\
\alpha=\left(\min_{1\le i\le m}W_{i,j}\right)_{j\in\fromTo{1}{n}},
\]
\[
\gamma = (m+\weight(a)-a\transpose{\vt{1}})\wedge n,
\]
\[
a=\floor{\vt{1}W'},
\
W'=\frac{W-\sum_{j=1}^n\alpha_j\useless_j}{1-\upperRP_W(1)},
\]
and $\useless_j$ is the deterministic useless channel with its $j$-th
column being all one.

If $m=2$ or $n=2$, then $\lowerIC_{11}(W)=1-\upperRP_W(1)$.
\end{theorem}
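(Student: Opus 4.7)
The plan is to split $\capacity_{11}(\lambda)=\sum_D\lambda_D\log\rank(D)$ according to rank and to use Theorem~\ref{birkhoff} to control the higher-rank part. I first establish the formula for $\upperRP_W(1)$. The rank-$1$ deterministic channels from $\fromTo{1}{m}$ to $\fromTo{1}{n}$ are precisely the useless channels $\useless_j$, so if $\lambda\in\decompose(W)$ puts mass $\beta_j$ on $\useless_j$, then comparing entry $(i,j)$ of $W=\sum_D\lambda_D D$ forces $\beta_j\le W_{i,j}$ for every $i$, hence $\beta_j\le\alpha_j$. Summing gives $\rankProbability_\lambda(1)\le\sum_j\alpha_j$, with equality attained by setting $\beta_j=\alpha_j$ and decomposing the normalized residual $W'$ arbitrarily into deterministic channels (when $\sum_j\alpha_j<1$).

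For the lower bound, since $\log\rank(D)\ge 1$ whenever $\rank(D)\ge 2$ and $\log 1=0$,
\[
\capacity_{11}(\lambda)\ge\sum_{D\colon\rank(D)\ge 2}\lambda_D=1-\rankProbability_\lambda(1)\ge 1-\upperRP_W(1),
\]
and taking the infimum gives $\lowerIC_{11}(W)\ge 1-\upperRP_W(1)$.

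For the upper bound, if $\upperRP_W(1)=1$ then $W$ is already a convex combination of useless channels and $\lowerIC_{11}(W)=0$. Otherwise, I apply Theorem~\ref{birkhoff} to $W'\in\channel[a,\ceil{\vt{1}W'}]$ to obtain a decomposition $W'=\sum_k\mu_k D_k$ whose deterministic summands have column-sum vectors $s$ lying componentwise in $[a,\ceil{\vt{1}W'}]$. The rank of each $D_k$ equals its number of nonzero columns, which is at most $\weight(a)$ (the columns with $a_j\ge 1$) plus the number of columns with $a_j=0$ still receiving mass; since $\sum_j s_j=m$ and each such column consumes at least one unit of the excess $m-a\transpose{\vt{1}}$, that count is at most $m-a\transpose{\vt{1}}$, and it is also bounded by $n$. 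Hence $\rank(D_k)\le\gamma$. Taking $\lambda_{\useless_j}=\alpha_j$ and $\lambda_{D_k}=(1-\upperRP_W(1))\mu_k$ produces a valid element of $\decompose(W)$ with $\capacity_{11}(\lambda)\le(1-\upperRP_W(1))\log\gamma$.

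For the binary case $m=2$ or $n=2$, every deterministic channel has rank at most $m\wedge n\le 2$, so $\log\rank(D)\in\{0,1\}$ and $\capacity_{11}(\lambda)=1-\rankProbability_\lambda(1)$ identically; combined with the first paragraph this gives $\lowerIC_{11}(W)=1-\upperRP_W(1)$. The main technical obstacle is the clean invocation of Theorem~\ref{birkhoff} and the combinatorial count that converts its column-sum constraints into the sharp rank bound $\gamma$; the rest is careful bookkeeping of the two-layer convex combination.
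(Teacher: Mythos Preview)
Your proposal is correct and follows essentially the same approach as the paper: extract the useless-channel part to reach $W'$, invoke Theorem~\ref{birkhoff} on the residual, and bound the rank of the resulting deterministic summands by the combinatorial quantity $\gamma$. The paper applies Theorem~\ref{birkhoff} with the cruder upper constraint $\vt{m}$ rather than your $\ceil{\vt{1}W'}$, and simply asserts that the maximum rank in $\deterministic[a,\vt{m}]$ equals $\gamma$, whereas you spell out the counting argument explicitly; the two arguments are otherwise identical.
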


\begin{theorem}\label{upperICBound}
If $\lowerRP_W(1)>0$ or $m=2$ or $n=2$, then
\[
\upperIC_{11}(W)=1-\lowerRP_W(1);
\]
otherwise,
\[
\log\gamma
\le \upperIC_{11}(W)
\le \log(o-1)+\upperRP_W(o)\log\frac{o}{o-1},
\]
where
\[
\lowerRP_W(1)
= \left(g - m + 1\right)_+,
\
g = \max_{1\le j\le n} (\vt{1}W)_j,
\]
\[
\gamma
= \weight(a)+\left(m-\sum_{j\in\support(a)}b_j\right)_+,
\
o = m\wedge n,
\]
\[
a=\floor{\vt{1}W},
\
b=\ceil{\vt{1}W}.
\]

If $m\le n$ and $\vt{1}W\le \vt{1}$, then $\upperIC_{11}(W)=\log m$.

If $m\ge n$ and $\vt{1}W\ge \vt{1}$, then $\upperIC_{11}(W)=\log n$.
\end{theorem}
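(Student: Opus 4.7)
The plan is to establish matching upper and lower bounds for $\upperIC_{11}(W) = \sup_{\lambda \in \decompose(W)} \sum_D \lambda_D \log \rank(D)$, then specialize to obtain the equality $\upperIC_{11}(W) = 1 - \lowerRP_W(1)$ in the three listed cases and derive the two exact-value corollaries.

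For the upper bounds, I would proceed as follows. The general bound $\log(o-1) + \upperRP_W(o) \log\frac{o}{o-1}$ follows by splitting $\sum_D \lambda_D \log \rank(D)$ into the rank-$o$ contribution and the rest, bounding the latter by $\log(o-1)$ termwise, and observing that the resulting expression is linear and increasing in $\rankProbability_\lambda(o)$, hence maximized at $\upperRP_W(o)$. For the sharper upper bound $1 - \lowerRP_W(1)$ when $\lowerRP_W(1) > 0$, I would pick $j^*$ achieving $(\vt{1}W)_{j^*} = g > m-1$, note that every deterministic $D$ satisfies $\rank(D) \le m - (\vt{1}D)_{j^*} + 1$ (the $m - (\vt{1}D)_{j^*}$ inputs not mapped to $j^*$ occupy at most that many other columns), and apply the pointwise inequality $\log x \le x - 1$ to get $\log \rank(D) \le m - (\vt{1}D)_{j^*}$; summing against $\lambda_D$ and using $\sum_D \lambda_D (\vt{1}D)_{j^*} = g$ yields $\sum_D \lambda_D \log \rank(D) \le m - g = 1 - \lowerRP_W(1)$. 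In the binary cases $m = 2$ or $n = 2$, $o = 2$ forces $\log \rank(D) \in \{0, 1\}$, so $\sum_D \lambda_D \log \rank(D) = 1 - \rankProbability_\lambda(1) \le 1 - \lowerRP_W(1)$ is immediate.

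For the lower bound $\log \gamma$, I would apply Theorem~\ref{birkhoff} with $a = \floor{\vt{1}W}$ and $b = \ceil{\vt{1}W}$, so that $W \in \channel[a,b]$ is a convex combination of deterministic channels $D \in \channel[a,b]$. Each such $D$ has column-sum vector $c$ with $a \le c \le b$; since $b_j = \ceil{(\vt{1}W)_j} \le 1$ for every $j \notin \support(a)$, the support of $c$ contains $\support(a)$ plus at least $(m - \sum_{j \in \support(a)} b_j)_+$ additional columns (each of sum $1$), giving $\rank(D) = \weight(c) \ge \gamma$ and hence $\upperIC_{11}(W) \ge \log \gamma$. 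The two exact-value corollaries fall out of the same decomposition: when $m \le n$ and $\vt{1}W \le \vt{1}$, every $c_j \in \{0, 1\}$ with $\sum_j c_j = m$, forcing $\rank(D) = m$; dually, when $m \ge n$ and $\vt{1}W \ge \vt{1}$, every $a_j \ge 1$ forces $\rank(D) = n$.

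For the matching lower bound $\upperIC_{11}(W) \ge 1 - \lowerRP_W(1)$ in the equality cases, when $\lowerRP_W(1) > 0$ I would write $W = \lowerRP_W(1)\,\useless_{j^*} + (1-\lowerRP_W(1))\,W'$, verify that $(\vt{1}W')_{j^*} = m - 1$, and decompose $W' = \sum_{i,\,j \ne j^*} W'_{i,j}\, D_{(i,j)}$, where $D_{(i,j)}$ sends input $i$ to column $j$ and every other input to $j^*$ (a rank-$2$ channel); this yields capacity sum $(1 - \lowerRP_W(1)) \cdot 1$. When $\lowerRP_W(1) = 0$ and $m = 2$ or $n = 2$, the bounds $(a, b)$ force every $D \in \channel[a,b]$ to have rank exactly $o = 2$ by the same column-sum argument used for $\gamma$, so Theorem~\ref{birkhoff} itself furnishes a purely rank-$2$ decomposition and the capacity sum equals $1$. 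The subtlest step in the whole proof is the tight upper bound $m - g$ under $\lowerRP_W(1) > 0$: a direct Jensen's inequality applied to the concave function $\log(m - c + 1)$ would yield only the strictly looser $\log(m - g + 1)$, whereas the pointwise linearization $\log x \le x - 1$ is precisely what lets the column-sum constraint cancel cleanly and produces the sharp bound.
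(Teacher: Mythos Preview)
Your proof is correct and the overall architecture matches the paper's, but your treatment of the case $\lowerRP_W(1)>0$ differs in an interesting way. The paper obtains both directions of $\upperIC_{11}(W)=1-\lowerRP_W(1)$ via the structural Proposition~\ref{upperICStragtegy} (built from Propositions~\ref{characterizationOfIC}, \ref{necessityOfIC}, and \ref{max.pattern.1}): any optimal $\lambda$ that puts positive mass on a useless channel must put all remaining mass on rank-$2$ channels, and this both forces $\rankProbability_\lambda(1)=\lowerRP_W(1)$ and yields the value directly. You bypass this apparatus entirely. For the upper bound you use the pointwise chain $\log\rank(D)\le\rank(D)-1\le m-(\vt{1}D)_{j^*}$ and then average against $\lambda$ to recover $m-g$; for the lower bound you exhibit an explicit decomposition $W=\lowerRP_W(1)\,\useless_{j^*}+(1-\lowerRP_W(1))\sum_{i,\,j\ne j^*}W'_{i,j}D_{(i,j)}$ into useless plus rank-$2$ channels. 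Your route is more self-contained and elementary for this theorem, while the paper's structural propositions give qualitative information about the optimizer (e.g.\ the \emph{only} useless channel appearing is $\useless_{j^*}$) that your inequality does not capture. One small caution: the inequality $\log_2 x\le x-1$ you invoke is false for non-integer $x\in(1,2)$; it holds, with equality at $x=1,2$, precisely because $\rank(D)$ is a positive integer, and you should say so.
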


\begin{theorem}\label{IC10}
If $n=2$, then
\[
\lowerIC_{10}(W)=\capacity(W)
\]
and
\[
\upperIC_{10}(W)=\capacity\left(\begin{pmatrix}
 1 &0\\
 \lowerRP_W(1) &1-\lowerRP_W(1)
 \end{pmatrix}\right).
\]
\end{theorem}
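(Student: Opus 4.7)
The plan is to route both claims through the Shannon-strategy formulation of $\capacity_{10}(\lambda)$ together with the observation that a binary-output channel's capacity depends only on its two extreme rows.

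First I would establish a \emph{reduction lemma}: for any binary-output channel $V$, $\capacity(V)$ equals the capacity of the binary-input sub-channel restricted to the two inputs achieving $\min_i V_{i,1}$ and $\max_i V_{i,1}$. Writing $I(\mu,V) = h(\mu V) - \sum_i \mu_i h(V_{i,1})$, strict concavity of the binary entropy $h$ forces any capacity-achieving $\mu$ to concentrate on the extreme rows. I would then apply this to the effective channel produced by a Shannon strategy $\sigma \in \mathcal{X}^\deterministic$ under an arbitrary $\lambda \in \decompose(W)$: writing $S_D = D^{-1}(1)$, the induced output probability is
\[
 p_\sigma
 = \lambda_{\useless_1}
 + \sum_{\substack{D:\,\rank(D)=2\\ \sigma(D)\in S_D}} \lambda_D.
\]
Since every rank-$2$ $D$ satisfies $\emptyset \subsetneq S_D \subsetneq \fromTo{1}{m}$, strategies realizing $p_\sigma = \lambda_{\useless_1}$ and $p_\sigma = 1 - \lambda_{\useless_2}$ both exist, so the reduction lemma gives
\[
 \capacity_{10}(\lambda)
 = \capacity{\left(\begin{pmatrix}
 \lambda_{\useless_1} & 1-\lambda_{\useless_1}\\
 1-\lambda_{\useless_2} & \lambda_{\useless_2}
 \end{pmatrix}\right)}.
\]

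For the lower intrinsic capacity I would exhibit a specific optimizer. Reorder inputs so that $p_1\ge\cdots\ge p_m$ with $p_i = W_{i,1}$, and take the \emph{monotone coupling} $\lambda^*$ supported on the nested family $\{D_k\}_{k=0}^m$ with $D_k^{-1}(1) = \fromTo{1}{k}$, assigning $\lambda^*_{D_0} = 1-p_1$, $\lambda^*_{D_k} = p_k - p_{k+1}$ for $1 \le k \le m-1$, and $\lambda^*_{D_m} = p_m$. A routine check confirms $\lambda^* \in \decompose(W)$ with useless weights $\lambda^*_{\useless_1} = p_m$ and $\lambda^*_{\useless_2} = 1 - p_1$. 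The reduction above then produces a binary-input channel whose rows $(p_m,1-p_m)$ and $(p_1,1-p_1)$ coincide with the extreme rows of $W$; applying the reduction lemma once more to $W$ itself collapses both capacities to the same two-row capacity, giving $\capacity_{10}(\lambda^*) = \capacity(W)$. Combined with the routine converse $\capacity_{10}(\lambda) \ge \capacity(W)$ (restrict to constant strategies $\sigma \equiv x$), this proves $\lowerIC_{10}(W) = \capacity(W)$.

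For the upper intrinsic capacity I would maximize the reduction target over feasible pairs $(a,b) = (\lambda_{\useless_1}, \lambda_{\useless_2})$. Feasibility amounts to asking when $W - a\useless_1 - b\useless_2$ admits, after normalization, a decomposition into rank-$2$ deterministic channels: nonnegativity forces $a \le \min_i p_i$ and $b \le \min_i(1-p_i)$, while Theorem~\ref{birkhoff} applied to the normalized residual provides the column-sum inequalities $a - (m-1)b \ge s-m+1$ and $b - (m-1)a \ge 1-s$, where $s = \sum_i p_i$. These together force $a+b \ge \lowerRP_W(1)$, with equality realized only at the vertex $(\lowerRP_W(1),0)$ (or its column-swap partner $(0,\lowerRP_W(1))$). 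To conclude I would maximize the binary-input capacity on this polygon by a two-step argument: on each slice $a+b=t$, a Z-channel versus BSC-type comparison shows the maximum is attained at the extreme corner; and decreasing $t$ toward $\lowerRP_W(1)$ strictly enlarges the $L^1$ distance $|1-a-b|$ between the two rows, strictly increasing capacity. This locates the global maximum at $(\lowerRP_W(1),0)$, giving the claimed formula.

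The main obstacle is this last optimization step — certifying that $\capacity(V(a,b))$ is globally maximized at the stated vertex rather than at some interior point. The slice-wise inequality (Z-channel dominates BSC at equal row distance) and the monotonicity of the Z-channel capacity in $t$ are not one-line consequences of a general principle; I expect to verify them either by a KKT analysis on the closed-form binary-input binary-output capacity, or by a stochastic-degradation argument showing that leaving the distinguished vertex produces a channel that is a degraded version of $V(\lowerRP_W(1),0)$.
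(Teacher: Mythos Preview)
Your approach is correct and for $\lowerIC_{10}(W)$ is essentially identical to the paper's: both reduce $\capacity_{10}(\lambda)$ to the capacity of the $2\times 2$ channel with rows $(1-\lambda_{\useless_2},\lambda_{\useless_2})$ and $(\lambda_{\useless_1},1-\lambda_{\useless_1})$, exhibit a decomposition with $(\lambda_{\useless_1},\lambda_{\useless_2})=(\alpha_1,\alpha_2)$ whose two rows coincide with the extreme rows of $W$, and invoke the trivial converse $\capacity_{10}(\lambda)\ge\capacity(W)$. Your monotone coupling is just an explicit instance of the decomposition the paper obtains from Proposition~\ref{useless}.

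For $\upperIC_{10}(W)$ you are working harder than necessary, and the ``main obstacle'' you flag dissolves with one observation you already list as an option. The capacity of the $2\times 2$ channel with rows $(1-b,b)$ and $(a,1-a)$ (where $a+b\le 1$) is \emph{separately} monotone decreasing in $a$ and in $b$: increasing $a$ to $a'\in[a,1-b]$ replaces the second row by a convex combination of the two original rows, which is an input pre-processing and hence cannot increase capacity; likewise for $b$. So the maximum over any feasible region is attained at its componentwise infimum, provided that point is feasible. Here the componentwise lower bounds are immediate: writing $g=(\vt{1}W)_1$ (WLOG the larger column sum), the column-sum identity $g=m\lambda_{\useless_1}+\sum_{\rank D=2}\lambda_D(\vt{1}D)_1\le m\lambda_{\useless_1}+(m-1)(1-\lambda_{\useless_1}-\lambda_{\useless_2})$ forces $\lambda_{\useless_1}\ge (g-m+1)_+$ individually (not merely the sum), while $\lambda_{\useless_2}\ge 0$ trivially. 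Proposition~\ref{useless} (via Theorem~\ref{birkhoff}) shows the point $((g-m+1)_+,0)$ is achieved by some $\lambda\in\decompose(W)$, and you are done. The paper's proof is terse to the point of suppressing this monotonicity step entirely, but this is the intended argument; your slice-by-$a+b$ analysis and the Z-versus-BSC comparison are not needed.
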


\begin{proposition}\label{IC01}
If $m=2$, then for every $\lambda\in\decompose(W)$,
$\capacity_{01}(\lambda)=\capacity_{11}(\lambda)$, so that
$\lowerIC_{01}(W)=1-\upperRP_W(1)$ and
$\upperIC_{01}(W)=1-\lowerRP_W(1)$.
\end{proposition}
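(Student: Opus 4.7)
My approach will exploit the fact that, with $m=2$, deterministic channels have only two possible ranks and $I(\mu,D)$ takes a particularly simple form in each case. The plan is: (i) evaluate $I(\mu,D)$ for every $\mu$ and every deterministic $D$; (ii) obtain a closed-form expression for $\capacity_{01}(\lambda)$ by a single maximization over $\mu$; (iii) compare it with $\capacity_{11}(\lambda)$; and (iv) take infimum and supremum over $\decompose(W)$.

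For step (i), when $m=2$ any $D\in\deterministic$ satisfies $\rank(D)\in\{1,2\}$. If $\rank(D)=1$, then $D(1)=D(2)$, so $D$ is a useless channel and $I(\mu,D)=0$ for every $\mu$. If $\rank(D)=2$, then $D$ is an injection of $\fromTo{1}{2}$ into $\fromTo{1}{n}$, the output $Y=D(X)$ is a bijective function of $X$, and hence $I(\mu,D)=H(\mu)$ for every $\mu$. Equivalently, $I(\mu,D)=H(\mu)$ when $\rank(D)=2$ and $0$ otherwise.

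For step (ii), substituting this identity into the definition gives
\[
\capacity_{01}(\lambda)
=\max_{\mu\in\probability_\mathcal{X}}
 \sum_{D}\lambda_D I(\mu,D)
=\max_{\mu\in\probability_\mathcal{X}} H(\mu)\bigl(1-\rankProbability_\lambda(1)\bigr)
=1-\rankProbability_\lambda(1),
\]
attained at the uniform $\mu$, which does \emph{not} depend on $D$---this is the crucial simplification that only $m=2$ provides. For step (iii), $\capacity_{11}(\lambda)=\sum_{D}\lambda_D\log\rank(D)=(1-\rankProbability_\lambda(1))\log 2=1-\rankProbability_\lambda(1)$, matching $\capacity_{01}(\lambda)$ pointwise in $\lambda$.

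Step (iv) is then immediate: since $\capacity_{01}(\lambda)$ depends on $\lambda$ only through $\rankProbability_\lambda(1)$, Definition~\ref{rankP} yields $\lowerIC_{01}(W)=1-\upperRP_W(1)$ and $\upperIC_{01}(W)=1-\lowerRP_W(1)$ at once. I do not foresee any real obstacle; the entire argument rests on the observation that, for a binary-input channel, the capacity-achieving input distribution for every rank-$2$ deterministic constituent is uniform irrespective of which two outputs it hits, so adaptively choosing the input from the intrinsic state brings no gain beyond the obvious one of transmitting at rate $\log 2$ when $\rank(D)=2$ and at rate $0$ when $\rank(D)=1$. The reason the analogous statement fails for $n=2$ (Theorem~\ref{IC10}) is exactly that this coincidence of maximizers breaks down on the output side.
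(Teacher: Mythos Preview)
Your proof is correct and rests on the same key observation as the paper's: when $m=2$, the uniform input distribution is simultaneously capacity-achieving for every deterministic channel (rank~$1$ or rank~$2$), so the common maximizer in $\capacity_{01}$ yields $\capacity_{01}(\lambda)=\capacity_{11}(\lambda)$. The only difference is in the last step: the paper concludes by invoking Theorems~\ref{lowerICBound} and~\ref{upperICBound} (the $m=2$ case) to read off $\lowerIC_{11}$ and $\upperIC_{11}$, whereas you compute $\capacity_{01}(\lambda)=1-\rankProbability_\lambda(1)$ explicitly and appeal directly to Definition~\ref{rankP}. Your route is slightly more self-contained, but the argument is essentially the same.
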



The above results enable us to obtain explicit characterizations of all lower and upper intrinsic capacities for binary-input binary-output channels. The relevant expressions are collected in the following example.


\begin{example}
If $m=n=2$ and
\[
W
= \begin{pmatrix}
 1-\epsilon_1 &\epsilon_1\\
 \epsilon_2 &1-\epsilon_2
 \end{pmatrix},
\]
then
\[
\lowerIC_{11}(W)
= \lowerIC_{01}(W)
= |1-\epsilon_1-\epsilon_2|,
\]
\[
\lowerIC_{10}(W)=\capacity(W)=\begin{cases}
\log\left(
 2^{\frac{\epsilon_2 h(\epsilon_1) - (1-\epsilon_1) h(\epsilon_2)}
  {1-\epsilon_1-\epsilon_2}}
 + 2^{\frac{\epsilon_1 h(\epsilon_2) - (1-\epsilon_2) h(\epsilon_1)}
  {1-\epsilon_1-\epsilon_2}}
\right),  &$\epsilon_1+\epsilon_2\neq 1$,\\
0 &$\epsilon_1+\epsilon_2=1$,
\end{cases}
\]
\[
\upperIC_{11}(W)
= \upperIC_{01}(W)
= 1-|\epsilon_1-\epsilon_2|,
\]
\[
\upperIC_{10}(W)
=  \begin{cases}
1 &$\epsilon_1=\epsilon_2$,\\
\log\left(1+(1-|\epsilon_1-\epsilon_2|)|\epsilon_1-\epsilon_2|^{\frac{|\epsilon_1-\epsilon_2|}{1-|\epsilon_1-\epsilon_2|}}\right)
 &$|\epsilon_1-\epsilon_2|\in (0,1)$,\\
0, &$|\epsilon_1-\epsilon_2|=1$,
\end{cases}
\]
where
$h(\epsilon)\eqdef-\epsilon\log\epsilon-(1-\epsilon)\log(1-\epsilon)$
is the binary entropy function.
If $W$ is a binary symmetric
channel with crossover probability $\epsilon$ (i.e., $\epsilon_1=\epsilon_2=\epsilon$), then
\[
\lowerIC_{11}(W)
= \lowerIC_{01}(W)
= |1-2\epsilon|,
\]
\[
\lowerIC_{10}(W)
=C(W)
= 1-h(\epsilon),
\]
\[
\upperIC_{11}(W)
= \upperIC_{01}(W)
= \upperIC_{10}(W)
= 1.
\]
If $W$ is a Z-channel
with crossover probability $\theta$ (i.e., $\epsilon_1=0$ and $\epsilon_2=\theta$), then
\[
\lowerIC_{11}(W)
= \lowerIC_{01}(W)
= \upperIC_{11}(W)
= \upperIC_{01}(W)
= 1-\theta,
\]
\[
\lowerIC_{10}(W)
= \upperIC_{10}(W)
=C(W)
= \begin{cases}
1 &$\theta=0$,\\
\log\left(1+(1-\theta)\theta^{\frac{\theta}{1-\theta}}\right)
 &$\theta\in (0,1)$,\\
0, &$\theta=1$.
\end{cases}
\]
The case of Z-channel is special, because in this case $W$ admits a unique convex decomposition into deterministic channels:
\[
\theta\begin{pmatrix}1 &0\\ 1 &0\end{pmatrix}
+ (1-\theta)\begin{pmatrix}1 &0\\ 0 &1\end{pmatrix}.
\]
The lower and the upper intrinsic capacities of these two special channels are plotted in
Figs.~\ref{fig:cap.bsc} and \ref{fig:cap.zc}.
\begin{figure}[h!]
 \centering
 \includegraphics[width=10cm]{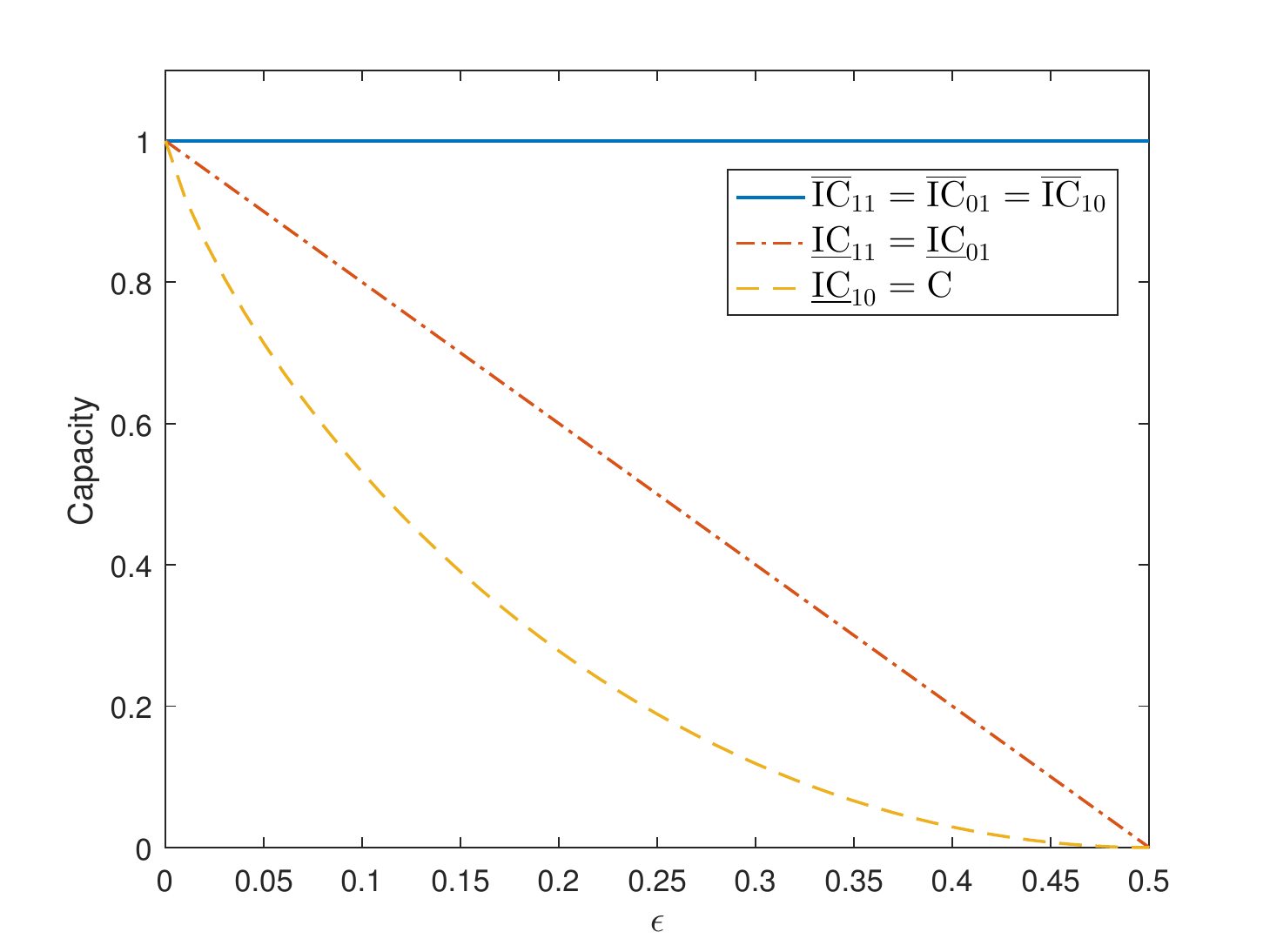}
 \caption{The lower and the upper intrinsic capacities of a binary symmetric channel with
  crossover probability $\epsilon$.}
 \label{fig:cap.bsc}
\end{figure}
\begin{figure}[h!]
 \centering
 \includegraphics[width=10cm]{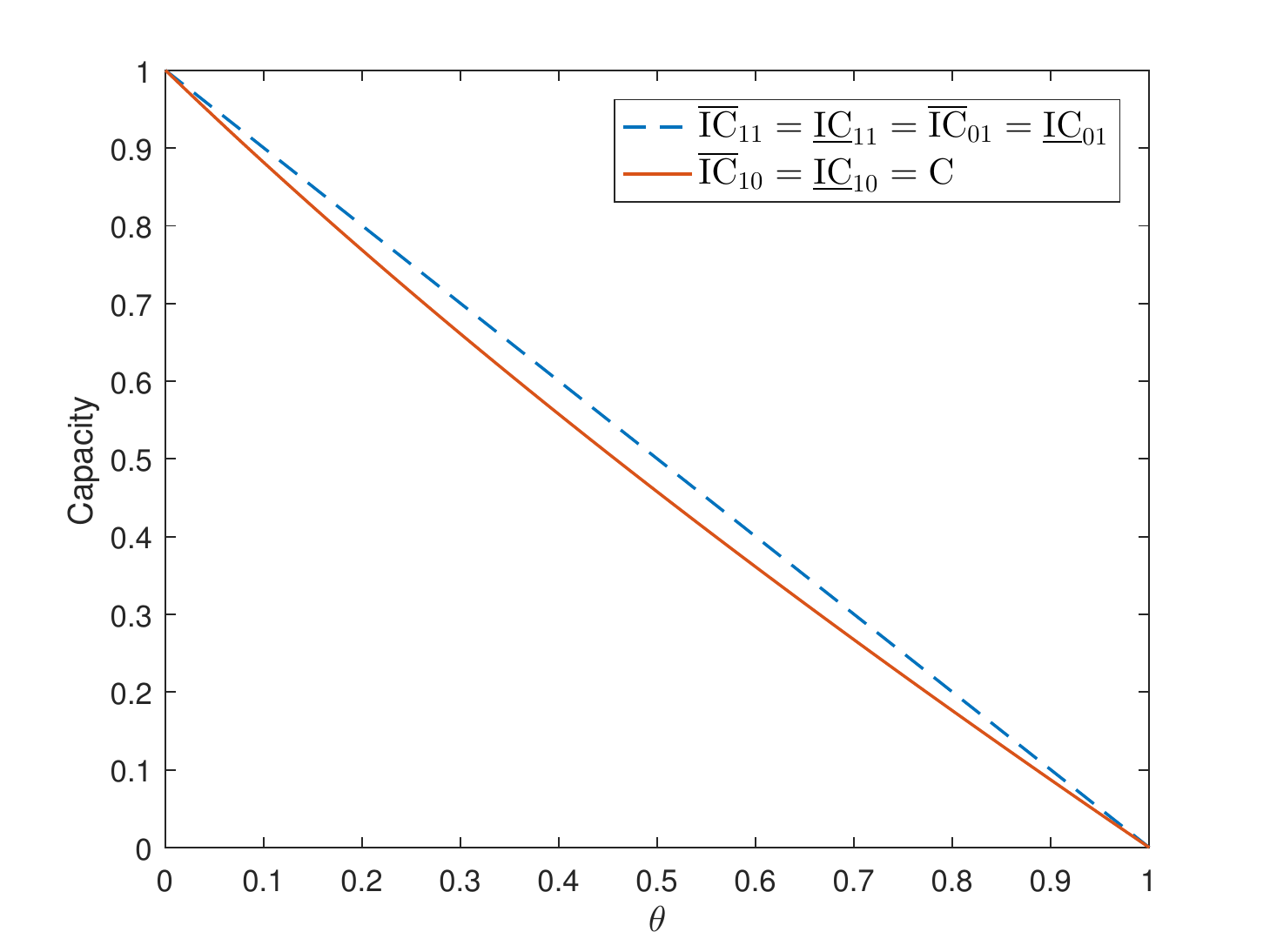}
 \caption{The lower and the upper intrinsic capacities of a Z-channel with crossover
  probability $\theta$.}
 \label{fig:cap.zc}
\end{figure}
\end{example}

\section{Proofs of Main Results}\label{general}

It is clear that $\decompose(W)$ is bounded, closed, and convex, so it
can be  easily shown that $\intrinsic_f(W)$ is a closed interval and
that $\upperIC_f(W)$ for all $f\in\{10,01,11\}$ and $\lowerIC_{11}(W)$ are attained at certain
vertices of $\decompose(W)$ (Theorem~\ref{convexDec}).
As such, it is of great importance to study the structure of $\decompose(W)$.
A series of results on the vertices of $\decompose(W)$ is provided in
Appendix~\ref{structureOfDec}. Although these results shed useful light on the structure of
$\decompose(W)$, the characterizations are still too coarse for our purpose. It will be seen that additional
insights can be gained by taking the objective functions into consideration.


\subsection{$\lowerIC_{11}(W)$ and $\upperIC_{11}(W)$}
\label{secIC11}

We first provide a complete characterization of $\lambda$ that
achieves $\lowerIC_{11}(W)$ or $\upperIC_{11}(W)$.

\begin{proposition}\label{characterizationOfIC}
Let
\[
\mathfrak{U}_+
= \set{\support((\alpha)_+)}{\alpha\in\real^\deterministic,
 \sum_{D\in\deterministic} \alpha_D D = 0,
 \sum_{D\in\deterministic} \alpha_D \log\rank(D) > 0}
\]
and
\[
\mathfrak{U}_-
= \set{\support((\alpha)_-)}{\alpha\in\real^\deterministic,
 \sum_{D\in\deterministic} \alpha_D D = 0,
 \sum_{D\in\deterministic} \alpha_D \log\rank(D) > 0}.
\]
For $\lambda\in\decompose(W)$, $\capacity_{11}(\lambda)=\lowerIC_{11}(W)$
iff there is no $U\in\mathfrak{U}_+$ such that
$U\subseteq\support(\lambda)$;
$\capacity_{11}(\lambda)=\upperIC_{11}(W)$ iff there is no
$U\in\mathfrak{U}_-$ such that $U\subseteq\support(\lambda)$.
\end{proposition}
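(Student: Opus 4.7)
Since $\rank(D)$ depends only on $D$, the map $\lambda\mapsto\capacity_{11}(\lambda)=\sum_{D\in\deterministic}\lambda_D\log\rank(D)$ is linear, and $\decompose(W)$ is the intersection of the probability simplex with the affine subspace $\{\lambda:\sum_D\lambda_DD=W\}$, hence a convex polytope. So $\lowerIC_{11}(W)$ and $\upperIC_{11}(W)$ are the optimal values of two linear programs, and the plan is to recognise the assertion as the standard first-order optimality condition for such an LP, restated in the language of the sets $\mathfrak{U}_{\pm}$.

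First I would fix $\lambda\in\decompose(W)$ and identify the cone of feasible directions at $\lambda$, namely the set of $\beta\in\real^\deterministic$ such that $\sum_D\beta_D D=0$ and $\beta_D\ge 0$ for every $D\notin\support(\lambda)$; these are exactly the $\beta$ for which $\lambda+t\beta\in\decompose(W)$ for all sufficiently small $t>0$. (The constraint $\sum_D\beta_D=0$ is automatic, because each row of $D$ sums to one, so $\sum_D\beta_D D=0$ forces $\sum_D\beta_D=0$.) By linearity of $\capacity_{11}$, $\lambda$ is a minimiser (resp.\ maximiser) iff no feasible direction $\beta$ satisfies $\sum_D\beta_D\log\rank(D)<0$ (resp.\ $>0$).

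Next, to match the given form, I would set $\alpha=-\beta$ in the minimisation case: the non-existence of an improving $\beta$ becomes the non-existence of $\alpha$ with $\sum_D\alpha_D D=0$, $\sum_D\alpha_D\log\rank(D)>0$, and $\alpha_D\le 0$ whenever $\lambda_D=0$. The last condition is exactly $\support((\alpha)_+)\subseteq\support(\lambda)$, so $\lambda$ is a minimiser iff no $U\in\mathfrak{U}_+$ satisfies $U\subseteq\support(\lambda)$. The maximisation case is symmetric without the sign flip, giving $\mathfrak{U}_-$. The only subtle point in the translation is that $\mathfrak{U}_\pm$ records only the supports, not the vectors $\alpha$ themselves; but this is harmless, because if some $\alpha$ certifies non-optimality and its support-set lies in $\support(\lambda)$, then the same $\alpha$ is a valid improving direction — the set $U$ does not need to be rebuilt from scratch.

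The passage from "no local improving direction" to "global optimum" requires a brief justification, and this is the one place I would be explicit: if $\lambda^\ast$ is any minimiser and $\lambda$ is not, then $\beta\eqdef\lambda^\ast-\lambda$ satisfies $\sum_D\beta_D D=W-W=0$, has $\beta_D\ge 0$ on $\deterministic\setminus\support(\lambda)$ since $\lambda^\ast_D\ge 0$, and gives $\sum_D\beta_D\log\rank(D)=\capacity_{11}(\lambda^\ast)-\capacity_{11}(\lambda)<0$, so $\beta$ is a feasible descent direction. Hence first-order optimality is both necessary and sufficient. I do not expect any real obstacle: the argument is essentially the LP-optimality certificate, and the hardest bookkeeping is simply matching the sign conventions embedded in the definitions of $\mathfrak{U}_+$ and $\mathfrak{U}_-$.
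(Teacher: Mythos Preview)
Your proposal is correct and follows essentially the same approach as the paper. The paper's proof is more terse—it directly argues the two contrapositives by taking $\alpha=\lambda-\beta$ for the sufficiency (your ``local-to-global'' step) and $\beta=\lambda-t\alpha$ for the necessity (your feasible-direction step)—but the content is identical to your LP-optimality framing, including the observation that $\support((\alpha)_+)\subseteq\support(\lambda)$ is exactly the sign condition needed for $\lambda-t\alpha$ to stay in $\decompose(W)$ for small $t>0$.
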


\begin{proof}
It suffices to prove the first part, because the second part can be
proved in the same vein.

(Sufficiency) If there exists some $\beta\in\decompose(W)$ such that
$\capacity_{11}(\beta)<\capacity_{11}(\lambda)$, then
\[
\sum_{D\in\deterministic} (\lambda_D-\beta_D) D = 0
\]
and
\[
\sum_{D\in\deterministic} (\lambda_D-\beta_D) \log\rank(D) > 0,
\]
so that $U=\support((\lambda-\beta)_+)\in\mathfrak{U}_+$ and
$U\subseteq\support(\lambda)$, a contradiction.

(Necessity) For $U\in\mathfrak{U}_+$, if
$U\subseteq\support(\lambda)$, then there is a vector
$\alpha\in\real^\deterministic$ such that
$\support((\alpha)_+)\subseteq\support(\lambda)$,
$\sum_D \alpha_D D=0$, and $\sum_D \alpha_D\log\rank(D)>0$.
Let $\beta=\lambda-t\alpha$.
For sufficiently small $t>0$, it can be verified that
$\beta\in\decompose(W)$ and
$\capacity_{11}(\beta)<\capacity_{11}(\lambda)=\lowerIC_{11}(W)$, which is
absurd.
\end{proof}

\begin{definition}
A subset $S\subseteq\deterministic$ is said to be
$\intrinsic_{11}$-minimized, or succinctly, $\intrinsic$-minimized
(resp., $\intrinsic$-maximized) if there is a
$\lambda\in\probability_\deterministic$ such that
$\support(\lambda)=S$ and $\capacity_{11}(\lambda)=\lowerIC_{11}(W)$
(resp., $\capacity_{11}(\lambda)=\upperIC_{11}(W)$), where
$W=\sum_{D\in\deterministic} \lambda_D D$.
\end{definition}

A simple consequence of Proposition~\ref{characterizationOfIC} is:

\begin{proposition}\label{necessityOfIC}
If $S\subseteq\deterministic$ is $\intrinsic$-minimized (resp.,
$\intrinsic$-maximized), then any
$\lambda\in\probability_\deterministic$ supported on $S$ achieves
$\lowerIC_{11}(W)$ (resp., $\upperIC_{11}(W)$), where
$W=\sum_{D\in\deterministic} \lambda_D D$.
As a consequence, any nonempty subset of $S$ is also
$\intrinsic$-minimized (resp., $\intrinsic$-maximized).
\end{proposition}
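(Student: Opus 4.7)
The plan is to derive this directly from Proposition~\ref{characterizationOfIC}, exploiting the crucial observation that the obstructing families $\mathfrak{U}_+$ and $\mathfrak{U}_-$ are defined purely in terms of the set $\deterministic$ and the rank function on it, and make no reference to any specific channel $W$. Consequently, the condition ``no $U\in\mathfrak{U}_+$ satisfies $U\subseteq\support(\lambda)$'' depends only on $\support(\lambda)$ and transfers across different $W$. This invariance is precisely what will let us transport the optimality certificate from one decomposition to another.

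First I would unpack the $\intrinsic$-minimization hypothesis: by definition, there exists some $\lambda_0\in\probability_\deterministic$ with $\support(\lambda_0)=S$ such that $\capacity_{11}(\lambda_0)=\lowerIC_{11}(W_0)$ for $W_0=\sum_D \lambda_{0,D}D$. The necessity direction of Proposition~\ref{characterizationOfIC} applied to $\lambda_0$ then yields that no $U\in\mathfrak{U}_+$ is contained in $S$. Next, for any $\lambda\in\probability_\deterministic$ with $\support(\lambda)\subseteq S$, I would set $W=\sum_D \lambda_D D$ so that $\lambda\in\decompose(W)$ by construction. Since $\support(\lambda)\subseteq S$, no $U\in\mathfrak{U}_+$ lies inside $\support(\lambda)$ either, and the sufficiency direction of Proposition~\ref{characterizationOfIC}, this time applied with $W$ in place of $W_0$, yields $\capacity_{11}(\lambda)=\lowerIC_{11}(W)$, as desired.

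The consequence that any nonempty $S'\subseteq S$ is $\intrinsic$-minimized then follows by choosing any $\lambda'\in\probability_\deterministic$ with $\support(\lambda')=S'$ and applying what has just been established to $\lambda'$; this $\lambda'$ serves as the required witness for $S'$. The $\intrinsic$-maximized case is handled by exactly the same argument with $\mathfrak{U}_+$ replaced by $\mathfrak{U}_-$ and $\lowerIC_{11}$ by $\upperIC_{11}$ throughout, which is why Proposition~\ref{characterizationOfIC} was stated in symmetric form. I do not anticipate a significant obstacle; the only subtlety worth explicitly flagging is the $W$-independence of $\mathfrak{U}_+$ and $\mathfrak{U}_-$, which is transparent from their definitions and is exactly what makes the whole argument essentially a one-line bookkeeping exercise on top of the previous proposition.
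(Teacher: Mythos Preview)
Your proposal is correct and matches the paper's approach exactly: the paper simply states that the proposition is ``a simple consequence of Proposition~\ref{characterizationOfIC}'', and you have spelled out precisely that derivation, correctly isolating the $W$-independence of $\mathfrak{U}_+$ and $\mathfrak{U}_-$ as the point that makes the characterization transferable across channels.
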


By Proposition~\ref{necessityOfIC}, it is important to identify
patterns of sets that are not $\intrinsic$-minimized or
$\intrinsic$-maximized.
Some simple patterns that are not $\intrinsic$-minimized or
$\intrinsic$-maximized are given as follows and their proofs are
relegated to Appendix~\ref{secIC11extra}.

\begin{proposition}\label{min.pattern.1}
If $m\le n$, then any deterministic perfect channels $P_1$, \ldots,
$P_\ell$ such that at least one column of $P_1+\cdots+P_\ell$ has a
weight greater than one are not $\intrinsic$-minimized.
\end{proposition}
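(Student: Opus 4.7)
The plan is to apply Proposition~\ref{characterizationOfIC}: to show that $\{P_1,\ldots,P_\ell\}$ is not $\intrinsic$-minimized it suffices to exhibit some $U\in\mathfrak{U}_+$ with $U\subseteq\{P_1,\ldots,P_\ell\}$. Concretely, I would look for some $\alpha\in\real^\deterministic$ satisfying $\sum_D \alpha_D D=0$, $\sum_D \alpha_D\log\rank(D)>0$, and $\support((\alpha)_+)\subseteq\{P_1,\ldots,P_\ell\}$.

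Since $m\le n$, each perfect $P_k$ is injective as a map $\fromTo{1}{m}\to\fromTo{1}{n}$, so each of its columns has weight at most one. The hypothesis that some column $j$ of $\sum_k P_k$ has weight greater than one therefore yields distinct indices $k_1\ne k_2$ and distinct rows $i_1\ne i_2$ with $P_{k_1}(i_1)=j=P_{k_2}(i_2)$. The natural operation to try is a single-row swap: define $P_{k_1}'$ and $P_{k_2}'$ from $P_{k_1}$ and $P_{k_2}$ by exchanging their $i_1$-th rows, and set $\alpha_{P_{k_1}}=\alpha_{P_{k_2}}=+1$, $\alpha_{P_{k_1}'}=\alpha_{P_{k_2}'}=-1$, and all other coordinates zero.

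The verification then splits into three steps. First, $\sum_D \alpha_D D=0$ is immediate: only row $i_1$ is affected and the swap preserves its total contribution. Second, I would check pairwise distinctness of the four channels so that the coefficients do not collapse---this is where a little care is needed, ruling out degeneracies such as $P_{k_1}'=P_{k_2}$ by noting that the latter would force $P_{k_1}$ and $P_{k_2}$ to agree off row $i_1$, hence $P_{k_1}(i_2)=P_{k_2}(i_2)=j=P_{k_1}(i_1)$, contradicting injectivity of $P_{k_1}$; the remaining identifications are ruled out similarly. Third, for the rank gap, observe that $P_{k_2}'(i_1)=P_{k_1}(i_1)=j=P_{k_2}(i_2)=P_{k_2}'(i_2)$, so $\rank(P_{k_2}')\le m-1$; together with $\rank(P_{k_1}'),\rank(P_{k_2}')\le m$ this yields $\sum_D \alpha_D\log\rank(D)\ge 2\log m-\log m-\log(m-1)=\log\frac{m}{m-1}>0$.

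The main obstacle I anticipate is really the bookkeeping of distinctness: one must be sure the swap does not accidentally identify some of the four channels, which would shrink $\alpha$ and could nullify $\sum_D \alpha_D\log\rank(D)$. Once injectivity of $P_{k_1}$ and $P_{k_2}$ is exploited carefully, no such collision occurs; the support of $(\alpha)_+$ equals $\{P_{k_1},P_{k_2}\}\subseteq\{P_1,\ldots,P_\ell\}$, and this $U=\{P_{k_1},P_{k_2}\}\in\mathfrak{U}_+$ is exactly the certificate required by Proposition~\ref{characterizationOfIC}.
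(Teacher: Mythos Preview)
Your argument is correct. The row-swap construction works: the four channels $P_{k_1},P_{k_2},P_{k_1}',P_{k_2}'$ are indeed pairwise distinct (injectivity of $P_{k_1}$ and $P_{k_2}$ rules out every possible collision, as you outline), the identity $\sum_D\alpha_D D=0$ is immediate since only row $i_1$ is touched, and the rank drop $\rank(P_{k_2}')\le m-1$ gives $\sum_D\alpha_D\log\rank(D)\ge\log\tfrac{m}{m-1}>0$. Hence $U=\{P_{k_1},P_{k_2}\}\in\mathfrak{U}_+$ sits inside $\{P_1,\ldots,P_\ell\}$, and Proposition~\ref{characterizationOfIC} finishes.

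The paper takes a different, less constructive route. It forms $W=(P_1+\cdots+P_\ell)/\ell$, notes that column $j$ of $W$ has weight $>1$, and simply peels off a single deterministic channel $D$ with $\weight(D_{*,j})>1$ (hence $\rank(D)\le m-1$) via $W=xD+(1-x)W'$ with $x\in(0,1)$; any decomposition of $W'$ then yields some $\lambda'\in\decompose(W)$ with $\capacity_{11}(\lambda')<\log m$, so $\lowerIC_{11}(W)<\log m=\capacity_{11}(\lambda)$, and Proposition~\ref{necessityOfIC} (in contrapositive form) gives the conclusion. Your approach is longer but more informative: it pinpoints a \emph{two-element} subset $\{P_{k_1},P_{k_2}\}$ that is already not $\intrinsic$-minimized, which is a sharper statement than what the paper proves. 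The paper's approach, by contrast, avoids all the distinctness bookkeeping at the cost of being non-explicit about the alternative decomposition.
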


\begin{proposition}\label{min.pattern.2}
If $m\ge n$, then any deterministic perfect channels $P_1$, \ldots,
$P_\ell$ such that at least one column of $P_1+\cdots+P_\ell$ has no
entry equal to $\ell$ are not $\intrinsic$-minimized.
\end{proposition}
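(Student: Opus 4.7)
The plan is to exhibit a vector $\alpha\in\mathfrak{U}_+$ with $\support((\alpha)_+)\subseteq\{P_1,\ldots,P_\ell\}$. Once such an $\alpha$ is produced, Proposition~\ref{characterizationOfIC} immediately prevents any $\lambda\in\probability_\deterministic$ with $\support(\lambda)=\{P_1,\ldots,P_\ell\}$ from attaining $\lowerIC_{11}(W_\lambda)$ for $W_\lambda=\sum_D\lambda_D D$, so $\{P_1,\ldots,P_\ell\}$ cannot be $\intrinsic$-minimized. Because $m\ge n$ every perfect deterministic channel has rank exactly $n$, and so it suffices to produce a matrix identity
\[
\sum_{i=1}^\ell P_i=\sum_{j=0}^{\ell-1}Q_j
\]
in which each $Q_j$ is a deterministic channel and at least one of them has rank strictly less than $n$.

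To construct the identity, let $y^*$ be a column of $c\eqdef\sum_{i=1}^\ell P_i$ whose entries are all at most $\ell-1$. Since the rows of $c$ sum to $\ell$, we have $\sum_{y\ne y^*}c_{x,y}\ge 1$ for every $x\in\fromTo{1}{m}$, so for each $x$ we may choose some $y_x\ne y^*$ with $c_{x,y_x}\ge 1$. Setting $Q_0(x)\eqdef y_x$ yields a deterministic channel that omits $y^*$ from its image, whence $\rank(Q_0)\le n-1$. The residual $c-Q_0$ is then a nonnegative integer matrix with constant row sums $\ell-1$, and a straightforward greedy procedure—at each of the $\ell-1$ stages, extract one nonzero entry per row and subtract the corresponding zero-one pattern—writes $c-Q_0$ as $Q_1+\cdots+Q_{\ell-1}$ for deterministic channels $Q_j$.

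Finally, set $\alpha_D\eqdef\size{\{i:P_i=D\}}-\size{\{j:Q_j=D\}}$ for each $D\in\deterministic$. Then $\sum_D\alpha_D D=0$ by the identity above; $\alpha_D\le 0$ whenever $D\notin\{P_1,\ldots,P_\ell\}$, so $\support((\alpha)_+)\subseteq\{P_1,\ldots,P_\ell\}$; and
\[
\sum_D\alpha_D\log\rank(D)=\ell\log n-\sum_{j=0}^{\ell-1}\log\rank(Q_j)\ge\log n-\log(n-1)>0,
\]
placing $\alpha$ in $\mathfrak{U}_+$ as required. I do not anticipate a real obstacle; the only conceptual point is recognizing that the column hypothesis is exactly what creates enough room in every row to peel off a deterministic channel avoiding $y^*$, and any such channel is automatically non-surjective and hence of rank at most $n-1$.
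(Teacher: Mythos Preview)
Your argument is correct. The key insight---that the column hypothesis guarantees enough mass outside column $y^*$ in every row to peel off a deterministic channel avoiding $y^*$, necessarily of rank at most $n-1$---is exactly the paper's idea. The paper, however, packages it more economically: instead of decomposing the integer matrix $c=\sum_i P_i$ completely into $Q_0,\ldots,Q_{\ell-1}$ and then building an explicit $\alpha\in\mathfrak{U}_+$, it works directly with $W=(P_1+\cdots+P_\ell)/\ell$, notes that column $j$ of $W$ has every entry strictly below $1$, and writes $W=xD+(1-x)W'$ for some small $x>0$ and a deterministic $D$ with $\weight(D_{*,j})=0$. This already gives $\lowerIC_{11}(W)<\log n$, and the conclusion follows via Proposition~\ref{necessityOfIC}. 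Your route through Proposition~\ref{characterizationOfIC} is more explicit and self-contained, but the full greedy decomposition of $c-Q_0$ is extra work that the paper's shortcut avoids.
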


\begin{proposition}\label{max.pattern.1}
For $D\in\deterministic$, if $\weight(D_{*,j})\le m-2$, then
$\{D, \useless_j\}$ is not $\intrinsic$-maximized.
\end{proposition}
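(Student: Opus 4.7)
The plan is to invoke Proposition~\ref{characterizationOfIC}: showing that $\{D, \useless_j\}$ is not $\intrinsic$-maximized reduces to exhibiting an $\alpha \in \real^{\deterministic}$ with $\sum_{E} \alpha_E E = 0$, $\sum_{E} \alpha_E \log \rank(E) > 0$, and $\support((\alpha)_-) \subseteq \{D, \useless_j\}$. I would construct such an $\alpha$ by perturbing the two-atom decomposition $\frac{1}{2} D + \frac{1}{2} \useless_j$ with two auxiliary deterministic channels obtained by a single-row swap.

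Concretely, the hypothesis $\weight(D_{*,j}) \le m-2$ guarantees two distinct rows $i_1 \ne i_2$ with $y_k \eqdef D(i_k) \ne j$ for $k=1,2$. Define $E_1$ to agree with $D$ except that $E_1(i_1) = j$, and $E_2$ to agree with $\useless_j$ except that $E_2(i_1) = y_1$. A row-by-row check yields $D + \useless_j = E_1 + E_2$ (only row $i_1$ is nontrivial, and there both sides equal $\delta_{y_1} + \delta_j$), so setting $\alpha_{E_1} = \alpha_{E_2} = 1/2$, $\alpha_D = \alpha_{\useless_j} = -1/2$, and all other entries zero gives $\sum_E \alpha_E E = 0$. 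A brief check shows $D, \useless_j, E_1, E_2$ are pairwise distinct (using $y_1, y_2 \ne j$ and $i_2 \ne i_1$), so $\support((\alpha)_-) = \{D, \useless_j\}$ exactly.

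The substantive step is the rank inequality $\rank(E_1)\rank(E_2) > \rank(D)$, which is precisely what makes $\sum_E \alpha_E \log\rank(E) > 0$. Note $\rank(E_2) = 2$ since the image of $E_2$ is exactly $\{y_1, j\}$. For $E_1$ I would establish two lower bounds: (i) $\rank(E_1) \ge \rank(D) - 1$, because deleting row $i_1$ from $D$'s domain removes at most the single image value $y_1$; and (ii) $\rank(E_1) \ge 2$, since $\rank(E_1) = 1$ would force $E_1 = \useless_j$, hence $D(i_2) = j$, contradicting the choice of $i_2$. The main obstacle is that neither bound alone suffices for all $\rank(D)$: (i) is too weak when $\rank(D) \le 2$, while (ii) is too weak when $\rank(D) \ge 4$. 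A short case split resolves this, because $2\rank(E_1) > \rank(D)$ holds by (ii) whenever $\rank(D) \le 2$, and by (i) whenever $\rank(D) \ge 3$ (since then $\rank(D) - 1 > \rank(D)/2$). Thus $\sum_E \alpha_E \log\rank(E) = \frac{1}{2}\log\bigl(\rank(E_1)\rank(E_2)/\rank(D)\bigr) > 0$, completing the verification.
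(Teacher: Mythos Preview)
Your proof is correct and follows essentially the same approach as the paper: the paper also swaps a single row of $D$ and of $\useless_j$ to obtain two auxiliary deterministic channels $D',D''$ (your $E_1,E_2$) satisfying $\frac{1}{2}D+\frac{1}{2}\useless_j=\frac{1}{2}D'+\frac{1}{2}D''$, and then uses the same rank estimates $\rank(D')\ge(\rank(D)-1)\vee 2$ and $\rank(D'')=2$. Your write-up is slightly more explicit in framing the argument via Proposition~\ref{characterizationOfIC} and in verifying distinctness of the four channels, but the substance is identical.
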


The next result is a generalization of the Birkhoff-von Neumann
theorem, which plays a crucial role in proving Theorems~\ref{lowerICBound} and
\ref{upperICBound}.
Our proof hinges on an extension of the ideas in
\cite{jurkat_extremal_1968,caron_nonsquare_1996}.

\begin{theorem}\label{birkhoff}
Let $a$ and $b$ be two $n$-dimensional integer-valued vectors such
that $a\le b$, namely, $a_j\le b_j$ for $1\le j\le n$.
Let
\[
\channel[a,b]
\eqdef \set{W\in\channel}{a\le \vt{1}W\le b}
\]
and $\deterministic[a,b]\eqdef \channel[a,b]\cap\deterministic$, where
$\vt{1}$ denotes the $m$-dimensional all-one row vector.
If $\channel[a,b]$ is not empty, then $\channel[a,b]$ is convex and
the vertices of $\channel[a,b]$ are exactly the matrices in
$\deterministic[a,b]$.
\end{theorem}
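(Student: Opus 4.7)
The plan is to establish the two claims separately. Convexity of $\channel[a,b]$ is immediate because the defining constraints---row-stochasticity, entrywise nonnegativity, and $a\le\vt{1}W\le b$---are all linear in $W$, so $\channel[a,b]$ is the intersection of finitely many half-spaces. For the easy direction of the vertex characterization, any $D\in\deterministic[a,b]$ is already a vertex of the ambient polytope $\channel$: if $D=(W_1+W_2)/2$ with $W_1,W_2\in\channel[a,b]\subseteq\channel$, then each $0$-or-$1$ entry of $D$ forces the corresponding entries of $W_1$ and $W_2$ (which lie in $[0,1]$) to equal it, so $W_1=W_2=D$.

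The substantive direction is to show that every vertex of $\channel[a,b]$ lies in $\deterministic[a,b]$. Contrapositively, I would show that any $W\in\channel[a,b]$ having at least one non-integer entry admits a nonzero real matrix $\Delta$ such that both $W+\epsilon\Delta$ and $W-\epsilon\Delta$ lie in $\channel[a,b]$ for all sufficiently small $\epsilon>0$, exhibiting $W$ as the midpoint of two distinct members of $\channel[a,b]$. The device, extending the cycle arguments of \cite{jurkat_extremal_1968,caron_nonsquare_1996}, is the bipartite multigraph $G_W$ whose vertex classes are $\fromTo{1}{m}$ and $\fromTo{1}{n}$ and whose edges are the positions $(i,j)$ with $W_{i,j}\in(0,1)$; the perturbation $\Delta$ will be supported on the edges of a carefully chosen subgraph, with signs $\pm 1$ arranged so that each internal vertex of the subgraph is balanced.

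Two cases arise. If $G_W$ contains a cycle, it is necessarily of even length by bipartiteness, and assigning alternating signs to its edges yields a $\Delta$ under which every row sum and every column sum of $W$ is preserved; then $W\pm\epsilon\Delta$ remains in $\channel[a,b]$ for small $\epsilon$. If $G_W$ is a forest, consider any leaf: a degree-$1$ row vertex would mean only one fractional entry in that row, contradicting the row sum being the integer $1$, so every leaf of $G_W$ must be a column. In any nontrivial component, a longest path therefore has two endpoints $j_1$ and $j_k$ that are both column leaves; assigning alternating signs to the edges of this path makes every interior vertex see one $+$ and one $-$ edge, so that every row sum and every interior column sum is preserved, while the endpoint column sums $s_{j_1}$ and $s_{j_k}$ shift by $\pm\epsilon$.

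The crux is to verify that those two endpoint column sums still satisfy $a\le\vt{1}(W\pm\epsilon\Delta)\le b$. Since $j_1$ is a leaf of $G_W$, the column $j_1$ contains exactly one non-integer entry of $W$, so $s_{j_1}$ is itself non-integer; together with the hypothesis that $a_{j_1}$ and $b_{j_1}$ are integers, this forces the \emph{strict} inequalities $a_{j_1}<s_{j_1}<b_{j_1}$, and likewise for $j_k$. Consequently a sufficiently small $\epsilon$ keeps both endpoints within their column-sum bounds, while the same $\epsilon$ keeps every perturbed fractional entry of $W$ within $(0,1)$, so $W\pm\epsilon\Delta\in\channel[a,b]$. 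I expect this forest case---specifically, the realization that integrality of $a_j,b_j$ forces leaf columns to carry strict slack---to be the only real obstacle; the cycle case and the deterministic-is-vertex direction are routine bipartite-graph bookkeeping.
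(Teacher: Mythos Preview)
Your proof is correct and follows essentially the same approach as the paper: both construct an alternating $\pm 1$ perturbation supported on the fractional entries, distinguishing a cycle/loop case (which preserves all column sums) from a path/chain case whose two endpoint columns have non-integer column sums and hence, by integrality of $a$ and $b$, strict slack. The only difference is packaging---you argue via the bipartite graph $G_W$ (cycle versus forest, with all leaves forced to be columns), whereas the paper builds the chain or loop by an explicit walk that terminates upon reaching a non-boundary column or revisiting a previously visited column.
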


\begin{proof}
It is clear that $\channel[a,b]$, if nonempty, is a convex set.
We will show that any matrix $W\in\channel[a,b]$ with non-integer
entries cannot be a vertex of $\channel[a,b]$.
There are two cases:

Case (a):
There is a non-integer entry in a non-boundary column.

Case (b):
All non-integer entries are in the boundary columns.

\noindent
Here, a column is
called a boundary column if its sum is either $a_j$ or
$b_j$, where $j$ is the index of the column.

In whichever the case, we can pick a non-integer entry, say the
$(i_0,j_0)$ entry, which in Case (a) must be a non-integer entry in
a non-boundary column.
By the following argument, we will find a chain or loop of non-integer
entries of the matrix, which will be used to prove that the matrix is
not extremal.

Because the $(i_0,j_0)$ entry is not an integer, there exists at least
another entry in the same row that is also not an integer, say the
$(i_0,j_1)$ entry.
If the $j_1$-th column is not on the boundary, then we are done.
If however the $j_1$-th column is on the boundary, then there exists at
least another non-integer entry in the same column, say $(i_1,j_1)$.
In general, after $t$ steps, we have visited $t+1$ columns, with the
chain
\[
(i_0,j_0), (i_0,j_1), (i_1,j_1), \ldots,
(i_{t-1},j_t), (i_t,j_t).
\]
Except for the $j_0$-th column, every column has exactly one inbound
entry $(i_{s-1},j_s)$ and one outbound entry $(i_s,j_s)$, where
$1\le s\le t$.
Now in the $(t+1)$-th step, by the same argument, we find the
$(i_t,j_{t+1})$ entry in the $j_{t+1}$-th column.
If this column has already been visited, then $j_{t+1}=j_s$ for some
$0\le s\le t-1$ and we are done.
If this column is new but not on the boundary, we are also done.
If however this new column is on boundary, then we can further find
an outbound entry in this column, say $(i_{t+1},j_{t+1})$, and
proceed to the $(t+2)$-th step.
Because there are finite columns, we will always end up with a chain
\[
(i_0,j_0), (i_0,j_1), (i_1,j_1), \ldots,
(i_{k-1},j_{k-1}), (i_{k-1},j_k)
\]
which only happens in Case (a), or a loop
\[
(i_\ell,j_\ell), (i_\ell,j_{\ell+1}), (i_{\ell+1},j_{\ell+1})\ldots,
(i_{k-1},j_k), (i_k,j_k)=(i_\ell,j_\ell)
\]
for some $0\le\ell<k-1$.

Then we can construct a matrix $N$ by setting all outbound entries (in
the chain or the loop) $N_{i_s,j_s}=1$, all inbound entries
$N_{i_{s-1},j_s}=-1$, and all other entries to be zero.
It is clear that
\[
\vt{1}N=\canonicalBasis_{j_0}-\canonicalBasis_{j_k},
\quad
N\transpose{\vt{1}}=0
\]
in the former case and
\[
\vt{1}N=0,
\quad
N\transpose{\vt{1}}=0
\]
in the latter case, where
$\canonicalBasis_{k}=(1\{j=k\})_{j\in\fromTo{1}{n}}$.

Let $U=W+\epsilon N$ and $V=W-\epsilon N$.
It is clear that $U,V\in\channel[a,b]$ for sufficiently small
$\epsilon>0$.
It is also clear that $W=\frac{1}{2}U+\frac{1}{2}V$ and $U\ne V$, that
is, $W$ is not a vertex of $\channel[a,b]$.

Therefore, we have $\mathcal{V}\subseteq\deterministic[a,b]$, where $\mathcal{V}$ denotes the set of all vertices of $\channel[a,b]$.
It remains to show that $\deterministic[a,b]\subseteq\mathcal{V}$.
For any $W\in\deterministic[a,b]$, if $W=\alpha U+(1-\alpha)V$ with
$U,V\in\channel[a,b]$ and $\alpha\in(0,1)$, then for every
$1\le i\le m$,
\[
\canonicalBasis_i W
=\alpha \canonicalBasis_i U+(1-\alpha) \canonicalBasis_i V,
\]
which however implies that $\canonicalBasis_i U=\canonicalBasis_i V$
for every $1\le i\le m$, or $U=V$.
\end{proof}

Equipped with Theorem~\ref{birkhoff}, we proceed to derive bounds for the lower
and the upper rank probabilities (Definition~\ref{rankP}). These bounds are useful in estimating the lower and the upper
intrinsic capacities.

\begin{proposition}\label{useless}
\[
\lowerRP_W(1)
= \left(g - m + 1\right)_+,
\]
\[
\upperRP_W(1)
= \sum_{j=1}^n \alpha_j,
\]
where
\begin{equation}\label{maximumColumnSum}
g = \max_{1\le j\le n} (\vt{1}W)_j,
\end{equation}
\begin{equation}\label{uselessMaximal}
\alpha=\left(\min_{1\le i\le m}W_{i,j}\right)_{j\in\fromTo{1}{n}}.
\end{equation}
\end{proposition}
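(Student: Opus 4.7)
The plan is to first notice that the rank-one deterministic channels are precisely the useless channels $\useless_1,\ldots,\useless_n$, so $\rankProbability_\lambda(1)=\sum_{j=1}^n \lambda_{\useless_j}$ for every $\lambda\in\decompose(W)$. Both halves therefore reduce to extremizing this linear functional over $\decompose(W)$. A single tool drives the achievability in both parts: since $\useless_j$ is the only deterministic channel with a column sum exceeding $m-1$, the polytope $\channel[0,(m-1)\vt{1}]$ has precisely the non-useless deterministic matrices as vertices by Theorem~\ref{birkhoff}, so any channel $V$ with $\vt{1}V\le (m-1)\vt{1}$ admits a convex decomposition into non-useless deterministic channels.

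For $\upperRP_W(1)\le\sum_j\alpha_j$, the column-wise identity $W_{i,j}=\sum_D \lambda_D D_{i,j}\ge \lambda_{\useless_j}$ gives $\lambda_{\useless_j}\le \alpha_j$, and summing over $j$ yields the claim. For the matching decomposition, set $\pi=\sum_j \alpha_j$ and $W''=W-\sum_j \alpha_j \useless_j$. Each row of $W''$ sums to $1-\pi$, and each of its columns contains at least one zero entry (the minimizing row of the corresponding column of $W$); hence every positive entry of $W''$ is bounded by its row sum $1-\pi$, so every column sum of $W''$ is at most $(m-1)(1-\pi)$. If $\pi<1$, the normalization $W'=W''/(1-\pi)$ is a channel with $\vt{1}W'\le (m-1)\vt{1}$; applying the tool yields a decomposition of $W'$ into non-useless deterministic channels, and reassembling produces a $\lambda\in\decompose(W)$ with $\rankProbability_\lambda(1)=\pi$. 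The case $\pi=1$ forces $W''=0$, so $W=\sum_j \alpha_j \useless_j$ directly.

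For $\lowerRP_W(1)\ge (g-m+1)_+$, I use that $\weight(D_{*,j})\le m-1$ whenever $D\ne \useless_j$, so for a $j_0$ attaining $g$,
\[
g=\sum_D \lambda_D \weight(D_{*,j_0})\le m\lambda_{\useless_{j_0}}+(m-1)(1-\lambda_{\useless_{j_0}}),
\]
giving $\lambda_{\useless_{j_0}}\ge g-m+1$; combining with nonnegativity yields the bound. For the matching decomposition, when $g\le m-1$ the channel $W$ itself already lies in $\channel[0,(m-1)\vt{1}]$ and the tool provides a decomposition with no $\useless_j$. When $g>m-1$, the identity $\sum_j(\vt{1}W)_j=m$ forces a unique such column $j_0$, and for $g<m$ I set $W'=(W-(g-m+1)\useless_{j_0})/(m-g)$. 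Nonnegativity of $W'$ follows from $\alpha_{j_0}\ge g-(m-1)$, which is automatic since the $m$ entries of column $j_0$ lie in $[0,1]$ and sum to $g$; a short computation shows column $j_0$ of $W'$ sums to exactly $m-1$ while every other column of $W'$ sums to at most $1\le m-1$, so the tool applies again. The degenerate case $g=m$ forces $W=\useless_{j_0}$ outright.

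The main obstacle is arranging the arithmetic so that the auxiliary stochastic matrix $W'$ in each achievability step satisfies $\vt{1}W'\le (m-1)\vt{1}$; once that containment is verified, Theorem~\ref{birkhoff} supplies the needed decomposition and closes both halves of the proposition.
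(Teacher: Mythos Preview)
Your proof is correct and follows essentially the same approach as the paper. Both arguments identify the rank-one deterministic channels with the useless channels $\useless_j$, use Theorem~\ref{birkhoff} with the constraint vector $b=(m-1)\vt{1}$ to decompose the residual channel into non-useless deterministic channels, and carry out the same splitting $W=t\useless_{j_0}+(1-t)W'$ with $t=g-m+1$ for the lower rank-one probability. Your write-up is somewhat more explicit than the paper's in two places: you spell out the column-sum inequality $\lambda_{\useless_{j_0}}\ge g-m+1$ directly (the paper argues it via the impossibility of decomposing $W'$ into rank-$\ge 2$ channels otherwise), and for the achievability of $\upperRP_W(1)$ you verify that the residual $W'$ has all column sums at most $m-1$ and invoke Theorem~\ref{birkhoff} explicitly, whereas the paper simply observes that choosing $s_j=\alpha_j$ gives a valid stochastic remainder and lets the already-established upper bound force the decomposition of the remainder to be useless-free.
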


\begin{proof}
By Theorem~\ref{birkhoff}, $W$ can be expressed as a convex
combination of deterministic channels of rank $\ge 2$ if $g\le m-1$,
in which case, $\lowerRP_W(1)=0$.
Otherwise, let $\ell$ be the index of the column with the sum $g>m-1$.
Consider the convex combination
\[
W = t \useless_\ell + (1-t) W'.
\]
It is clear that $W'$ cannot be a convex combination of deterministic
channels of rank $\ge 2$ unless the sum of its $\ell$-th column is
$\le m-1$.
To this end, we set $t=g-m+1$, which is the minimum value
required, and we have
\[
(\vt{1}W')_\ell
= \frac{(\vt{1}W)_\ell-t m}{1-t}
= m-1
\]
and
\[
(\vt{1}W')_j
= \frac{(\vt{1}W)_j}{1-t}
\le 1
\]
for $j\ne\ell$, so that $\lowerRP_W(1)=g-m+1$.

If $W$ has the following convex decomposition
\[
W
= \left(1-\sum_{j=1}^n s_j\right) W'
 + \sum_{j=1}^n s_j \useless_j,
\]
then $W'$ is a valid stochastic matrix iff $s_j\le\alpha_j$ for all
$j$.
Therefore, $\upperRP_W(1) = \sum_{j=1}^n \alpha_j$.
\end{proof}

\begin{proposition}\label{upperICStragtegy}
If $\lambda\in\decompose(W)$ achieves $\upperIC_{11}(W)$, then
$\rankProbability_\lambda(1)=\lowerRP_W(1)$.
In particular, if $\lowerRP_W(1)>0$, then
$\lambda_{\useless_\ell}=\lowerRP_W(1)$ and
$\rankProbability_\lambda(2)=1-\lowerRP_W(1)$, where
$\ell=\arg\max_{1\le j\le n} (\vt{1}W)_j$.
\end{proposition}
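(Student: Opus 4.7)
The plan is to combine two elementary inequalities with Theorem~\ref{upperICBound} (when $\lowerRP_W(1)>0$) and with Proposition~\ref{characterizationOfIC} together with a splitting identity (when $\lowerRP_W(1)=0$). For any $\lambda\in\decompose(W)$, using $\rank(D)=\weight(\vt{1}D)$ and $\sum_D\lambda_D(\vt{1}D)_j=(\vt{1}W)_j$, I would chain
\begin{align*}
\capacity_{11}(\lambda)
&=\sum_D\lambda_D\log\rank(D)
\le\sum_D\lambda_D(\rank(D)-1)\\
&=\sum_{j=1}^n\Pr_{D\sim\lambda}\bigl[(\vt{1}D)_j\ge 1\bigr]-1
\le\sum_{j=1}^n\min\bigl(1,(\vt{1}W)_j\bigr)-1,
\end{align*}
via the pointwise inequality $\log_2 r\le r-1$ for integer $r\ge 1$ (tight iff $r\in\{1,2\}$), Markov's inequality on the nonnegative integer $(\vt{1}D)_j$ (tight iff $(\vt{1}D)_j$ is almost surely $\{0,1\}$-valued), and the trivial bound $\Pr\le 1$ (tight iff $(\vt{1}D)_j\ge 1$ almost surely).

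When $\lowerRP_W(1)>0$, Proposition~\ref{useless} gives $g>m-1$, so every column other than the $\ell$-th has sum strictly less than $1$ and the chain collapses to $m-g=1-\lowerRP_W(1)$; by Theorem~\ref{upperICBound} this is already $\upperIC_{11}(W)$. Every optimal $\lambda$ must therefore attain equality throughout, so each $D\in\support(\lambda)$ satisfies $\rank(D)\in\{1,2\}$, $(\vt{1}D)_\ell\ge 1$, and $(\vt{1}D)_j\in\{0,1\}$ for $j\ne\ell$; combined with $\sum_j(\vt{1}D)_j=m$, this forces the rank-$1$ elements of $\support(\lambda)$ to equal $\useless_\ell$ and the rank-$2$ elements to have $(\vt{1}D)_\ell=m-1$ with exactly one other column summing to $1$. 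Substituting into $\sum_D\lambda_D(\vt{1}D)_\ell=g$ yields $m\lambda_{\useless_\ell}+(m-1)(1-\lambda_{\useless_\ell})=g$, whence $\lambda_{\useless_\ell}=g-m+1=\lowerRP_W(1)$ and $\rankProbability_\lambda(2)=1-\lowerRP_W(1)$.

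When $\lowerRP_W(1)=0$, the claim reduces to $\rankProbability_\lambda(1)=0$. I would first invoke Proposition~\ref{characterizationOfIC} via the swap identity $\useless_{j_1}+\useless_{j_2}=D_1+D_2$, with $D_1,D_2$ obtained by distributing the $m$ rows between $\canonicalBasis_{j_1}$ and $\canonicalBasis_{j_2}$, to conclude $\{\useless_{j_1},\useless_{j_2}\}\in\mathfrak{U}_-$ and hence that an optimal $\lambda$'s rank-$1$ mass is concentrated on a single $\useless_{j_0}$ of weight $s$. Then for every rank-$\ge 2$ $D\in\support(\lambda)$ other than a \emph{rigid} rank-$2$ channel (having exactly $m-1$ rows equal to $\canonicalBasis_{j_0}$ and one row equal to some $\canonicalBasis_k$), an explicit row-wise assignment splits $\useless_{j_0}+D=D'+D''$ with $\rank(D')\rank(D'')>\rank(D)$, exhibiting another element of $\mathfrak{U}_-$ inside $\support(\lambda)$ and contradicting optimality. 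So $\support(\lambda)$ can contain only $\useless_{j_0}$ together with rigid rank-$2$ channels; but then the column-$j_0$ sum of $W$ equals $ms+(m-1)(1-s)=m-1+s$, exceeding $m-1$ whenever $s>0$ and contradicting $\lowerRP_W(1)=0$ via Proposition~\ref{useless}. The main obstacle is the case analysis for the splitting: one must verify, according to whether $D$'s column support contains $j_0$ and on the multiplicities of its off-$j_0$ rows, that the assignment can always be chosen to make both $D'$ and $D''$ have rank at least $2$.
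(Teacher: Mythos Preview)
Your appeal to Theorem~\ref{upperICBound} in the case $\lowerRP_W(1)>0$ is circular: in the paper's logical order, Theorem~\ref{upperICBound} is proved \emph{after} Proposition~\ref{upperICStragtegy} and its proof of the clause ``if $\lowerRP_W(1)>0$ then $\upperIC_{11}(W)=1-\lowerRP_W(1)$'' explicitly invokes Proposition~\ref{upperICStragtegy}. Fortunately the fix is one line: your own chain already gives $\capacity_{11}(\lambda)\le 1-\lowerRP_W(1)$ for every $\lambda$, while the trivial bound $\capacity_{11}(\lambda)\ge 1-\rankProbability_\lambda(1)$ applied to any $\lambda$ attaining $\lowerRP_W(1)$ yields $\upperIC_{11}(W)\ge 1-\lowerRP_W(1)$; together these give the equality you need without reference to Theorem~\ref{upperICBound}.

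Once that is repaired, your argument is correct but organized differently from the paper's. The paper treats both cases at once: it observes that if $\lambda_{\useless_j}>0$ for some $j$, then Propositions~\ref{necessityOfIC} and~\ref{max.pattern.1} force every $D\in\support(\lambda)$ to satisfy $\weight(D_{*,j})\ge m-1$, whence $(\vt{1}W)_j=m-1+\lambda_{\useless_j}$ and the conclusion drops out for both signs of $\lowerRP_W(1)$. Your inequality chain $\log r\le r-1$ followed by Markov is a genuinely different device for the case $\lowerRP_W(1)>0$, and it has the pleasant side-effect of proving the first clause of Theorem~\ref{upperICBound} from scratch; on the other hand, your treatment of the case $\lowerRP_W(1)=0$ essentially re-derives Proposition~\ref{max.pattern.1} (your ``splitting identity'' for $\useless_{j_0}+D$ is exactly the construction in that proposition's proof, and the separate swap $\useless_{j_1}+\useless_{j_2}=D_1+D_2$ is unnecessary since $\useless_{j'}$ with $j'\ne j_0$ already has $\weight((\useless_{j'})_{*,j_0})=0\le m-2$). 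The ``obstacle'' you flag---verifying that $D',D''$ can always be chosen of rank at least $2$---is handled by the explicit construction in the proof of Proposition~\ref{max.pattern.1}, which shows $\rank(D')\ge(\rank(D)-1)\vee 2$ and $\rank(D'')=2$ whenever $\weight(D_{*,j_0})\le m-2$.
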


\begin{proof}
If $\lambda$ is zero on all deterministic useless channels, then
$\rankProbability_\lambda(1)=\lowerRP_W(1)=0$.

If $\lambda_{\useless_j}>0$ for some $j$, then $\lambda$ must be zero
on all deterministic channels whose $j$-th column weight is less than
$m-1$ (Propositions~\ref{necessityOfIC} and \ref{max.pattern.1}).
Therefore, we must have
$\lambda_{\useless_j}=\rankProbability_\lambda(1)=\lowerRP_W(1)$
(Proposition~\ref{useless}) and
$\rankProbability_\lambda(2)=1-\lowerRP_W(1)$.
\end{proof}

\begin{proposition}\label{perfect}
If $m\le n$, then
\[
\upperRP_W(m)
\le 1-\beta,
\]
where
\begin{equation}\label{perfect1}
\beta = 0\vee\max_{1\le j\le n} \beta'_j
\end{equation}
and
\[
\beta'_j
= \begin{cases}
\frac{(1W)_j-1}{\weight(W_{*,j})-1} &$\weight(W_{*,j})>1$,\\
0 &otherwise.
\end{cases}
\]
Furthermore, if $\beta=0$, then $\upperRP_W(m)=1$.

If $m\ge n$, then
\[
\upperRP_W(n)\le h,
\]
where
\begin{equation}\label{perfect2}
h = 1\wedge\min_{1\le j\le n} (1W)_j.
\end{equation}
If $h=1$, then $\upperRP_W(n)=1$.
\end{proposition}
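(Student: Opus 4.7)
The plan rests on two elementary observations. First, by the deterministic row-by-row structure of $D$, when $m\le n$ we have $\rank(D)=m$ iff $D$ is injective iff $\weight(D_{*,j})\le 1$ for every column $j$, and dually, when $m\ge n$, $\rank(D)=n$ iff $D$ is surjective iff $\weight(D_{*,j})\ge 1$ for every $j$. Second, for any $D\in\support(\lambda)$ in a convex decomposition of $W$, the implication $D_{i,j}=1\Rightarrow W_{i,j}\ge\lambda_D>0$ forces $\support(D_{*,j})\subseteq\support(W_{*,j})$, and hence $(\vt{1}D)_j\le\weight(W_{*,j})$.

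To prove the upper bound on $\upperRP_W(m)$, I would fix $\lambda\in\decompose(W)$, set $p=\rankProbability_\lambda(m)$, and pick any column $j$ with $\weight(W_{*,j})>1$. Splitting $(\vt{1}W)_j=\sum_D\lambda_D(\vt{1}D)_j$ according to whether $\rank(D)=m$ or $\rank(D)<m$, the first observation bounds the full-rank contribution by $p\cdot 1$ and the second observation bounds the remaining contribution by $(1-p)\weight(W_{*,j})$. The resulting inequality $(\vt{1}W)_j\le p+(1-p)\weight(W_{*,j})$ rearranges into $1-p\ge\beta'_j$, and taking the maximum over $j$ (with $\beta'_j=0\le 1-p$ whenever $\weight(W_{*,j})\le 1$) yields $\upperRP_W(m)\le 1-\beta$. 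For $\upperRP_W(n)\le h$ under $m\ge n$ the argument is strictly shorter: every $D$ with $\rank(D)=n$ contributes at least one to $(\vt{1}D)_j$, and the rest contribute at least zero, so $(\vt{1}W)_j\ge p$ for every $j$; combining with the trivial bound $p\le 1$ gives $p\le h$.

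For the equality assertions I would invoke Theorem~\ref{birkhoff}. If $\beta=0$, a quick case check separating $\weight(W_{*,j})\le 1$ from $\weight(W_{*,j})>1$ shows $\vt{1}W\le\vt{1}$, so $W\in\channel[\vt{0},\vt{1}]$. Theorem~\ref{birkhoff} then writes $W$ as a convex combination of the vertices of $\channel[\vt{0},\vt{1}]$, each of which is a deterministic channel with all column weights at most one, hence of rank $m$ by the first observation; therefore $\upperRP_W(m)=1$. The case $h=1$ is handled symmetrically: $\vt{1}W\ge\vt{1}$ places $W$ in $\channel[\vt{1},m\vt{1}]$, whose vertices are exactly the surjective deterministic channels, all of rank $n$. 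The only step demanding real care is the rearrangement producing $\beta'_j$, where one must separate the degenerate regime $\weight(W_{*,j})\le 1$ (which would otherwise invite a zero denominator) from $\weight(W_{*,j})>1$; this, together with verifying the two case checks that translate $\beta=0$ and $h=1$ into clean column-sum constraints, is the main (mild) obstacle to a clean writeup.
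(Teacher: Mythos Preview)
Your proposal is correct and follows essentially the same approach as the paper's proof: bound the $j$-th column sum of $W$ by splitting the decomposition according to full rank versus not, use the support observation $(\vt{1}D)_j\le\weight(W_{*,j})$ for the non-full-rank part, rearrange, and then invoke Theorem~\ref{birkhoff} on $\channel[\vt{0},\vt{1}]$ (resp.\ $\channel[\vt{1},m\vt{1}]$) for the equality cases. Your write-up is in fact slightly cleaner than the paper's, which phrases the column-weight bound as ``$W$ admits a convex decomposition with $j$-th column sum at most $\weight(W_{*,j})$'' rather than your direct per-$D$ observation.
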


\begin{proof}
If $m\le n$, then the sum of every column of a deterministic channel
of rank $m$ is at most 1, and for every $1\le j\le n$, $W$ admits a
convex decomposition into deterministic channels with the $j$-th column
sum at most $\weight(W_{*,j})$.
Thus for every $\lambda\in\decompose(W)$ and every $j$,
\begin{eqnarray*}
(\vt{1}W)_j
&\le &(1\wedge\weight(W_{*,j}))\rankProbability_\lambda(m)
 + \weight(W_{*,j})(1-\rankProbability_\lambda(m))\\
&= &\weight(W_{*,j})-(\weight(W_{*,j})-1)_+\rankProbability_\lambda(m),
\end{eqnarray*}
so that
\[
\rankProbability_\lambda(m)
\le 1-\frac{(\vt{1}W)_j-1}{\weight(W_{*,j})-1}
\]
for $\weight(W_{*,j})>1$ and hence $\upperRP_W(m)\le 1-\beta$.
If $\beta=0$, which implies that $(\vt{1}W)_j\le 1$ for all
$1\le j\le n$, then $\upperRP_W(m)=1$ (Theorem~\ref{birkhoff}).

If $m\ge n$, then the sum of every column of a deterministic channel
of rank $n$ is at least $1$, so that, for every
$\lambda\in\decompose(W)$ and every $1\le j\le n$,
\[
(\vt{1}W)_j\ge \rankProbability_\lambda(n),
\]
and hence $\upperRP_W(n)\le h$.
If $h=1$, which implies $(\vt{1}W)_j\ge 1$ for all $1\le j\le n$, then
$\upperRP_W(n)=1$ (Theorem~\ref{birkhoff}).
\end{proof}

We are now ready to prove Theorems~\ref{lowerICBound} and
\ref{upperICBound}.

\begin{proof}[Proof of Theorem~\ref{lowerICBound}]
To find an upper bound of $\lowerIC_{11}(W)$, we need to find a convex
decomposition of $M$ as ``bad" as possible.
To this end, we can first extract from $W$ a collection of useless
channels with the total probability $\upperRP_W(1)$
(Proposition~\ref{useless}), that is,
\[
W = \sum_{j=1}^n\alpha_j\useless_j+(1-\upperRP_W(1))W'.
\]
If $\upperRP_W(1)=1$, then $\lowerIC_{11}(W)=0$; otherwise,
\[
\lowerIC_{11}(W)\le (1-\upperRP_W(1))\lowerIC_{11}(W').
\]

It is clear that $W'\in\channel[a,\vt{m}]$, where $\vt{m}$ denotes the
all-$m$ row vector.
The best deterministic channels in $\channel[a,m]$ are those with the
number of nonzero columns maximized.
The rank of those matrices is
\[
\left(\weight(a)+m-\sum_{j=1}^n a_j\right)\wedge n,
\]
so
\(
\lowerIC_{11}(W')
\le \log((m+\weight(a)-a\transpose{\vt{1}})\wedge n)
\)
(Theorem~\ref{birkhoff}).

Let $\lambda$ be a vertex of $\decompose(W)$ that attains
$\lowerIC_{11}(W)$.
Then
\begin{eqnarray*}
\lowerIC_{11}(W)
&= &\sum_{D\in\deterministic} \lambda_D \log\rank(D)\\
&\ge &1-\rankProbability_\lambda(1)
\ge 1-\upperRP_W(1).
\end{eqnarray*}
Finally, the special case of $m=2$ or $n=2$ can be easily verified.
\end{proof}

\begin{proof}[Proof of Theorem~\ref{upperICBound}]
Let $\lambda$ be a vertex of $\decompose(W)$ that attains
$\upperIC_{11}(W)$.

If $\lowerRP_W(1)>0$ or $m=2$ or $n=2$, then
$\rankProbability_\lambda(r)=0$ for all $r>2$
(Proposition~\ref{upperICStragtegy}), so that
$\upperIC_{11}(W)=1-\lowerRP_W(1)$ (Proposition~\ref{useless}).
The remaining case is then $\lowerRP_W(1)=0$.

To find a lower bound of $\upperIC_{11}(W)$, we need to find a convex
decomposition of $W$ as ``good" as possible.
It is clear that $W\in\channel[a,b]$, so $\upperIC_{11}(W)$ is bounded
below by the capacity of the worst deterministic channel in
$\channel[a,b]$ (Theorem~\ref{birkhoff}), which are obviously those
with the number of nonzero columns minimized.
The capacity of such a channel is $\log\gamma$, so that
$\upperIC_{11}(W)\ge \log\gamma$.

On the other hand,
\begin{eqnarray*}
\upperIC_{11}(W)
&= &\sum_{D\in\deterministic} \lambda_D \log\rank(D)\\
&\le &(1-\rankProbability_\lambda(o))\log(o-1)
 + \rankProbability_\lambda(o)\log o\\
&= &\log(o-1) + \rankProbability_\lambda(o)\log\frac{o}{o-1}\\
&\le &\log(o-1)+\upperRP_W(o)\log\frac{o}{o-1}
\end{eqnarray*}
where $o=m\wedge n$.
The remaining part of the proof is straightforward.
\end{proof}

The bounds given by Theorems~\ref{lowerICBound} and \ref{upperICBound}
can be improved in various ways.
In Theorem~\ref{lowerICBound}, if $\gamma=m\wedge n$, then the upper
bound for $\upperRP_W(m\wedge n)$ in Proposition~\ref{perfect} can be
used to improve the upper bound for $\lowerIC_{11}(W)$; if $\gamma=m=n$,
the upper bound for $\lowerIC_{11}(W)$ can be improved by
Proposition~\ref{min.pattern.1} (see Example~\ref{concrete}).
The lower bound for $\lowerIC_{11}(W)$ can also be improved by
$(1-\upperRP_W(1))\vee\capacity(W)$ because
$\capacity(W)\le\lowerIC_{11}(W)$.
However, all these improvements are somewhat ad hoc.
The fundamental problem to be solved is how we can choose $\lambda$ in
order to approach or achieve the lower or the upper intrinsic
capacities.
In particular, based on Theorems~\ref{lowerICBound}, we have the
following conjecture:

\begin{conjecture}
For $\lambda\in\decompose(W)$, if
$\capacity_{11}(\lambda)=\lowerIC_{11}(W)$, then
$\rankProbability_\lambda(1)=\upperRP_W(1)$.
\end{conjecture}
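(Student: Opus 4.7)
The plan is a contradiction argument based on Proposition~\ref{characterizationOfIC}. Suppose $\lambda^*\in\decompose(W)$ attains $\capacity_{11}(\lambda^*)=\lowerIC_{11}(W)$ yet $\rankProbability_{\lambda^*}(1)<\upperRP_W(1)$. By Proposition~\ref{useless}, $\sum_j \lambda^*_{\useless_j} < \sum_j \alpha_j$, so there exists a column $j_0$ with $\lambda^*_{\useless_{j_0}}<\alpha_{j_0}$; picking a row $i_0$ attaining $W_{i_0,j_0}=\alpha_{j_0}$ and using the identity $\alpha_{j_0}=\sum_{D:D_{i_0,j_0}=1}\lambda^*_D$, there is some non-useless $D^*\in\support(\lambda^*)$ with $D^*_{i_0,j_0}=1$, and necessarily $\rank(D^*)\ge 2$ (since a rank-one deterministic channel with a $1$ at $(i_0,j_0)$ must be $\useless_{j_0}$).

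By Proposition~\ref{characterizationOfIC}, to derive the contradiction it suffices to construct a vector $\alpha\in\real^\deterministic$ with $\sum_D \alpha_D D=0$, $\sum_D \alpha_D \log\rank(D)>0$, and $\support((\alpha)_+)\subseteq\support(\lambda^*)$. The core construction I would attempt is a \emph{$2\times 2$ exchange}: since $\rank(D^*)\ge 2$, choose $i_1\ne i_0$ and $j_1\ne j_0$ with $D^*_{i_1,j_1}=1$, and define three companion deterministic channels $D^{(2)}$, $D^{(3)}$, $D^{(4)}$ which agree with $D^{(1)}\eqdef D^*$ outside rows $\{i_0,i_1\}$ and whose restrictions to rows $\{i_0,i_1\}$ and columns $\{j_0,j_1\}$ are, respectively, $\bigl(\begin{smallmatrix}0&1\\1&0\end{smallmatrix}\bigr)$, $\bigl(\begin{smallmatrix}1&0\\1&0\end{smallmatrix}\bigr)$, $\bigl(\begin{smallmatrix}0&1\\0&1\end{smallmatrix}\bigr)$. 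A direct computation yields $D^{(1)}+D^{(2)}=D^{(3)}+D^{(4)}$, and a short rank count shows $\log\rank(D^{(1)})+\log\rank(D^{(2)})>\log\rank(D^{(3)})+\log\rank(D^{(4)})$ except in the degenerate case where both $j_0$ and $j_1$ lie in the image of the rows of $D^*$ outside $\{i_0,i_1\}$; in the generic regime, setting $\alpha_{D^{(1)}}=\alpha_{D^{(2)}}=1$ and $\alpha_{D^{(3)}}=\alpha_{D^{(4)}}=-1$ then yields the desired witness.

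The main obstacle is the positivity condition $\support((\alpha)_+)\subseteq\support(\lambda^*)$, which demands $D^{(2)}\in\support(\lambda^*)$, and this need not hold for an arbitrary choice of $(i_1,j_1)$. I would address this in two complementary ways. First, I would vary the selection of $(j_0,i_0,D^*,i_1,j_1)$ across the necessarily-large family of valid choices (whose cardinality is controlled by the deficit $\upperRP_W(1)-\rankProbability_{\lambda^*}(1)$ together with $|\support(\lambda^*)|$) and try to locate a configuration for which $D^{(2)}\in\support(\lambda^*)$. Second, when no single $2\times 2$ exchange suffices, I would chain several such exchanges together, using Theorem~\ref{birkhoff} applied to the polytope of channels sharing with $W$ certain column-sum constraints to control which swap-companions are forced to lie in $\support(\lambda^*)$, and combining the result with Propositions~\ref{min.pattern.1}--\ref{max.pattern.1} to eliminate stubborn obstructions.

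The cases $m=2$ and $n=2$ serve as a sanity check: in these regimes Theorem~\ref{lowerICBound} yields $\lowerIC_{11}(W)=1-\upperRP_W(1)$, and the trivial bound $\capacity_{11}(\lambda)\ge 1-\rankProbability_\lambda(1)$ valid for any $\lambda\in\decompose(W)$ immediately forces $\rankProbability_{\lambda^*}(1)=\upperRP_W(1)$. A parallel route for the general case would be to establish a rank-distribution-sensitive strengthening of this inequality that remains strict whenever $\rankProbability_\lambda(1)<\upperRP_W(1)$; any such refinement would deliver the conjecture without the intricate combinatorial bookkeeping of the exchange construction above.
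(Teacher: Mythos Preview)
The paper does not prove this statement; it is explicitly presented as an open \emph{conjecture}, with no proof offered. Your proposal is therefore an attempt to resolve an open problem, not a reconstruction of an existing argument.

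Your attack via Proposition~\ref{characterizationOfIC} is natural, and the $2\times 2$ exchange identity $D^{(1)}+D^{(2)}=D^{(3)}+D^{(4)}$ together with the rank inequality is correct in the non-degenerate cases you describe. But the gap you yourself flag---the requirement $D^{(2)}\in\support(\lambda^*)$---is the entire difficulty, and neither of your proposed remedies closes it. Varying $(j_0,i_0,D^*,i_1,j_1)$ gives many candidate exchanges, but you offer no counting or structural argument forcing at least one of the companions $D^{(2)}$ to lie in $\support(\lambda^*)$; a minimizer $\lambda^*$ could in principle have a support that systematically avoids every such swap partner. The chaining idea is even vaguer: invoking Theorem~\ref{birkhoff} controls which column-sum profiles are realizable by deterministic channels, but it says nothing about which deterministic channels actually appear in the support of a \emph{specific} minimizing $\lambda^*$, and Propositions~\ref{min.pattern.1}--\ref{max.pattern.1} only rule out certain support patterns rather than guarantee the presence of the ones you need. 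Your sanity check for $m=2$ or $n=2$ is correct and immediate from Theorem~\ref{lowerICBound}, but precisely because the general bounds in that theorem are not tight, the analogous ``rank-distribution-sensitive strengthening'' you mention at the end would itself amount to proving the conjecture. In short: this is a reasonable outline of where a proof might begin, but it is not a proof, and the paper does not claim one either.
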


\subsection{$\lowerIC_{10}(W)$ and $\upperIC_{10}(W)$}
\label{secIC10}

Although it is difficult to compute $\lowerIC_{10}(W)$ and
$\upperIC_{10}(W)$ in general, their exact values can be
determined in the binary-output case, as is shown by
Theorem~\ref{IC10}.

\begin{proof}[Proof of Theorem~\ref{IC10}]
Since $n=2$, we only need to choose two maps from all the
$2^{|\deterministic|}=2^{2^m}$ maps of $\deterministic$ into
$\fromTo{1}{m}$ for constructing the capacity-achieving distributions.
We denote these two maps by $u$ and $v$.
The optimal strategy for choosing $u,v$ is to maximize $W'_{u,1}$ and
minimize $W'_{v,1}$, where
$W'_{u,y}=\sum_{D\in\deterministic} \lambda_D D_{u(D),y}$.
There are only two classes of deterministic channels in
$\deterministic$, rank $1$ and rank $2$.
For $D$ of rank $1$, it does not matter how to choose the values of $u(D)$
and $v(D)$.
For $D$ of rank $2$, however, we choose $u(D)=i_1$ such that
$D_{i_1,1}=1$ and choose $v(D)=i_2$ such that $D_{i_2,1}=0$.
Then we have
\[
W'_{u,*}
= (1-\lambda_{\useless_2},\lambda_{\useless_2})
\]
and
\[
W'_{v,*}=(\lambda_{\useless_1}, 1-\lambda_{\useless_1}).
\]
By Proposition~\ref{useless}, the maximum of
$\rankProbability_\lambda(1)=\lambda_{\useless_1}
 +\lambda_{\useless_2}$
is $\alpha_1 + \alpha_2$ with each $\alpha_j$ being the maximum of
feasible values of $\lambda_{\useless_j}$, so that
\[
\lowerIC_{10}(W)
= \capacity\left(\begin{pmatrix}
 1-\alpha_2 &\alpha_2\\
 \alpha_1 &1-\alpha_1
 \end{pmatrix}\right).
\]
Observing that these two rows are exactly those of $W$, we further
have $\lowerIC_{10}(W)=\capacity(W)$.
Again by Proposition~\ref{useless}, the minimum $\lowerRP_W(1)$ of
$\rankProbability_\lambda(1)$ is $(g-m+1)_+$.
With no loss of generality, we suppose $g=(1W)_1$.
Then the minima of feasible values of $\lambda_{\useless_1}$ and
$\lambda_{\useless_2}$ are $(g-m+1)_+$ and $0$, respectively, so that
\[
\upperIC_{10}(W)
= \capacity\left(\begin{pmatrix}
 1 &0\\
 \lowerRP_W(1) &1-\lowerRP_W(1)
 \end{pmatrix}\right).\qedhere
\]
\end{proof}

The fact that $\lowerIC_{10}(W)=\capacity(W)$ for binary-output channels is quite intriguing (although it is not true in general when the output is non-binary
(Example~\ref{counterExampleOfIC10})).
It implies that every binary-output channel can be simulated in a certain way that the capacity cannot be increased even when the encoder has causal access to the source of randomness, i.e., the intrinsic state.
The following result shows that, in fact for a fairly broad class of channels, the causal state information at the encoder is useless as far as the capacity is concerned.


\begin{theorem}\label{stateInfoUseless}
Let $W=W'W''$, where $W'$ is a channel with binary output and $W''$ is
a channel with binary input and $W''_{1,*}\ne W''_{2,*}$.
Suppose
\[
W'=\sum_{s\in\mathcal{S}} p_S(s) K^{(s)}
\]
where $S$ denotes the channel state and $p_S$ is its distribution.
The capacity of $W$ cannot be increased by the causal state
information $S$ at the encoder iff all $K^{(s)}$ with $p_S(s)>0$ are
$(i_1,i_2)$-ended for some fixed $i_1$ and $i_2$, where a binary
output channel $K$ is said to be $(i_1,i_2)$-ended if
$K_{i_1,1}=\min_{i}K_{i,1}$ and $K_{i_2,1}=\max_{i}K_{i,1}$.
In other words, all row vectors of $K$ are contained in the line
segment from endpoint $K_{i_1,*}$ to endpoint $K_{i_2,*}$.
\end{theorem}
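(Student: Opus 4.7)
The plan is to reduce both capacities---with and without causal state at the encoder---to a one-dimensional optimization on the parameter $\beta\in[0,1]$ indexing points on the line segment from $W''_{1,*}$ to $W''_{2,*}$. Since $W'$ has binary output, every row of $W=W'W''$ has the form $\alpha_x W''_{1,*}+(1-\alpha_x)W''_{2,*}$ with $\alpha_x:=W'_{x,1}$, and because $W''_{1,*}\ne W''_{2,*}$ the function
\[
\phi(\beta)\;:=\;H\bigl(\beta W''_{1,*}+(1-\beta)W''_{2,*}\bigr)
\]
is strictly concave on $[0,1]$. For any input distribution $p_X$ one then has $I(X;Y)=\phi(\mathrm{E}[\alpha_X])-\mathrm{E}[\phi(\alpha_X)]$, and by strict concavity, for any fixed value of $\mathrm{E}[\alpha_X]$ the quantity $\mathrm{E}[\phi(\alpha_X)]$ is minimized by concentrating $\alpha_X$ on the two extremes $\alpha_{\min}$ and $\alpha_{\max}$. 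Hence $\capacity(W)$ depends only on the interval $[\alpha_{\min},\alpha_{\max}]$ and is strictly increasing under any strict enlargement of this interval.

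For the encoder-state setting I would invoke the standard Shannon-strategy characterization~\cite[Chapter~7]{el_gamal_network_2011}: the capacity equals $\max_{p_T} I(T;Y)$ where $T$ ranges over maps $\mathcal{S}\to\fromTo{1}{m}$ and $P(Y\mid T{=}t)=\beta_t W''_{1,*}+(1-\beta_t)W''_{2,*}$ with $\beta_t:=\sum_s p_S(s)K^{(s)}_{t(s),1}$. The same reduction now applies with $\alpha_x$ replaced by $\beta_t$, so this capacity depends only on $[\beta_{\min},\beta_{\max}]$. Since $t(s)$ may be chosen freely and independently across $s$,
\[
\beta_{\min}=\sum_s p_S(s)\min_x K^{(s)}_{x,1},\qquad \beta_{\max}=\sum_s p_S(s)\max_x K^{(s)}_{x,1},
\]
whereas without state information
\[
\alpha_{\min}=\min_x\sum_s p_S(s)K^{(s)}_{x,1},\qquad \alpha_{\max}=\max_x\sum_s p_S(s)K^{(s)}_{x,1},
\]
and the classical min-of-sum/sum-of-min inequality yields $\beta_{\min}\le\alpha_{\min}\le\alpha_{\max}\le\beta_{\max}$.

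By strict monotonicity of $\capacity$ in the parameter interval, the two capacities coincide iff $\alpha_{\min}=\beta_{\min}$ and $\alpha_{\max}=\beta_{\max}$. A nonnegative-summands argument then shows $\alpha_{\min}=\beta_{\min}$ iff there is a single index $i_1$ satisfying $K^{(s)}_{i_1,1}=\min_x K^{(s)}_{x,1}$ for every $s$ with $p_S(s)>0$, and analogously a single $i_2$ for the maxima---which is precisely the $(i_1,i_2)$-endedness condition. The main technical obstacle is the strict monotonicity step: given a distribution on $\{l_2,u\}$ with mean $\bar\beta$ and any $l_1<l_2$, one must construct a distribution on $\{l_1,u\}$ with the same mean but strictly smaller $\mathrm{E}[\phi]$, which follows because by strict concavity the point $(l_2,\phi(l_2))$ lies strictly above the chord joining $(l_1,\phi(l_1))$ to $(u,\phi(u))$; everything else reduces to routine convex analysis and the Shannon-strategy formula.
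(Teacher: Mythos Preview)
Your argument is correct, and it follows a genuinely different route from the paper. Both proofs start from the Shannon-strategy channel $V$ and exploit the fact that every row of $W$ and of $V$ lies on the one-dimensional segment $\{\beta W''_{1,*}+(1-\beta)W''_{2,*}:\beta\in[0,1]\}$. From there, however, the paper argues via the divergence characterization of capacity-achieving inputs (Gallager's KKT condition, packaged as Propositions~\ref{capacityAchieving.1} and~\ref{capacityAchieving.2}): for sufficiency it shows that a capacity-achieving distribution of $V$ can be supported on the two constant strategies $i_1,i_2$; for necessity it shows that if such a two-point distribution is optimal for $V$, then every row $V_{u,*}$ must lie in the segment $[V_{i_1,*},V_{i_2,*}]$, and then peels off one state at a time by taking $u(t)=i_0$, $u(s)=i_1$ for $s\ne t$. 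Your approach bypasses the divergence criterion entirely: you write $I=\phi(\mathrm{E}[\alpha])-\mathrm{E}[\phi(\alpha)]$ with $\phi$ strictly concave, reduce both capacities to a single functional $C(l,u)=\max_{\bar\alpha\in[l,u]}\bigl(\phi(\bar\alpha)-L_{l,u}(\bar\alpha)\bigr)$ of the endpoint pair, prove $C$ is strictly monotone under interval enlargement, and then compare $[\alpha_{\min},\alpha_{\max}]$ with $[\beta_{\min},\beta_{\max}]$ by the min-of-sums/sum-of-mins inequality. This is more elementary---it needs only strict concavity of the entropy along the segment, not the KKT theorem---and it yields as a byproduct an explicit formula for both capacities in terms of the interval. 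The paper's route, on the other hand, uses off-the-shelf structural lemmas that apply beyond the binary-intermediate setting and makes the role of the extreme rows as support of the optimal input more transparent. Both arguments are complete; the nonnegative-summands step that pins down the common $i_1,i_2$ is identical in the two proofs.
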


\begin{proof}
(Sufficiency)
By \cite[Theorem~7.2 and Remark~7.6]{el_gamal_network_2011}, we
consider the channel
$\map{V}{\fromTo{1}{m}^\mathcal{S}}{\fromTo{1}{n}}$ given by
$V=V'W''$ and
\[
V'_{u,*} = \sum_{s\in\mathcal{S}} p_S(s) K^{(s)}_{u(s),*}.
\]
Because every channel $K^{(s)}$ is $(i_1,i_2)$-ended, it is easy to
show that $V'$ is also $(i_1,i_2)$-ended, where $i_1$ and $i_2$ are
regarded as two constant maps from $\mathcal{S}$ to
$\fromTo{1}{m}$.
Then every row vector of $V$ is contained in the line segment between
$V'_{i_1,*}W''$ and $V'_{i_2,*}W''$, which implies that  $V$ has a
capacity-achieving input probability distribution supported on
$\{i_1,i_2\}$ (Proposition~\ref{capacityAchieving.1}), and consequently
the capacity of $W$ cannot be increased by the causal state
information at the encoder.

(Necessity) If the capacity of $W$ cannot be increased by its causal
state information at the encoder, then a capacity-achieving input
probability distribution of $V$ must have a support, say
$\{i_1,i_2\}$, so that for every map
$\map{u}{\mathcal{S}}{\fromTo{1}{m}}$, the vector
\[
V_{u,*}
= V'_{u,*}W''
= \left(\sum_{s\in\mathcal{S}} p_S(s) K^{(s)}_{u(s),*}\right)W''
\]
is contained in the line segment between $V_{i_1,*}$ and $V_{i_2,*}$
(Proposition~\ref{capacityAchieving.2}), where $i_1$ and $i_2$ are
understood as two constant maps from $\mathcal{S}$ to $\fromTo{1}{m}$.
With no loss of generality, we assume $V'_{i_1,1}\le V'_{i_2,1}$.
For any $t\in\mathcal{S}$ and any $i_0\in\fromTo{1}{m}$, we
can take $u(t)=i_0$ and $u(s)=i_1$ for $s\ne t$, then we get
$V'_{u,1}\ge V'_{i_1,1}$, so that
$K^{(t)}_{i_0,1}\ge K^{(t)}_{i_1,1}$.
Similarly, we have $K^{(t)}_{i_0,1}\le K^{(t)}_{i_2,1}$.
Therefore, every $K^{(s)}$ is $(i_1,i_2)$-ended.
\end{proof}

It can be shown via a perturbation and continuity argument that the uselessness of the causal state information at the encoder is not restricted to the channels covered by
Theorem~\ref{stateInfoUseless}. However, we
have not been able to identify a simple explicit condition under which the sufficiency part of Theorem~\ref{stateInfoUseless}
can be extended. For example, consider a seemingly natural condition postulated by the following conjecture.


\begin{conjecture}
Let $W$ be a channel from $\fromTo{1}{2}$ to $\fromTo{1}{n}$.
Suppose
\[
W=\sum_{s\in\mathcal{S}} p_S(s) K^{(s)},
\]
where $S$ denotes the state of channel.
If for every $1\le j\le n$, $K^{(s)}_{1,j}$ and $K^{(s)}_{2,j}$ have
an order (either $\le$ or $\ge$) independent of $s$, then the capacity
of $W$ cannot be increased by the causal state information available
at the encoder.
\end{conjecture}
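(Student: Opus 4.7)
The plan is to mirror the sufficiency argument of Theorem~\ref{stateInfoUseless}. First, invoke the functional representation lemma (\cite[Theorem~7.2 and Remark~7.6]{el_gamal_network_2011}) to translate the statement into a capacity computation on an expanded channel: the capacity of $W$ with causal state information $S$ at the encoder equals $\capacity(V)$, where $\map{V}{\fromTo{1}{2}^{\mathcal{S}}}{\fromTo{1}{n}}$ is defined by
\[
V_{u,j}=\sum_{s\in\mathcal{S}} p_S(s)\,K^{(s)}_{u(s),j}.
\]
Since the two constant maps $u\equiv 1$ and $u\equiv 2$ yield $V_{u,*}=W_{1,*}$ and $V_{u,*}=W_{2,*}$, we already have $\capacity(V)\ge\capacity(W)$ for free, so the real task is the reverse inequality $\capacity(V)\le\capacity(W)$.

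Second, I would exploit the sign hypothesis by partitioning $\fromTo{1}{n}$ into the column sets $J_+=\set{j}{K^{(s)}_{1,j}\ge K^{(s)}_{2,j}\text{ for every }s}$ and $J_-=\set{j}{K^{(s)}_{1,j}\le K^{(s)}_{2,j}\text{ for every }s}$. A direct calculation then shows that for every $u$ and every $j\in J_+$ the value $V_{u,j}$ lies in the closed interval between $W_{2,j}$ and $W_{1,j}$, with the symmetric statement on $J_-$. The natural continuation is to upgrade this coordinate-wise ``sandwich'' to the stronger line-segment containment $V_{u,*}\in\convexHull\{W_{1,*},W_{2,*}\}$: once that is in hand, Proposition~\ref{capacityAchieving.1} supplies a capacity-achieving input distribution for $V$ supported on the two constant maps, and $\capacity(V)=\capacity(W)$ drops out exactly as in the proof of Theorem~\ref{stateInfoUseless}.

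The main obstacle is precisely this upgrade. Small examples with $n=3$ already exhibit a row $V_{u,*}$ that is coordinate-wise sandwiched between $W_{1,*}$ and $W_{2,*}$ but does \emph{not} lie on the segment joining them, so the clean line-segment route fails as stated. To salvage the argument I would attack $\capacity(V)\le\capacity(W)$ directly through the KKT conditions: let $p^*$ achieve $\capacity(W)$ and set $q^*=p^*W$; viewing $p^*$ as a distribution on the two constant maps $\{u\equiv 1,u\equiv 2\}\subseteq\fromTo{1}{2}^{\mathcal{S}}$ gives $p^*V=q^*$, so it suffices to verify
\[
\divergence{V_{u,*}}{q^*}\le\capacity(W)
\]
for every $u$. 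Since $\capacity(W)=\divergence{W_{i,*}}{q^*}$ for every $i$ in the support of $p^*$, this reduces to a column-by-column KL estimate in which the monotone coupling guaranteed by the sign hypothesis must be leveraged through a log-sum or Jensen argument—effectively constructing, for each $u$ and each column $j$, a suitable convex combination of $W_{1,j}$ and $W_{2,j}$ that dominates $V_{u,j}$ in the appropriate sense. Making this last step go through cleanly is the crux of the problem and the likely reason the statement is left as a conjecture.
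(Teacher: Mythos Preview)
Your proposal cannot be completed because the conjecture is \emph{false}: the paper says so explicitly right after stating it, and supplies the counterexample (Example~\ref{counterExampleOfIC10Conjecture}). There, with $n=3$, $\mathcal{S}=\fromTo{1}{2}$, $p_S=(17/18,1/18)$,
\[
K^{(1)}=\begin{pmatrix}0&0.1&0.9\\0&0.05&0.95\end{pmatrix},\qquad
K^{(2)}=\begin{pmatrix}0.9&0.1&0\\0&0.05&0.95\end{pmatrix},
\]
the sign hypothesis holds column by column (columns $1$ and $2$ satisfy $K^{(s)}_{1,j}\ge K^{(s)}_{2,j}$ for both $s$, column $3$ the reverse), yet the capacity of $W=\sum_s p_S(s)K^{(s)}$ \emph{is} increased by causal encoder side information.

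What is instructive is that the counterexample breaks exactly at the step you isolated as the crux. You reduced everything to verifying the KKT inequality $\divergence{V_{u,*}}{q^*}\le\capacity(W)$ for every $u$, where $q^*=p^*W$ for the capacity-achieving $p^*$. In the example one computes $p^*\approx(0.603,0.397)$, $q^*\approx(0.0198,0.0698,0.9103)$, $\capacity(W)=\divergence{W_{1,*}}{q^*}=\divergence{W_{2,*}}{q^*}\approx 0.02383$; but for the map $u(s)=s$ one gets $V_{u,*}=\tfrac{17}{18}K^{(1)}_{1,*}+\tfrac{1}{18}K^{(2)}_{2,*}\approx(0,0.0972,0.9028)$ and $\divergence{V_{u,*}}{q^*}\approx 0.02465>\capacity(W)$. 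So the very inequality you set out to prove via a ``column-by-column KL estimate'' is violated, and no Jensen/log-sum manoeuvre can rescue it. Your diagnosis that the coordinate-wise sandwich does not force $V_{u,*}$ onto the segment $\convexHull\{W_{1,*},W_{2,*}\}$ was already the warning sign; the example shows that even the weaker KKT conclusion fails.
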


This conjecture is obviously true for $n=2$. Numerical results indicate that it also holds in many cases when
$n>2$. However it turns out to be false in general as shown by
Example~\ref{counterExampleOfIC10Conjecture}.

Theorem~\ref{stateInfoUseless} imposes no restriction on the distribution of the channel state. This universal property motivates us to introduce the following definition.

\begin{definition}\label{def:uslessE}
The state information $S$ of a channel $W(y\mid x,s)$ is said to be
\emph{universally useless at the encoder} if for any $p_S$, the
capacity of $W$ with $S$ causally available at the encoder is equal to
the capacity of $W'(y\mid x)=\sum_s p_S(s)W(y\mid x,s)$.
\end{definition}

This definition is not void in view of Theorem~\ref{stateInfoUseless} (in fact, according to our numerical results, many channels not covered by Theorem~\ref{stateInfoUseless} also satisfy this definition). Now consider the channel model shown in Fig. \ref{fig:modelgnou}, where the channel state $S$ is distributed according to $p_S$, and (noisy) state observations $S_\encoder$ and $S_\decoder$ generated by $S$ through $p_{S_\encoder,S_\decoder|S}$ are causally available  at the encoder and the decoder, respectively. Let $\capacity(W,S_\encoder,S_\decoder,p_{S})$ denote the capacity of this channel model.




\begin{figure}[h!]
	\centering
	\includegraphics[width=10cm]{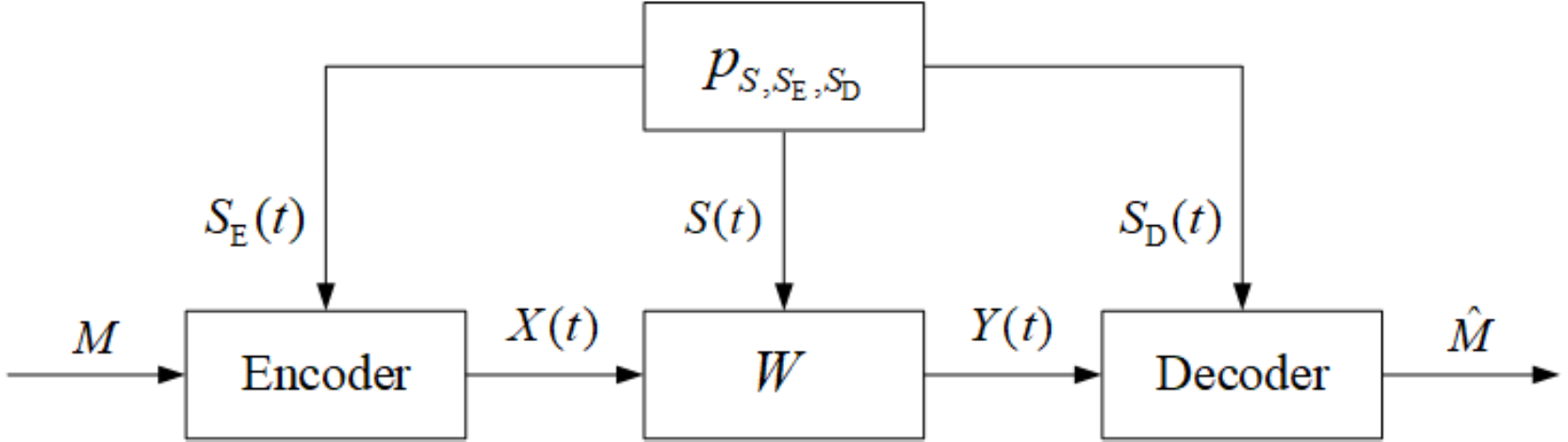}
	\caption{A generalized channel model.}
	\label{fig:modelgnou}
\end{figure}

It is instructive to study the following example (see also Fig. \ref{fig: channel with state}) where
\begin{align}
&W(y\mid x,s)=\left\{
\begin{array}{ll}
\frac{1}{2}, & (x,y,s)=(0,0,0)\mbox{, }(0,1,0)\mbox{, }(1,0,1)\mbox{ or }(1,1,1), \\
0, & (x,y,s)=(1,0,0)\mbox{ or }(0,1,1),\\
1, & (x,y,s)=(1,1,0)\mbox{ or }(0,0,1),
\end{array}
\right.\label{eq:para1}\\
&p_S(0)=p_S(1)=\frac{1}{2}.\label{eq:para2}
\end{align}
For this example, we assume that $p_{S_\encoder|S}$ is a binary symmetric channel with crossover probability $p\in[0,\frac{1}{2}]$, and $p_{S_\decoder|S}$ is a binary symmetric channel with crossover probability $q=0.25$; furthermore, we assume that $p_{S_\encoder|S}$ is physically degraded with respect to $p_{S_\decoder|S}$ when $p\geq q=0.25$, and the other way around when $p\leq q=0.25$. To gain a better understanding, we plot $\capacity(W,S_\encoder,S_\decoder,p_{S})$ against $p$ for $p\in[0,\frac{1}{2}]$ in Fig. \ref{fig: cap}. It turns out that, somewhat counterintuitively, $\capacity(W,S_\encoder,S_\decoder,p_{S})$ is maximized when the encoder side information coincides with the decoder side information (i.e., $p=0.25$) rather than when the encoder has access to the perfect state information $S$ (i.e., $p=0$). As shown by the following theorem, this is in fact a general phenomenon for any channel whose state information is universally useless at the encoder.

\begin{figure}[h!]
	\centering
	\includegraphics[width=10cm]{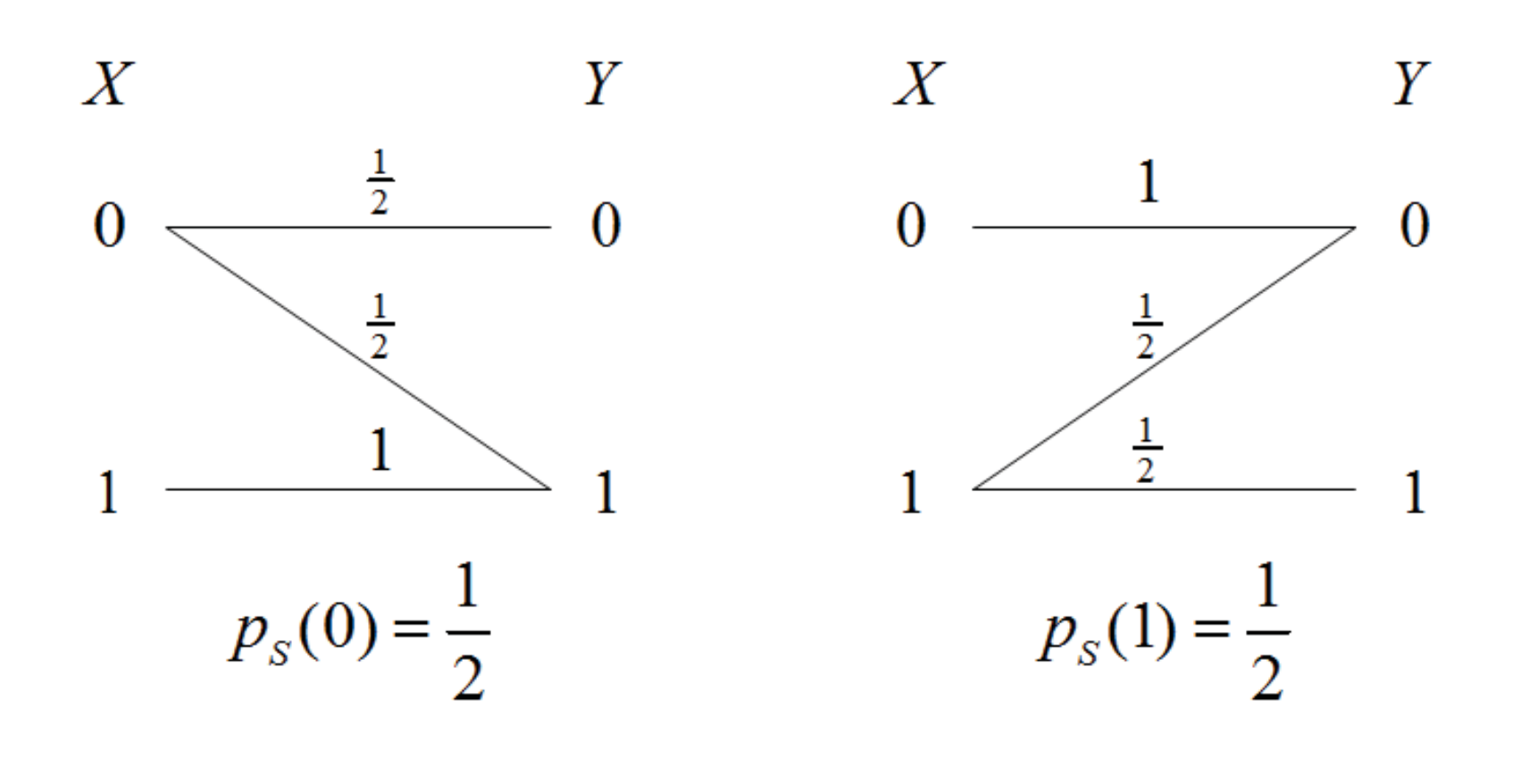}
	\caption{Illustration of $W$ and $p_S$ given by (\ref{eq:para1}) and (\ref{eq:para2}), respectively.}
	\label{fig: channel with state}
\end{figure}

\begin{figure}[h!]
	\centering
	\includegraphics[width=10cm]{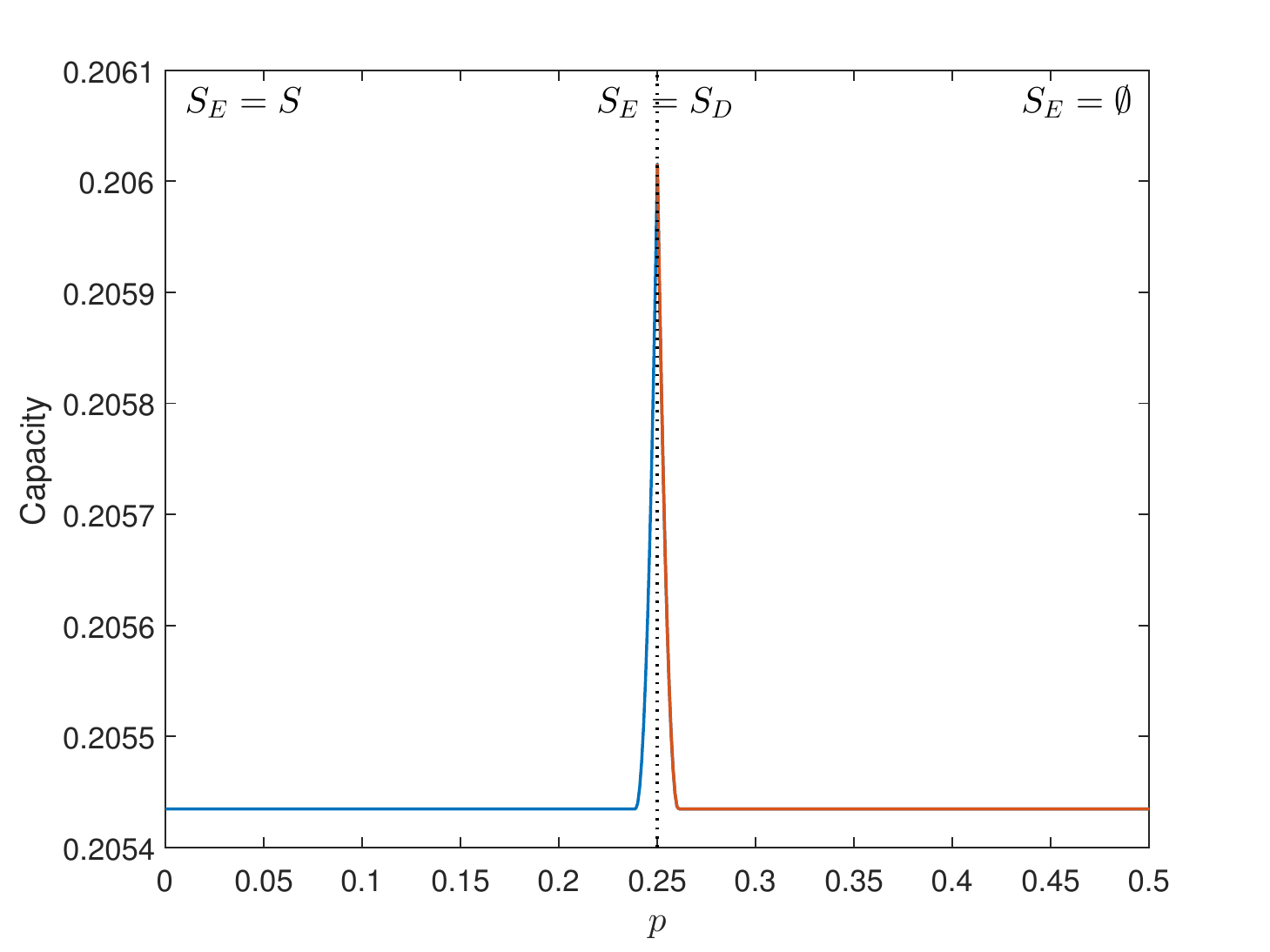}
	\caption{Plot of $\capacity(W,S_\encoder,S_\decoder,p_{S})$ against $p$ for $p\in[0,0.5]$, where $W$ and $p_S$ are given by (\ref{eq:para1}) and (\ref{eq:para2}), respectively.}
	\label{fig: cap}
\end{figure}


\begin{theorem}\label{maxCap}
If the state information of $W$ is universally useless at the encoder,
then $\capacity(W,S_\encoder,S_\decoder,p_{S})$ is maximized when $S_\encoder=S_\decoder$ almost surely (assuming $p_{S,S_\decoder}$ is fixed but $p_{S_\encoder|S,S_\decoder}$ can be arbitrary).
\end{theorem}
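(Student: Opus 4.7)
The plan is to decompose the capacity by conditioning on the decoder side information $S_\decoder$, apply the universal uselessness hypothesis fiber-by-fiber, and then match the resulting upper bound via a time-sharing argument when $S_\encoder = S_\decoder$ almost surely. Throughout, set
\[
W'_{s_d}(y\mid x) \eqdef \sum_{s} p_{S\mid S_\decoder}(s\mid s_d)\, W(y\mid x,s).
\]

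First, I would extend \cite[Theorem~7.2 and Remark~7.6]{el_gamal_network_2011} to causal side information on both sides by treating $(Y, S_\decoder)$ as the augmented channel output. This yields the Shannon-strategy formula
\[
\capacity(W, S_\encoder, S_\decoder, p_S) = \max_{p_U}\, I(U;\, Y \mid S_\decoder),
\]
where $U$ ranges over random functions from the alphabet of $S_\encoder$ to $\mathcal X$, independent of $(S, S_\encoder, S_\decoder)$, and $X = U(S_\encoder)$, $Y \sim W(\cdot\mid X, S)$.

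Next, I would establish the upper bound $\capacity(W, S_\encoder, S_\decoder, p_S) \le \sum_{s_d} p_{S_\decoder}(s_d)\, \capacity(W'_{s_d})$. Writing $I(U;Y\mid S_\decoder)=\sum_{s_d} p_{S_\decoder}(s_d)\, I(U;Y\mid S_\decoder=s_d)$ and bounding each term by the capacity $C_{s_d}$ of the subchannel obtained by conditioning on $S_\decoder=s_d$---a channel with state $S\sim p_{S\mid S_\decoder}(\cdot\mid s_d)$ in which the encoder observes $S_\encoder$ drawn from $p_{S_\encoder\mid S, S_\decoder=s_d}$---reduces the task to showing $C_{s_d}=\capacity(W'_{s_d})$. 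The inequality $C_{s_d}\ge\capacity(W'_{s_d})$ follows from constant Shannon strategies that ignore $S_\encoder$. For the reverse, the encoder can internally generate an independent copy $\tilde S_\encoder$ with the same conditional law given $S$, so that having access to $(S, S_\encoder)$ is no more useful than having access to $S$ alone; applying the universal uselessness hypothesis (Definition~\ref{def:uslessE}) to the conditional distribution $p_{S\mid S_\decoder}(\cdot\mid s_d)$ then yields capacity $\capacity(W'_{s_d})$ in the latter case, hence $C_{s_d}\le\capacity(W'_{s_d})$.

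Finally, when $S_\encoder = S_\decoder$ almost surely, both terminals observe $S_\decoder$, and a standard time-sharing construction that applies a code of rate $\capacity(W'_{s_d})$ on the channel uses with $S_\decoder=s_d$ attains $\sum_{s_d} p_{S_\decoder}(s_d)\, \capacity(W'_{s_d})$. This matches the upper bound and yields $\capacity(W, S_\encoder, S_\decoder, p_S) \le \capacity(W, S_\decoder, S_\decoder, p_S)$, proving the theorem. The main obstacle is the first step---verifying the Shannon-strategy formula with causal side information on both sides---while the second step rests on the ``sandwich'' argument around universal uselessness and the third is routine.
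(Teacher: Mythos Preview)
Your proposal is correct and follows the same essential skeleton as the paper: condition on $S_\decoder$, apply the universal-uselessness hypothesis to each conditional state distribution $p_{S\mid S_\decoder=s_d}$, and observe that the resulting bound $\sum_{s_d} p_{S_\decoder}(s_d)\,\capacity(W'_{s_d})$ is attained when $S_\encoder=S_\decoder$.

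The paper's execution is, however, noticeably more streamlined. Rather than invoking the Shannon-strategy single-letter formula and then bounding $I(U;Y\mid S_\decoder)$ term by term, the paper works entirely at the operational capacity level: it first observes the monotonicity $\capacity(W,S_\encoder,S_\decoder,p_S)\le\capacity(W,(S,S_\decoder),S_\decoder,p_S)$ (since any $S_\encoder$ is a degraded version of $(S,S_\decoder)$), and then, because both terminals now share $S_\decoder$, decomposes directly as $\sum_{s_d} p_{S_\decoder}(s_d)\,\capacity(W,S,\emptyset,p_{S\mid S_\decoder=s_d})$. Universal uselessness turns each summand into $\capacity(W,\emptyset,\emptyset,p_{S\mid S_\decoder=s_d})$, and the same decomposition in reverse identifies the total with $\capacity(W,S_\decoder,S_\decoder,p_S)$. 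This bypasses entirely what you flagged as the ``main obstacle''---verifying the two-sided Shannon-strategy formula---and also absorbs your sandwich argument (simulating $\tilde S_\encoder$ from $S$) into the single monotonicity step $S_\encoder\preceq(S,S_\decoder)$. Your route is sound, but the paper's shows that the heavier machinery is unnecessary here.
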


\begin{proof}
It is clear that among all possible forms of encoder side information $S_\encoder$, $\capacity(W,S_\encoder,S_\decoder,p_{S})$ is maximized
when $S_E=(S,S_\decoder)$ (since any other form of $S_E$ can be viewed as its degenerate version), i.e.,
\[
\capacity(W,S_\encoder,S_\decoder,p_{S})
\le \capacity(W,(S,S_\decoder),S_\decoder,p_{S}).
\]
Note that
\begin{eqnarray*}
\capacity(W,(S,S_\decoder),S_\decoder,p_{S})
&= &\sum_{s_\decoder} p_{S_\decoder}(s_\decoder)
 \capacity(W,S,\emptyset,p_{S\mid S_\decoder=s_\decoder})\\
&\eqvar{(a)} &\sum_{s_\decoder} p_{S_\decoder}(s_\decoder)
 \capacity(W,\emptyset,\emptyset,p_{S\mid S_\decoder=s_\decoder})\\
&= &\capacity(W,S_\decoder,S_\decoder,p_{S}),
\end{eqnarray*}
where (a) follows from the universal-uselessness property of the state
information of $W$, and the constant $\emptyset$ means no information. This completes the proof.
\end{proof}

Roughly speaking, Theorem~\ref{maxCap} implies that, for the class of channels satisfying Definition \ref{def:uslessE}, what the encoder really needs to know is not the state information, but the decoder's knowledge of the state information; in other words, for such channels, it is important to maintain consensus between the encoder and the decoder. It is also worth noting that Theorem~\ref{maxCap} reduces to Definition \ref{def:uslessE} when there is no decoder side information.


Another surprising phenomenon revealed by Fig. \ref{fig: cap} is that, as $p$ moves away from $0.25$, the capacity not only decreases but actually drops to the value corresponding to the no encoder side information case once $p$ passes certain thresholds. Again, such a phenomenon is not confined to that specific example. An investigation of this phenomenon in the context where the encoder side information is a degenerate version of the decoder side information can be found in \cite{xu_when_2017}.




Similar to Theorem~\ref{IC10}, we can also determine the exact values
of $\lowerIC_{01}(W)$ and $\upperIC_{01}(W)$ when the input is binary.
In this case, we have
$\capacity_{01}(\lambda)=\capacity_{11}(\lambda)$ for all
$\lambda\in\decompose(W)$, so that $\lowerIC_{01}(W)=\lowerIC_{11}(W)$
and $\upperIC_{01}(W)=\upperIC_{11}(W)$ (see Proposition~\ref{IC01}
and Appendix~\ref{resultIC01}).
The general case of $\lowerIC_{01}(W)$ and $\upperIC_{01}(W)$ is
however quite difficult.
Currently, we only know that $\upperIC_{01}(W)=\upperIC_{11}(W)$ does not
hold in general (Proposition~\ref{counterExampleOfIC01}).

\section{Conclusion}

We have studied the lower and the upper intrinsic capacities of a channel $W$, denoted by $\lowerIC_f(W)$ and
$\upperIC_f(W)$,  for three different scenarios ($f=10,01,11$) in terms of the availability of the causal state
information at the encoder and/or the decoder. Their values are determined in almost all cases when the input or the
output are binary, with only two exceptions (which are the binary-input
nonbinary-output channels for $f=10$ and the nonbinary-input
binary-output channels for $f=01$). A deeper understanding of the relevant optimization problems (especially the structure of $\decompose(W)$) is needed for further progress.

The lower and the upper intrinsic capacities are inherent properties of a channel with clear operational meanings. In particular, they characterize the potential capacity gains that can be achieved with a direct access to the generator of channel randomness by the encoder and/or the decoder. More generally, the notion of intrinsic capacity provides a useful perspective for studying the values of encoder and decoder side information. For example, our analysis of $\lowerIC_{10}(W)$ reveals that for a broad class of channels, the capacity is not necessarily maximized when the encoder has access to the perfect state information. We believe that this surprising finding is just the tip of the iceberg, and this line of research can be fruitfully pursued to uncover many previously unknown phenomena.

\appendix

\section{The Structure of $\decompose(W)$}
\label{structureOfDec}

\begin{theorem}\label{convexDec}
The set $\decompose(W)$ is a bounded, closed convex polytope.
For each $f\in\{10,01,11\}$, $\intrinsic_f(W)$ is a closed interval
and $\upperIC_f(W)$ can be attained at some vertex of $\decompose(W)$.
Furthermore, $\lowerIC_{11}(W)$ can also be attained at some vertex of
$\decompose(W)$.
\end{theorem}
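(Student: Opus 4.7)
The plan is to handle the four claims in sequence, reducing each one to standard facts about convex polytopes, continuous functions on compact sets, and convex functions.

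First I would verify that $\decompose(W)$ is a bounded, closed convex polytope. The simplex $\probability_\deterministic$ is itself a bounded, closed convex polytope in $\real^\deterministic$, and the constraint $W=\sum_{D} \lambda_D D$ is a finite system of linear equalities (one per entry of the matrix $W$). Hence $\decompose(W)$ is the intersection of a polytope with an affine subspace, which is again a bounded, closed convex polytope; in particular it is compact and connected.

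Next I would show that each $\capacity_f$ is continuous on $\decompose(W)$, so that $\intrinsic_f(W)=\capacity_f(\decompose(W))$ is a continuous image of a compact connected set, i.e.\ a closed interval. For $f=11$ this is immediate since $\capacity_{11}(\lambda)=\sum_D \lambda_D\log\rank(D)$ is linear, hence continuous. For $f\in\{10,01\}$, the map $(\lambda,\mu)\mapsto J_f(\lambda,\mu)$ is jointly continuous (mutual information depends continuously on $\mu$ and on the channel matrix, and the effective channel matrix entries are linear functions of $\lambda$), and the maximization domain of $\mu$ is a compact simplex independent of $\lambda$, so by the standard maximum argument $\capacity_f(\lambda)=\max_\mu J_f(\lambda,\mu)$ is continuous in $\lambda$.

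Then I would argue that for every $f\in\{10,01,11\}$ the function $\capacity_f$ is actually \emph{convex} on $\decompose(W)$, so that its maximum over the polytope is attained at a vertex. The key input is that $I(\mu,W)$ is convex in $W$ for fixed $\mu$ (writing it as $D(p_X p_{Y\mid X}\,\|\,p_X p_Y)$ and invoking joint convexity of KL divergence). Then $J_{10}(\lambda,\mu)$ is convex in $\lambda$ for fixed $\mu$ because the effective channel $(\sum_D \lambda_D D_{u(D),y})$ is affine in $\lambda$; similarly $J_{01}(\lambda,\mu)=\sum_D \lambda_D I(\mu,D)$ is linear (hence convex) in $\lambda$ for fixed $\mu$; and $J_{11}$ is linear. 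Taking the pointwise supremum over $\mu$ preserves convexity, so each $\capacity_f$ is convex. Because a convex function on a convex polytope attains its maximum at a vertex (any point is a convex combination of vertices, so its value is bounded by the largest vertex value), $\upperIC_f(W)$ is attained at a vertex of $\decompose(W)$ for every $f$.

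Finally, for $\lowerIC_{11}(W)$ I would use the stronger fact that $\capacity_{11}$ is linear in $\lambda$: a linear function attains both its minimum and its maximum at a vertex of any convex polytope, so $\lowerIC_{11}(W)$ is also attained at a vertex. The same reasoning does not apply to $\lowerIC_{10}$ or $\lowerIC_{01}$ because minimizing a convex (but not affine) function over a polytope need not place the optimizer at a vertex; the main conceptual point of the proof is simply the observation that $\capacity_{11}$ is linear while $\capacity_{10}$ and $\capacity_{01}$ are merely convex, which is what separates the four claims. No step is expected to be a serious obstacle; the only subtlety is correctly identifying the convexity of $I(\mu,W)$ in the channel argument, which underpins the vertex-attainment of all three upper intrinsic capacities.
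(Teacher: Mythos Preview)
Your proposal is correct and follows essentially the same approach as the paper: the paper also establishes that $\decompose(W)$ is a polytope by definition, that each $\capacity_f$ is continuous and convex (linear for $f=11$) via the convexity of $I(\mu,W)$ in $W$ and the supremum-preserves-convexity argument, and then invokes the standard fact that a convex function on a polytope is maximized at a vertex (and a linear one also minimized there). The paper packages the continuity and convexity claims into auxiliary propositions (Propositions~\ref{propertyOfJ} and \ref{continuityOfC}) and cites \cite[Theorem~2.7.4]{cover_elements_2006} and \cite[Proposition~3.4.1]{bertsekas_convex_2003}, but the substance is identical to what you wrote.
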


\begin{proof}
By definition, it is clear that $\decompose(W)$ is a bounded, closed
convex polytope, so that $\intrinsic_f(W)$ is a closed interval
(Proposition~\ref{continuityOfC}).
It is also easy to see that $\capacity_f(\lambda)$ attains its maximum
$\upperIC_f(W)$ at some vertex of $\decompose(W)$ and that
$\capacity_{11}(\lambda)$ attains its minimum $\lowerIC_{11}(W)$ at
some vertex of $\decompose(W)$ (Proposition~\ref{continuityOfC} and
\cite[Proposition~3.4.1]{bertsekas_convex_2003}).
\end{proof}

In light of Theorem~\ref{convexDec}, we proceed to study the
structure of $\decompose(W)$ with a focus on its vertices.
Our approach is analogous to \cite{jurkat_extremal_1968}.

\begin{proposition}\label{decVertex1}
Let
\[
\mathfrak{S}
= \set{\support(\alpha)}{\alpha\in\real^\deterministic,
 \sum_{D\in\deterministic} \alpha_D D = 0}
\]
or
\[
\set{\support(\alpha)}{\alpha\in\real^\deterministic, \alpha I = 0},
\]
where
\begin{equation}\label{incidenceMatrix}
I
\eqdef (I_{D,(i,j)})_{D\in\deterministic,
 (i,j)\in\fromTo{1}{m}\times\fromTo{1}{n}}
= (D_{i,j})_{D\in\deterministic,
 (i,j)\in\fromTo{1}{m}\times\fromTo{1}{n}}
\end{equation}
is called the incidence matrix.
A probability distribution $\lambda\in\decompose(W)$ is a vertex iff
for $S\in\mathfrak{S}$, $S\subseteq\support(\lambda)$ implies
$S=\emptyset$, or in other words, iff $\rank(I_{S,*})=\size{S}$.
\end{proposition}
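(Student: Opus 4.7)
The plan is to establish the characterization by the standard perturbation argument for vertices of convex polytopes, combined with a small bookkeeping observation about row sums. First I would note that the two presentations of $\mathfrak{S}$ are tautologically the same: the incidence matrix $I$ has row $D$ equal to the flattening of $D\in\deterministic$, so $\sum_{D} \alpha_D D = 0$ as an $m\times n$ matrix is literally the same equation as $\alpha I = 0$ as a row vector over $\fromTo{1}{m}\times\fromTo{1}{n}$. The technical key that lets us move inside $\decompose(W)$ is the observation that any $\alpha$ with $\sum_D \alpha_D D = 0$ automatically satisfies $\sum_D \alpha_D = 0$: fix any row $i$ and apply $\vt{1}^{\mathsf T}$ on the right, noting that each $D$ is stochastic so $D\transpose{\vt{1}}=\transpose{\vt{1}}$ and hence $0 = \sum_D \alpha_D (D\transpose{\vt{1}})_i = \sum_D \alpha_D$.

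For the direction ``not a vertex $\Rightarrow$ such a witness $\alpha$ exists'', I would suppose $\lambda = \tfrac{1}{2}\mu + \tfrac{1}{2}\nu$ with $\mu\ne\nu$ and $\mu,\nu\in\decompose(W)$, and set $\alpha \eqdef \mu - \nu$. Then $\alpha\ne 0$, $\sum_D \alpha_D D = W - W = 0$, and since $\lambda_D = 0$ together with $\mu_D,\nu_D\ge 0$ forces $\mu_D = \nu_D = 0$, we get $\support(\alpha)\subseteq\support(\lambda)$. Thus $S\eqdef\support(\alpha)$ is a nonempty element of $\mathfrak{S}$ contained in $\support(\lambda)$.

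For the converse, suppose there is a nonzero $\alpha$ with $\sum_D \alpha_D D = 0$ and $\support(\alpha)\subseteq\support(\lambda)$. Since $\lambda_D>0$ for every $D\in\support(\alpha)$, I can pick $\epsilon>0$ small enough that $\lambda\pm\epsilon\alpha$ both have nonnegative entries. The row-sum observation gives $\sum_D(\lambda_D\pm\epsilon\alpha_D) = 1$, and $\sum_D(\lambda_D\pm\epsilon\alpha_D)D = W$, so both perturbations lie in $\decompose(W)$; since $\alpha\ne 0$ they are distinct, and $\lambda$ is their midpoint, so $\lambda$ is not a vertex. Combining the two directions yields the asserted equivalence.

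Finally, rephrasing in terms of rank is pure linear algebra: writing $S = \support(\lambda)$, the existence of a nonzero $\alpha$ with $\support(\alpha)\subseteq S$ and $\alpha I = 0$ is equivalent to the rows of $I$ indexed by $S$ being linearly dependent, i.e.\ $\rank(I_{S,*})<\size{S}$; negating gives the stated condition $\rank(I_{S,*})=\size{S}$. The only even mild obstacle is the row-sum bookkeeping that keeps the perturbations inside the probability simplex; everything else is the classical extreme-point criterion for polyhedra defined by affine equalities and nonnegativity.
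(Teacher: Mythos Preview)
Your proof is correct and follows essentially the same approach as the paper: both use the row-sum identity (the paper calls it \eqref{hiddenEquation}) to ensure perturbations stay on the simplex, and both prove the two directions by the standard ``subtract two decompositions'' and ``perturb by $\pm\epsilon\alpha$'' arguments. The only cosmetic difference is that the paper writes the non-vertex as a general convex combination $t\beta+(1-t)\gamma$ rather than a midpoint, which makes no material difference.
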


\begin{proof}
Note that for every $i\in\fromTo{1}{m}$,
\begin{equation}\label{hiddenEquation}
\sum_{D\in\deterministic} \alpha_D
= \sum_{j=1}^n \sum_{D\in\deterministic} \alpha_D D_{i,j}.
\end{equation}

(Sufficiency) If $\lambda=t\beta+(1-t)\gamma$ for some
$\beta,\gamma\in\decompose(W)$ and some $0<t<1$, then
$\beta-\gamma=(\lambda-\gamma)/t$ and
$\support(\gamma)\subseteq\support(\lambda)$, so that
$\support(\beta-\gamma)\in\mathfrak{S}$ and
$\support(\beta-\gamma)\subseteq\support(\lambda)$, hence
$\support(\beta-\gamma)=\emptyset$,
and therefore $\lambda=\beta=\gamma$ is a vertex.

(Necessity) For every nonempty $S\in\mathfrak{S}$, there is a vector
$\alpha\in\real^\deterministic$ such that $\support(\alpha)=S$ and
$\sum_D\alpha_DD=0$.
Let $\beta=\lambda+t\alpha$ and $\gamma=\lambda-t\alpha$ with
$t\ne 0$, so that $\lambda=(\beta+\gamma)/2$ with $\beta\ne\gamma$.
Since $\lambda$ is a vertex, $\beta$ and $\gamma$ must not be elements
of $\decompose(W)$ for all $t\ne 0$, or equivalently,
$S\not\subseteq\support(\lambda)$.
\end{proof}

Below are several easy consequences of Proposition~\ref{decVertex1}.

\begin{proposition}\label{decVertex2}
Let
\[
\mathfrak{T}
= \set{\support(\alpha)}{\alpha\in\decompose(W)}.
\]
A probability distribution $\lambda\in\decompose(W)$ is a vertex iff
$\support(\lambda)$ is minimal in $\mathfrak{T}$, where a minimal
pattern in $\mathfrak{T}$ is a set $T\subseteq\deterministic$ such
that $T=\support(\alpha)$ for some $\alpha\in\decompose(W)$ and for
every $\beta\in\decompose(W)$, $\support(\beta)\subseteq T$ implies
$\beta=\alpha$.
\end{proposition}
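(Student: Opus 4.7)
The plan is to deduce both implications directly from Proposition~\ref{decVertex1}, using the elementary observation that the difference of any two elements of $\decompose(W)$ is annihilated by the linear map $\alpha\mapsto\sum_{D}\alpha_D D$, so its support lies in $\mathfrak{S}$.

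For the direction ``vertex $\Rightarrow$ minimal support'', I would take an arbitrary $\beta\in\decompose(W)$ with $\support(\beta)\subseteq\support(\lambda)$ and set $\alpha=\lambda-\beta$. Then $\sum_{D}\alpha_D D=W-W=0$, so $\support(\alpha)\in\mathfrak{S}$, and of course $\support(\alpha)\subseteq\support(\lambda)$. By Proposition~\ref{decVertex1}, $\lambda$ being a vertex forces $\support(\alpha)=\emptyset$, i.e.\ $\beta=\lambda$. This is exactly the uniqueness condition defining a minimal pattern, so $T=\support(\lambda)\in\mathfrak{T}$ is minimal.

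For the converse, I would argue the contrapositive through Proposition~\ref{decVertex1}. Suppose $\lambda$ is not a vertex; by Proposition~\ref{decVertex1} there exists a nonempty $S\in\mathfrak{S}$ with $S\subseteq\support(\lambda)$, realized by some $\alpha\in\real^\deterministic$ with $\support(\alpha)=S$ and $\sum_{D}\alpha_D D=0$. Consider the perturbation $\beta=\lambda-\epsilon\alpha$. For $\epsilon>0$ sufficiently small, $\beta$ is entrywise non-negative because $\support(\alpha)\subseteq\support(\lambda)$; its entries sum to $1$ because $\sum_{D}\alpha_D=0$, which is obtained by summing the $n$ scalar equations in any fixed row $i$ of the matrix identity $\sum_{D}\alpha_D D=0$ (since $\sum_{j}D_{i,j}=1$ for every deterministic channel $D$, cf.\ \eqref{hiddenEquation}); finally $\sum_{D}\beta_D D=W$ by construction. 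Hence $\beta\in\decompose(W)$, $\beta\ne\lambda$, and $\support(\beta)\subseteq\support(\lambda)$, which contradicts the minimality of $\support(\lambda)$.

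The only mild subtlety is the identity $\sum_{D}\alpha_D=0$ used in the perturbation, but this is an immediate consequence of the row-sum-one property of deterministic channels, so there is no real obstacle. The proposition is in essence a restatement of Proposition~\ref{decVertex1} that trades the algebraic kernel condition for a combinatorial minimality condition on supports.
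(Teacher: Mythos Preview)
Your proof is correct and is exactly the kind of argument the paper has in mind: the paper gives no explicit proof, merely listing Proposition~\ref{decVertex2} as one of ``several easy consequences of Proposition~\ref{decVertex1}'', and your two directions carry out precisely that deduction. The only nontrivial ingredient, the identity $\sum_D\alpha_D=0$ needed for the perturbation, is just \eqref{hiddenEquation} as you note.
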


\begin{proposition}\label{decVertex3}
If $\lambda\in\decompose(W)$ is a vertex, then
\[
\weight(\lambda)\le \weight(W)-m+1.
\]
\end{proposition}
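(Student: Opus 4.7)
The plan is to reduce the bound to a rank estimate on a suitable submatrix of the incidence matrix~$I$ defined in \eqref{incidenceMatrix}, using the vertex criterion from Proposition~\ref{decVertex1}. Set $S\eqdef\support(\lambda)$ and let $T\eqdef\support(W)\subseteq\fromTo{1}{m}\times\fromTo{1}{n}$, so $\size{T}=\weight(W)$. Since $W=\sum_{D\in\deterministic}\lambda_D D$ is a sum with nonnegative coefficients, for every $D\in S$ and every $(i,j)\notin T$ we have $D_{i,j}=0$. Therefore the columns of $I_{S,*}$ indexed outside $T$ vanish, and $\rank(I_{S,*})=\rank(I_{S,T})$. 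By Proposition~\ref{decVertex1}, $\lambda$ being a vertex is equivalent to $\rank(I_{S,T})=\size{S}=\weight(\lambda)$, so it suffices to prove $\rank(I_{S,T})\le\weight(W)-m+1$.

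To get this rank bound I would exhibit $m-1$ explicit linearly independent vectors in the kernel of $I_{S,T}$ (acting on column vectors in $\real^T$). For each $i\in\fromTo{1}{m}$, let $v^{(i)}\in\real^T$ be the indicator vector of $\{(i,j)\in T\colon j\in\fromTo{1}{n}\}$. Because every row of $W$ is a probability vector, each of these sets is nonempty, and clearly their supports are pairwise disjoint across~$i$. For any $D\in S$, the row-sum identity $\sum_{j}D_{i,j}=1$ combined with $D_{i,j}=0$ for $(i,j)\notin T$ gives
\[
(I_{S,T}\transpose{v^{(i)}})_D
=\sum_{j\colon(i,j)\in T}D_{i,j}
=1,
\]
so $I_{S,T}\transpose{v^{(i)}}=\vt{1}_S$ is independent of~$i$. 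Hence $v^{(1)}-v^{(2)},\ldots,v^{(1)}-v^{(m)}$ all lie in $\ker(I_{S,T})$. Their linear independence follows immediately from the disjoint-support structure: evaluating $\sum_{i=2}^m c_i(v^{(1)}-v^{(i)})$ on a coordinate of the form $(i_0,j)\in T$ with $i_0\ge 2$ forces $c_{i_0}=0$.

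Combining rank-nullity with the vertex criterion then closes the argument:
\[
\weight(\lambda)
=\size{S}
=\rank(I_{S,T})
\le \size{T}-(m-1)
=\weight(W)-m+1.
\]
There is no real obstacle here; the only thing to verify carefully is that the $m-1$ kernel vectors are genuinely independent, and the pairwise disjoint supports of the $v^{(i)}$ make this automatic. The rows of~$W$ being probability vectors (hence nonzero) is what guarantees that each $v^{(i)}$ is itself nonzero, so the $m-1$ independent relations among columns are not lost.
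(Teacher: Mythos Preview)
Your proof is correct and follows essentially the same approach as the paper's sketch: both restrict the incidence matrix to the columns indexed by $\support(W)$ and then exploit the $m$ row-sum identities $\sum_j D_{i,j}=1$ (the paper's ``hidden equation''~\eqref{hiddenEquation}) to produce $m-1$ linear dependencies among those columns, yielding the rank bound $\weight(W)-m+1$. Your version simply makes the kernel vectors $v^{(1)}-v^{(i)}$ explicit and verifies their independence, whereas the paper only sketches the count of independent equations.
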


\begin{proof}[Sketch of Proof]
Because of \eqref{hiddenEquation}, the equations $\alpha I=0$ have at
most $m(n-1)+1$ linearly independent equations.
This number can be further reduced to $\weight(W)-m+1$ by utilizing
the information of $W$, because all the variables $\alpha_D$ with
$D_{i,j}=1$ must be zero if the equation
$\sum_{D\in\deterministic} \alpha_D D_{i,j}=W_{i,j}=0$.
The remaining part of the proof is then straightforward.
\end{proof}

Proposition~\ref{decVertex3} provides an upper bound for the support
size of a vertex in $\decompose(W)$.
On the other hand, the following result provides a lower bound for the
support size of points in $\decompose(W)$, including all the vertices
of $\decompose(W)$.

\begin{proposition}\label{decVertex3aux}
For any $\lambda\in\decompose(W)$,
\[
\weight(\lambda)
\ge \ceil{\log_2 s} \vee \weight(W_{1,*}) \vee \cdots
 \vee \weight(W_{m,*}),
\]
where
$s=\size{\{W_{i,j}\}_{i\in\fromTo{1}{m},j\in\fromTo{1}{n}}}$.
\end{proposition}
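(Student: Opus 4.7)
The plan is to establish the two components of the max separately. Writing $k = \weight(\lambda) = \size{\support(\lambda)}$, I need to show both $k \ge \weight(W_{i,*})$ for each row $i$ and $k \ge \ceil{\log_2 s}$.

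For the row-weight bound, I would exploit the fact that for each deterministic channel $D\in\deterministic$ and each row index $i$, the row $D_{i,*}$ has exactly one nonzero entry, say at column $c_i(D)$. Since $W = \sum_{D\in\support(\lambda)} \lambda_D D$ with $\lambda_D>0$, an entry $W_{i,j}$ is positive iff $j=c_i(D)$ for at least one $D\in\support(\lambda)$. Therefore $\support(W_{i,*}) = \{c_i(D) : D\in\support(\lambda)\}$, whose cardinality is at most $k$. Hence $\weight(W_{i,*}) \le k$.

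For the $\ceil{\log_2 s}$ bound, I would use a counting argument on the possible values of the entries. Each entry satisfies
\[
W_{i,j} = \sum_{D\in\support(\lambda)} \lambda_D D_{i,j} = \sum_{D\in T_{i,j}} \lambda_D,
\quad \text{where } T_{i,j} \eqdef \{D\in\support(\lambda) : D_{i,j}=1\}.
\]
Thus $W_{i,j}$ is completely determined by the subset $T_{i,j}\subseteq\support(\lambda)$. Since there are at most $2^k$ such subsets, the number of distinct values among the entries of $W$ satisfies $s\le 2^k$, giving $k\ge\log_2 s$, and integrality yields $k\ge\ceil{\log_2 s}$.

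Combining the two bounds gives the claim. Neither step poses a real obstacle; the main subtlety is recognizing that the $\ceil{\log_2 s}$ bound comes from the fact that each entry of $W$ is literally a subset-sum of the support values $\{\lambda_D\}$, so the number of distinct subset-sums caps the number of distinct entries. No deeper structural fact about $\decompose(W)$ (such as vertex characterizations from Propositions~\ref{decVertex1}--\ref{decVertex3}) is required, since the bound holds for arbitrary $\lambda\in\decompose(W)$, not only for vertices.
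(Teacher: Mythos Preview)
Your proposal is correct and matches the paper's own proof essentially line for line: the row-weight bound via the observation that each $D\in\support(\lambda)$ contributes exactly one nonzero column per row (dually, the paper phrases this as the disjointness of the sets $\mathfrak{D}_{i,j}=\{D:D_{i,j}=1\}$ for fixed $i$), and the $\ceil{\log_2 s}$ bound via the subset-sum counting argument. No changes are needed.
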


\begin{proof}
By the definition of $\decompose(W)$, we have
\[
W_{i,j}=\sum_{D\in\deterministic}\lambda_DD_{i,j}.
\]
Since $D_{i,j}$ is either $0$ or $1$, the right-hand side can yield at
most $2^{\weight(\lambda)}$ different values, so that
\[
2^{\weight(\lambda)}
\ge s
= \size{\{W_{i,j}\}_{i\in\fromTo{1}{m},j\in\fromTo{1}{n}}}.
\]
or $\weight(\lambda)\ge \ceil{\log_2 s}$.

On the other hand, every equation
\[
\sum_{D\in\deterministic}\lambda_DD_{i,j}=W_{i,j}>0
\]
must have at least one positive $\lambda_D$ for some
\[
D\in\mathfrak{D}_{i,j}=\set{D\in\deterministic}{D_{i,j}=1}.
\]
Since for every $i$, the sets $\mathfrak{D}_{i,1}$,
$\mathfrak{D}_{i,2}$, \ldots, $\mathfrak{D}_{i,n}$ are mutually
disjoint, we conclude that $\weight(\lambda)\ge\weight(W_{i,*})$.
\end{proof}

\begin{algorithm}\label{vertexAlgorithm}
Let $f$ be an arbitrary one-to-one map of $\fromTo{1}{n^m}$ onto
$\deterministic$.
The following algorithm with $W$ and $f$ as arguments can yield a
vertex of $\decompose(W)$.
\textnormal{
\begin{algorithmic}
\Function{vertex}{$W,f$}
 \State $\lambda\gets 0$,
  $K\gets W$, $i\gets 1$
 \While {$K\ne 0$ and $1\le i\le n^m$}
  \State $D\gets f(i)$
  \State $\lambda_D\gets\min_{1\le r\le m} K_{r,D(r)}$
  \State $K\gets K-\lambda_DD$
  \State $i\gets i+1$
 \EndWhile
  \State \Return $\lambda$
\EndFunction
\end{algorithmic}
}
\end{algorithm}

\begin{proof}[Sketch of Proof]
Let $\lambda$ be the vertex output by the algorithm.
Let $S=\support(\lambda)$.
Then by checking Algorithm~\ref{vertexAlgorithm}, it is easy to
verify that for every $D\in S$, there exists an $i\in\fromTo{1}{m}$
such that $I_{D,(i,D(i))}=1$ and $I_{D',(i,D(i))}=0$ for all
$D'\in S$ with $f^{-1}(D')>f^{-1}(D)$, so that
$\rank(I_{S,*})=\size{S}$.
\end{proof}

\begin{remark}
We can replace the map $f$ in Algorithm~\ref{vertexAlgorithm} with some
one-to-one map $\map{f'}{\fromTo{1}{\ell}}{\deterministic}$, where
$1\le\ell<n^m$.
Then we have a modified algorithm returning a pair $(\lambda', K)$
such that
\[
W = K + \sum_{D\in\deterministic}\lambda'_D D.
\]
Suppose the nontrivial case $K\ne 0$, so that
$\alpha=\sum_{D\in\deterministic}\lambda'_D<1$.
Let $W'=K/(1-\alpha)$.
If we have another algorithm to find a vertex of $\decompose(W')$,
say $\lambda''$, then it is easy to show that
$\lambda=\lambda'+(1-\alpha)\lambda''$ is a vertex of $\decompose(W)$.
\end{remark}

\section{Properties of $J_f$ and $C_f$}

This section provides some basic results on the analytic properties of
$J_f$ and $C_f$ defined in Section~\ref{problem}.
For any $p,p'\in\probability_A$,
\[
\statDistance(p,p')
\eqdef \frac{1}{2}\ellone{p-p'}
= \frac{1}{2}\sum_{a\in A}|p_a-p'_a|
\]
is called the \emph{statistical distance} on
$\probability_A$.
Given the product space $(A,\metric_A)\times(B,\metric_B)$, we
define its product metric by
\[
\productMetric((p,q),(p',q'))
\eqdef \metric_A(p,p')\vee\metric_B(q,q'),
\]
which induces the usual product topology.
Thus for any channels $W,W'\in\channel_{A,B}$, we have the
\emph{channel distance}
\[
\channelDistance(W,W')
\eqdef \productMetric((W_{a,*})_{a\in A},(W'_{a,*})_{a\in A})
= \max_{a\in A} \statDistance(W_{a,*},W'_{a,*}).
\]

\begin{proposition}\label{propertyOfJ}
(a) $J_{10}(\lambda,\mu)$ is uniformly continuous, and it is convex in
$\lambda$ for fixed $\mu$ and is concave in $\mu$ for fixed $\lambda$.

(b) $J_{01}(\lambda,\mu)$ is uniformly continuous, and it is linear in
$\lambda$ for fixed $\mu$ and is concave in $\mu$ for fixed $\lambda$.
\end{proposition}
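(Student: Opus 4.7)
The plan is to reduce both parts to standard properties of the mutual information functional $I(\mu,W)$ together with the observation that the auxiliary channel inside $J_{10}$ is linear in $\lambda$. Recall that for a discrete channel $W$ with input distribution $\mu$, the functional $I(\mu,W)$ is jointly continuous in $(\mu,W)$, convex in $W$ for fixed $\mu$, and concave in $\mu$ for fixed $W$; these are classical facts (e.g., Gallager or Cover--Thomas).

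For part (a), I would first rewrite
\[
J_{10}(\lambda,\mu) = I(\mu, V(\lambda)),
\qquad
V(\lambda)_{u,y} = \sum_{D\in\deterministic}\lambda_D D_{u(D),y},
\]
so that $V\colon\probability_\deterministic\to\channel_{\mathcal{X}^\deterministic,\mathcal{Y}}$ is an affine (in fact linear) map. Concavity of $J_{10}$ in $\mu$ for fixed $\lambda$ is then immediate from concavity of $I(\mu,\cdot)$ in the first argument. For convexity in $\lambda$ for fixed $\mu$, convexity of $I(\mu,W)$ in $W$ together with affineness of $V$ gives
\[
J_{10}(t\lambda+(1-t)\lambda',\mu)
= I(\mu, tV(\lambda)+(1-t)V(\lambda'))
\le t J_{10}(\lambda,\mu)+(1-t)J_{10}(\lambda',\mu).
\]
Uniform continuity follows because $(\lambda,\mu)$ ranges over a compact product of simplices and $J_{10}$ is continuous there (composition of the continuous mutual information with the continuous map $(\lambda,\mu)\mapsto(\mu,V(\lambda))$), and any continuous function on a compact metric space is uniformly continuous.

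For part (b), linearity in $\lambda$ is read off directly from
\[
J_{01}(\lambda,\mu) = \sum_{D\in\deterministic}\lambda_D I(\mu,D),
\]
since for each fixed $\mu$ this is an affine (indeed linear) combination of the constants $I(\mu,D)$. Concavity in $\mu$ for fixed $\lambda$ follows because each $I(\mu,D)$ is concave in $\mu$ (again the standard property) and a nonnegative combination of concave functions is concave. Uniform continuity again follows from continuity on the compact product of simplices: $\lambda\mapsto J_{01}(\lambda,\mu)$ is linear with coefficients bounded by $\log n$, and $\mu\mapsto I(\mu,D)$ is continuous for every $D$.

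No step appears genuinely difficult; the only minor subtlety is that one must be careful about the domain of $\mu$ in the two cases (it is $\probability_{\mathcal{X}^\deterministic}$ for $J_{10}$ and $\probability_\mathcal{X}$ for $J_{01}$), and that $V$ genuinely is affine in $\lambda$ (so convexity is preserved rather than flipped). I would therefore keep the proof short, citing the standard convexity/concavity and continuity facts of $I(\mu,W)$ and invoking compactness for the uniform-continuity conclusion.
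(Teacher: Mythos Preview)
Your argument is correct and, for the convexity/concavity claims, essentially identical to the paper's: both rewrite $J_{10}(\lambda,\mu)=I(\mu,V(\lambda))$ with $V$ linear in $\lambda$ and then invoke the standard convexity of $I(\mu,W)$ in $W$ and concavity in $\mu$, and both read off linearity of $J_{01}$ in $\lambda$ and concavity in $\mu$ directly from the sum $\sum_D\lambda_D I(\mu,D)$.

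The one genuine difference is in how uniform continuity is obtained. You argue qualitatively: the domain is a compact product of simplices, $I$ is jointly continuous, hence the composition is continuous on a compact metric space and therefore uniformly continuous. The paper instead works quantitatively: it bounds the channel distance $\channelDistance(V(\lambda),V(\lambda'))\le\statDistance(\lambda,\lambda')$ (each row $V(\lambda)_{u,*}$ is $\lambda$ pushed through a fixed stochastic matrix), and then feeds this into an explicit modulus of continuity for $I(\mu,W)$ coming from a total-variation bound on entropy differences; similarly for $J_{01}$ it writes out a triangle-inequality estimate using $|I(\mu,D)|\le\log(|\mathcal{X}|\wedge|\mathcal{Y}|)$. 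Your compactness shortcut is cleaner and entirely sufficient for the proposition as stated; the paper's route has the side benefit of giving an explicit (Lipschitz-type) modulus of continuity, which is not needed here but can be handy in related arguments.
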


\begin{proof}
(a) The function $J_{10}(\lambda,\mu)$ can be rewritten as
$I(\mu,g(\lambda))$ where
\[
g(\lambda)
= ((\lambda V(u))_y)_{u\in\mathcal{X}^\deterministic,y\in\mathcal{Y}}
\]
with $V(u)=(D_{u(D),y})_{D\in\deterministic,y\in\mathcal{Y}}$.
By Proposition~\ref{statDistance.P1}, for
$\lambda,\lambda'\in\probability_\deterministic$,
\[
\channelDistance(g(\lambda),g(\lambda'))
= \max_{u\in\mathcal{X}^\deterministic}
 \statDistance(\lambda V(u),\lambda'V(u))
\le \statDistance(\lambda,\lambda'),
\]
so that $g$ is uniformly continuous, and hence $J_{10}(\lambda,\mu)$
is uniformly continuous (Proposition~\ref{continuityOfI3}).
It is also clear that $g$ is linear, so that $J_{10}(\lambda,\mu)$ is
convex for fixed $\mu$ and is concave for fixed $\lambda$
(\cite[Theorem~2.7.4]{cover_elements_2006}).

(b) The function $J_{01}(\lambda,\mu)$ can be written as
$\lambda\transpose{(g(\mu))}$ where
$g(\mu)=(I(\mu,D))_{D\in\deterministic}$.
By Propositions~\ref{continuityOfI} and \ref{statDistance.P1},
$I(\mu,D)$ is uniformly continuous on $\probability_\mathcal{X}$ and
is bounded by $\log(|\mathcal{X}|\wedge|\mathcal{Y}|)$.
Then for $\lambda,\lambda'\in\probability_\deterministic$ and
$\mu,\mu'\in\probability_\mathcal{X}$, we have
\begin{eqnarray*}
& &|\lambda\transpose{(g(\mu))}-\lambda'\transpose{(g(\mu'))}|\\
& &= |\lambda\transpose{(g(\mu))}-\lambda'\transpose{(g(\mu))}
 + \lambda'\transpose{(g(\mu))}-\lambda'\transpose{(g(\mu'))}|\\
& &\le |\lambda\transpose{(g(\mu))}-\lambda'\transpose{(g(\mu))}|
 + |\lambda'\transpose{(g(\mu))}-\lambda'\transpose{(g(\mu'))}|\\
& &\le |\lambda-\lambda'|\transpose{(g(\mu))}
 + \lambda'\transpose{|g(\mu)-g(\mu')|}\\
& &\le \log(|\mathcal{X}|\wedge|\mathcal{Y}|)\ellone{\lambda-\lambda'}
 + \ellone{g(\mu)-g(\mu')}
\end{eqnarray*}
which implies that $J_{01}$ is uniformly continuous.
The remaining part is straightforward
(\cite[Theorem~2.7.4]{cover_elements_2006}).
\end{proof}

\begin{proposition}\label{continuityOfC}
For $f\in\{10,01,11\}$, $\capacity_f(\lambda)$ is uniformly
continuous and convex (and in fact linear for $f=11$).
\end{proposition}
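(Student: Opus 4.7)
The plan is to handle the three cases essentially by reducing them to Proposition~\ref{propertyOfJ} together with two general facts: first, that the supremum of uniformly continuous functions over a compact index set is itself uniformly continuous; and second, that the pointwise supremum of a family of convex (resp.\ linear) functions is convex.

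First I would dispose of the case $f=11$. By definition,
\[
\capacity_{11}(\lambda)
= \sum_{D\in\deterministic} \lambda_D \log\rank(D),
\]
which is manifestly linear in $\lambda$, hence convex. Since $\log\rank(D)\le\log(m\wedge n)$ is bounded and $\deterministic$ is finite, this linear function is Lipschitz on the compact set $\probability_\deterministic$, so it is uniformly continuous. This establishes the parenthetical claim.

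Next, for $f\in\{10,01\}$, by Proposition~\ref{propertyOfJ} the function $J_f(\lambda,\mu)$ is uniformly continuous on its (compact) joint domain. Fix $\epsilon>0$ and choose $\delta>0$ such that $|J_f(\lambda,\mu)-J_f(\lambda',\mu')|<\epsilon$ whenever the joint distance between $(\lambda,\mu)$ and $(\lambda',\mu')$ is less than $\delta$. Taking $\mu=\mu'$ and using the elementary inequality
\[
\left|\sup_\mu J_f(\lambda,\mu) - \sup_\mu J_f(\lambda',\mu)\right|
\le \sup_\mu |J_f(\lambda,\mu) - J_f(\lambda',\mu)|,
\]
we see that $|\capacity_f(\lambda)-\capacity_f(\lambda')|<\epsilon$ whenever $\lambda$ and $\lambda'$ are within $\delta$. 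Hence $\capacity_f$ is uniformly continuous on $\probability_\deterministic$. The suprema are attained because the feasible sets for $\mu$ are compact and $J_f$ is continuous.

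For convexity of $\capacity_f$ with $f\in\{10,01\}$, Proposition~\ref{propertyOfJ}(a) gives that $\lambda\mapsto J_{10}(\lambda,\mu)$ is convex for each fixed $\mu$, and Proposition~\ref{propertyOfJ}(b) gives that $\lambda\mapsto J_{01}(\lambda,\mu)$ is linear (hence convex) for each fixed $\mu$. Since a pointwise supremum of a family of convex functions is convex, the functions
\[
\capacity_{10}(\lambda) = \sup_\mu J_{10}(\lambda,\mu),
\qquad
\capacity_{01}(\lambda) = \sup_\mu J_{01}(\lambda,\mu)
\]
are convex in $\lambda$. There is no serious obstacle here; the only minor subtlety is being explicit that the maxima are attained on compact $\mu$-domains so that the ``$\sup$ versus $\max$'' distinction is immaterial, which follows from continuity of $J_f$ in $\mu$ together with compactness of $\probability_{\mathcal{X}^\deterministic}$ and $\probability_\mathcal{X}$.
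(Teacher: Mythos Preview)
Your proof is correct and follows essentially the same approach as the paper: the case $f=11$ is handled by direct linearity, and for $f\in\{10,01\}$ you invoke Proposition~\ref{propertyOfJ} and then pass to the supremum in $\mu$, inlining the argument that the paper packages separately as Proposition~\ref{continuityOfSup}. The only difference is that you spell out the convexity-preserved-under-supremum step explicitly, whereas the paper leaves it implicit in its sketch.
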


\begin{proof}[Sketch of Proof]
Use Theorem~\ref{propertyOfJ} and Proposition~\ref{continuityOfSup}
for $f=10$ or $01$.
The case of $f=11$ is trivial because $\capacity_{11}(\lambda)$ is a
linear function of $\lambda$.
\end{proof}

\begin{proposition}[{\cite[Theorem~2]{zhang_estimating_2007}}]%
\label{continuityOfI}
For $\mu,\mu'\in\probability_A$ and $W,W'\in\channel_{A,B}$,
\[
\left|I(\mu,W)-I(\mu',W')\right|
\le 3\delta\log(|A||B|-1) + 3h(\delta),
\]
where $\delta = \statDistance(\diagonal(\mu)W,\diagonal(\mu')W')$.
\end{proposition}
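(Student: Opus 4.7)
The plan is to reduce the continuity of mutual information to the continuity of Shannon entropy via the identity
\[
I(\mu,W) = H(\mu) + H(\mu W) - H(\diagonal(\mu)W),
\]
where the three entropies are taken over alphabets of sizes $|A|$, $|B|$, and $|A||B|$, respectively. Writing $P = \diagonal(\mu)W$ and $Q = \diagonal(\mu')W'$, the hypothesis is $\statDistance(P,Q) = \delta$. Since the statistical distance is nonincreasing under marginalization (a simple consequence of the triangle inequality for the $\ell_1$-norm), the marginals also satisfy $\statDistance(\mu,\mu') \le \delta$ and $\statDistance(\mu W,\mu' W') \le \delta$.

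Next I would invoke the standard Csiszár-type continuity bound for Shannon entropy: for any two distributions $p,q\in\probability_A$ with $\statDistance(p,q)=\eta \le 1-1/|A|$, one has $|H(p)-H(q)| \le \eta\log(|A|-1) + h(\eta)$. Applying this bound to the three pairs $(\mu,\mu')$, $(\mu W,\mu' W')$, and $(P,Q)$ gives three estimates, and since $\eta\log(k-1)+h(\eta)$ is nondecreasing in $\eta$ on $[0,1-1/k]$ and in $k$, each of the three terms is at most $\delta\log(|A||B|-1) + h(\delta)$. Adding them by the triangle inequality yields the desired $3\delta\log(|A||B|-1)+3h(\delta)$.

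The one subtle point is the regime where $\delta$ is so large that the Csiszár inequality is not directly applicable (for instance, $\delta > 1 - 1/|A|$ for the marginal on $A$). In that range, the right-hand side $3\delta\log(|A||B|-1)+3h(\delta)$ already exceeds the trivial upper bound $2\log(|A|\wedge|B|)$ on $|I(\mu,W)-I(\mu',W')|$ that follows from $0 \le I(\nu,V) \le \log(|A|\wedge|B|)$, so the inequality holds vacuously. I expect this endpoint bookkeeping to be the only mild obstacle; the core argument is just a triangle inequality on top of the Csiszár entropy-continuity estimate, so no deeper idea is required beyond recognizing that the joint-entropy term absorbs the other two via the monotonicity of $\log(k-1)$ in $k$.
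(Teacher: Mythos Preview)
The paper does not supply its own proof of this proposition; it is quoted directly from \cite[Theorem~2]{zhang_estimating_2007}. Your argument is exactly the one underlying that result: decompose $I(\mu,W)=H(\mu)+H(\mu W)-H(\diagonal(\mu)W)$, use that marginalization contracts total variation, apply the sharp entropy-continuity inequality $|H(p)-H(q)|\le\eta\log(k-1)+h(\eta)$ to each of the three entropy differences, and add. One minor simplification worth noting: if you first embed the marginals $\mu,\mu'$ and $\mu W,\mu'W'$ into $\probability_{A\times B}$ (which does not change their entropies or their statistical distance), all three applications of the entropy bound use the common alphabet size $k=|A||B|$, so the monotonicity-in-$\eta$ step is needed only on $[0,1-1/(|A||B|)]$ and your edge-case bookkeeping handles the rest.
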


\begin{proposition}%
[cf.\ {\cite[Lemma~3]{yassaee_achievability_2014}}]%
\label{statDistance.P1}
For $\mu,\mu'\in\probability_A$ and $W\in\channel_{A,B}$,
\[
\statDistance(\diagonal(\mu)W,\diagonal(\mu')W)
= \statDistance(\mu,\mu')
\]
and
\[
\statDistance(\mu W,\mu'W)
\le \statDistance(\mu,\mu').
\]
\end{proposition}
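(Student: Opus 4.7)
The plan is a direct expansion from the definition $\statDistance(p,p')=\frac{1}{2}\ellone{p-p'}$, exploiting the fact that each row of $W$ sums to one. The first claim is an equality (not merely an inequality) essentially because the joint distribution $\diagonal(\mu)W$ has entries of the product form $\mu_a W(b\mid a)$, so the nonnegative factor $W(b\mid a)$ can be pulled out of the absolute value without any loss.

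Concretely, for the first identity I would write
\[
\statDistance(\diagonal(\mu)W,\diagonal(\mu')W)
= \frac{1}{2}\sum_{a\in A,\,b\in B}|\mu_a-\mu'_a|\,W(b\mid a)
= \frac{1}{2}\sum_{a\in A}|\mu_a-\mu'_a|\sum_{b\in B} W(b\mid a),
\]
and collapse the inner sum to $1$ by stochasticity of $W$, yielding $\statDistance(\mu,\mu')$.

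For the second inequality, since $(\mu W)_b=\sum_{a}\mu_a W(b\mid a)$, the only extra step is an application of the triangle inequality pulling the absolute value inside the $a$-sum:
\[
\statDistance(\mu W,\mu'W)
= \frac{1}{2}\sum_{b\in B}\left|\sum_{a\in A}(\mu_a-\mu'_a)W(b\mid a)\right|
\le \frac{1}{2}\sum_{a\in A}|\mu_a-\mu'_a|\sum_{b\in B} W(b\mid a),
\]
after which the inner sum once again collapses to $1$.

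There is no real obstacle here; the argument is a routine two-line computation. The only conceptual point worth recording is \emph{why} one gets equality in the first case but only inequality in the second: in the joint-distribution form each entry $(\mu_a-\mu'_a)W(b\mid a)$ is indexed by a single $a$, so there is no cancellation across $a$, whereas in the marginal-output case the differences $\mu_a-\mu'_a$ can have opposite signs and partially cancel inside the $b$-th absolute value. This last observation is just the standard data-processing property of total variation distance under a Markov kernel, and it is the inequality that is invoked in Proposition~\ref{propertyOfJ} to control $\channelDistance(g(\lambda),g(\lambda'))$ by $\statDistance(\lambda,\lambda')$.
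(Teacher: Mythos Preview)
Your argument is correct. The paper does not actually supply a proof of this proposition; it merely records the statement with a reference to \cite[Lemma~3]{yassaee_achievability_2014}, so your direct computation (pulling the nonnegative factor $W(b\mid a)$ out of the absolute value for the equality, and applying the triangle inequality before doing so for the inequality) is exactly the standard verification one would expect.
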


\begin{proposition}%
[cf.\ {\cite[Lemma~3]{yassaee_achievability_2014}}]%
\label{continuityOfI2}
For $\mu,\mu'\in\probability_A$ and $W,W'\in\channel_{A,B}$,
\begin{eqnarray*}
\statDistance(\diagonal(\mu)W,\diagonal(\mu')W')
&\le &\statDistance(\mu,\mu')+\channelDistance(W,W')\\
&\le &2\productMetric((\mu,W),(\mu',W')),
\end{eqnarray*}
so that $I(\mu,W)$ is uniformly continuous on
$(\probability_A\times\channel_{A,B},\productMetric)$.
\end{proposition}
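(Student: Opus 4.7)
The plan is to prove the first inequality by an entrywise triangle inequality, obtain the second from the elementary bound $a+b \le 2(a\vee b)$, and then derive the uniform continuity of $I$ by inserting these estimates into Proposition~\ref{continuityOfI}.

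For the first inequality I would start from the splitting
\[
\mu_a W_{a,b} - \mu'_a W'_{a,b}
= (\mu_a - \mu'_a) W_{a,b} + \mu'_a (W_{a,b} - W'_{a,b}),
\]
take absolute values, and sum over $(a,b) \in A \times B$. The first piece collapses to $\ellone{\mu - \mu'}$ because every row of $W$ is a probability distribution on $B$, and the second piece is bounded by $2\channelDistance(W,W')$ since $\sum_b |W_{a,b} - W'_{a,b}| \le 2\channelDistance(W,W')$ for each $a$ and $\mu'$ is itself a probability distribution on $A$. Dividing by $2$ yields the claimed bound, and the second inequality is then immediate from $\statDistance(\mu,\mu') + \channelDistance(W,W') \le 2 \productMetric((\mu,W),(\mu',W'))$.

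For the uniform continuity of $I$, I would invoke Proposition~\ref{continuityOfI}, which bounds $|I(\mu,W) - I(\mu',W')|$ by a quantity depending on $\delta = \statDistance(\diagonal(\mu)W, \diagonal(\mu')W')$ alone and vanishing as $\delta \to 0$. Combining this with the just-established estimate $\delta \le 2\productMetric((\mu,W),(\mu',W'))$ yields a modulus of continuity for $I$ relative to $\productMetric$ that does not depend on the base point $(\mu,W)$, which is precisely uniform continuity. There is no real obstacle here; the only subtlety is to choose the splitting so that each summand depends on only one of $\mu-\mu'$ or $W-W'$, allowing the two contributions to be combined cleanly without a mixed cross-term.
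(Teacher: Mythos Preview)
Your proposal is correct and matches the paper's approach: the paper's sketch invokes the triangle inequality for $\statDistance$ together with Propositions~\ref{statDistance.P1} and \ref{statDistance.P2}, which is exactly your entrywise splitting with the two auxiliary lemmas inlined, and then applies Proposition~\ref{continuityOfI} for the uniform continuity just as you do.
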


\begin{proof}[Sketch of Proof]
Use the triangle inequality and
Propositions~\ref{continuityOfI}, \ref{statDistance.P1} and
\ref{statDistance.P2}.
\end{proof}

\begin{proposition}\label{continuityOfI3}
Let $g$ be a map from $\probability_C$ to $\channel_{A,B}$.
If $g$ is uniformly continuous, then $I(\mu,g(\lambda))$ is uniformly
continuous on $(\probability_A\times\probability_C,\productMetric)$,
where $\mu\in\probability_A$ and $\lambda\in\probability_C$.
\end{proposition}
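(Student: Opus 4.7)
The plan is to recognize $I(\mu, g(\lambda))$ as the composition of two uniformly continuous maps and invoke the fact that composition preserves uniform continuity. Specifically, write $I(\mu, g(\lambda)) = I \circ \Phi$ where $\Phi\colon (\mu,\lambda) \mapsto (\mu, g(\lambda))$ sends $\probability_A \times \probability_C$ into $\probability_A \times \channel_{A,B}$, both spaces being equipped with the product metric $\productMetric$. The outer map $I$ is already known to be uniformly continuous on $(\probability_A \times \channel_{A,B}, \productMetric)$ by Proposition~\ref{continuityOfI2}, so the only thing needing attention is the uniform continuity of $\Phi$, which is immediate from the hypothesis on $g$ together with the identity contribution in the first coordinate.

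Concretely, given $\epsilon > 0$, I would first use Proposition~\ref{continuityOfI2} to select $\eta > 0$ such that $\productMetric((\mu,W),(\mu',W')) < \eta$ implies $|I(\mu,W) - I(\mu',W')| < \epsilon$. Next, by uniform continuity of $g$, pick $\delta_1 > 0$ such that $\statDistance(\lambda,\lambda') < \delta_1$ implies $\channelDistance(g(\lambda),g(\lambda')) < \eta$. Setting $\delta = \delta_1 \wedge \eta$, whenever $\productMetric((\mu,\lambda),(\mu',\lambda')) < \delta$ we simultaneously have $\statDistance(\mu,\mu') < \eta$ and $\channelDistance(g(\lambda),g(\lambda')) < \eta$, hence $\productMetric((\mu,g(\lambda)),(\mu',g(\lambda'))) < \eta$, which in turn yields $|I(\mu,g(\lambda)) - I(\mu',g(\lambda'))| < \epsilon$ as required.

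There is essentially no obstacle here; the content of the proposition is really a bookkeeping lemma verifying that the metric $\productMetric$ on the product space behaves compatibly with the two component maps. The only subtlety worth flagging is that one must not be tempted to also try to show convexity or concavity properties (as in Proposition~\ref{propertyOfJ}), since $g$ is an arbitrary uniformly continuous map and no linearity is assumed; only the topological statement is claimed, and only the topological tools are needed.
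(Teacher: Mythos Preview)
Your proposal is correct and follows essentially the same approach as the paper: both argue that $I(\mu,g(\lambda))$ is a composition of uniformly continuous maps, with the outer map $I$ handled by Proposition~\ref{continuityOfI2} and the inner map $(\mu,\lambda)\mapsto(\mu,g(\lambda))$ being uniformly continuous because $g$ is. Your write-up simply makes explicit the $\epsilon$--$\delta$ bookkeeping that the paper leaves as a sketch.
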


\begin{proof}[Sketch of Proof]
Use Propsoitions~\ref{continuityOfI} and \ref{continuityOfI2} and the
observation that $I(\mu,g(\lambda))$ is a composition of uniformly
continuous maps.
\end{proof}

\begin{proposition}\label{continuityOfSup}
If $\map{g}{A\times B}{\real}$ is uniformly continuous on
$(A\times B,\productMetric)$, then $f(x)=\sup_{b\in B}g(x,b)$ is
uniformly continuous.
\end{proposition}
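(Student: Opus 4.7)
The plan is to unpack the definition of uniform continuity of $g$ and use it directly to bound the difference $|f(x)-f(x')|$ uniformly over a small neighborhood in $A$. Fix $\epsilon>0$. By uniform continuity of $g$ on $(A\times B,\productMetric)$, choose $\delta>0$ such that whenever $\productMetric((x,b),(x',b'))<\delta$ one has $|g(x,b)-g(x',b')|<\epsilon$.

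Next, I would specialize by taking the second coordinate to agree, i.e., $b'=b$. Then $\productMetric((x,b),(x',b))=\metric_A(x,x')\vee 0=\metric_A(x,x')$, so whenever $\metric_A(x,x')<\delta$, we get $|g(x,b)-g(x',b)|<\epsilon$ for every $b\in B$ (with the same $\delta$, independent of $b$). In particular, $g(x,b)\le g(x',b)+\epsilon\le f(x')+\epsilon$. Taking the supremum over $b\in B$ on the left yields $f(x)\le f(x')+\epsilon$, and the symmetric inequality $f(x')\le f(x)+\epsilon$ is obtained by exchanging the roles of $x$ and $x'$. Hence $|f(x)-f(x')|\le\epsilon$ whenever $\metric_A(x,x')<\delta$, which is precisely uniform continuity of $f$.

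There are a couple of minor points to attend to so that the statement is meaningful: the supremum $f(x)=\sup_{b\in B}g(x,b)$ must be finite for the inequalities above to make sense. This is automatic in the intended applications of the proposition (for instance $f=\capacity_f$, where the sup is over a compact set of distributions and $g$ is bounded), but in general one should observe that uniform continuity of $g$ together with nonemptiness of $B$ guarantees either $f\equiv+\infty$ or $f$ is everywhere finite; in the former degenerate case the conclusion holds vacuously, and in the latter the argument above applies verbatim. I do not expect any real obstacle here: the proof is a straightforward $\epsilon$--$\delta$ manipulation exploiting that the $\delta$ supplied by the uniform continuity of $g$ does not depend on the second coordinate, which is exactly what allows the supremum to be pulled through the bound.
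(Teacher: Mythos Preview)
Your proof is correct and follows essentially the same approach as the paper's: both specialize the uniform continuity of $g$ to pairs with equal second coordinate to obtain a $\delta$ independent of $b$, and then bound the difference of suprema by the supremum of the differences (you phrase this as $g(x,b)\le f(x')+\epsilon$ before taking the sup, the paper writes $\sup_b g(a,b)-\sup_b g(a',b)\le\sup_b(g(a,b)-g(a',b))$, but the content is identical). Your extra remark on finiteness of $f$ is a nice addition that the paper omits.
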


\begin{proof}
Since $g$ is uniformly continuous, for any $\epsilon>0$, there is a
$\delta>0$ such that for any $a,a'\in A$ and any $b\in B$,
$\productMetric((a,b),(a',b))<\delta$ implies
$|g(a,b)-g(a',b)|<\epsilon$.
In other words, for any $b\in B$, $\metric_A(a,a')<\delta$ implies
$|g(a,b)-g(a',b)|<\epsilon$.
Then
\[
\sup_{b\in B}g(a,b) - \sup_{b\in B}g(a',b)
\le \sup_{b\in B}(g(a,b)-g(a',b))
< \epsilon
\]
and similarly, $\sup_{b\in B}g(a',b)-\sup_{b\in B}g(a,b)<\epsilon$,
so that $f(x)$ is uniformly continuous.
\end{proof}

\begin{proposition}\label{statDistance.P2}
For $\mu\in\probability_A$ and $W,W'\in\channel_{A,B}$,
\[
\statDistance(\diagonal(\mu)W,\diagonal(\mu)W')
\le \channelDistance(W,W').
\]
\end{proposition}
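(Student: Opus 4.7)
The statement unfolds into a direct calculation once the definitions are substituted; the only ``trick'' is recognizing that pulling $\mu_a$ out of the statistical distance turns the inequality into the trivial bound of a convex combination by its maximum.

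The plan is to start from the definition of the statistical distance on $\diagonal(\mu)W$ and $\diagonal(\mu)W'$. Since the $(a,b)$ entry of $\diagonal(\mu)W$ is $\mu_a W_{a,b}$, I would write
\[
\statDistance(\diagonal(\mu)W,\diagonal(\mu)W')
= \frac{1}{2}\sum_{a\in A}\sum_{b\in B} \mu_a |W_{a,b}-W'_{a,b}|
= \sum_{a\in A} \mu_a \cdot \frac{1}{2}\sum_{b\in B}|W_{a,b}-W'_{a,b}|,
\]
where the factorization uses only that $\mu_a\ge 0$. The inner sum is exactly $\statDistance(W_{a,*},W'_{a,*})$, so the right-hand side is $\sum_{a\in A} \mu_a\,\statDistance(W_{a,*},W'_{a,*})$.

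Next, since $\mu\in\probability_A$, the coefficients $\mu_a$ are nonnegative and sum to one, so the expression is a convex combination and is bounded above by the maximum of the terms being combined:
\[
\sum_{a\in A} \mu_a\,\statDistance(W_{a,*},W'_{a,*})
\le \max_{a\in A}\statDistance(W_{a,*},W'_{a,*})
= \channelDistance(W,W'),
\]
which is the desired inequality. There is no real obstacle here; the proof is essentially a one-line manipulation, and the only care needed is to verify that $\diagonal(\mu)W$ is being treated as an element of the appropriate space so that the defining formula for $\statDistance$ applies verbatim, and that the interchange of summations is valid (it is, since all terms are nonnegative and the sums are finite).
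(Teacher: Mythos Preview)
Your proof is correct and is essentially identical to the paper's own proof: both expand the definition of $\statDistance$, factor out $\mu_a$ to obtain $\sum_a \mu_a\,\statDistance(W_{a,*},W'_{a,*})$, and then bound this convex combination by $\max_a \statDistance(W_{a,*},W'_{a,*})=\channelDistance(W,W')$.
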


\begin{proof}
\begin{eqnarray*}
\statDistance(\diagonal(\mu)W,\diagonal(\mu)W')
&= &\frac{1}{2}\sum_{a,b}|\mu_aW_{a,b}-\mu_aW'_{a,b}|\\
&= &\frac{1}{2}\sum_a\mu_a\sum_b|W_{a,b}-W'_{a,b}|\\
&= &\sum_a\mu_a\statDistance(W_{a,*},W'_{a,*})
\le \statDistance(W,W').
\end{eqnarray*}
\end{proof}

\section{Proofs and Examples of Section~\ref{secIC11}}
\label{secIC11extra}

\begin{proof}[Proof of Proposition~\ref{min.pattern.1}]
Let $W=(P_1+\cdots+P_\ell)/\ell$ and $j$ be the column such that
$\weight(W_{*,j})>1$.
It is clear that $W = xD + (1-x)W'$ for some $x\in (0,1)$ and some
$D\in\deterministic$ such that $\weight(D_{*,j})>1$, so that
$\lowerIC_{11}(W) < \log m$, and hence
$P_1$, \ldots, $P_\ell$ are not $\intrinsic$-minimized.
\end{proof}

\begin{proof}[Proof of Proposition~\ref{min.pattern.2}]
Let $W=(P_1+\cdots+P_\ell)/\ell$ and $j$ be the column of which all
entries are less than $1$.
It is clear that $W = xD + (1-x)W'$ for some $x\in (0,1)$ and some
$D\in\deterministic$ such that $\weight(D_{*,j})=0$, so that
$\lowerIC_{11}(W) < \log n$, and hence
$P_1$, \ldots, $P_\ell$ are not $\intrinsic$-minimized.
\end{proof}

\begin{proof}[Proof of Proposition~\ref{max.pattern.1}]
With no loss of generality, we assume that $D_{m,j}=0$.
It is then clear that
\[
\frac{1}{2}D + \frac{1}{2}\useless_j
= \frac{1}{2}D' + \frac{1}{2}D'',
\]
where
\[
D'(i)=\begin{cases}
j &if $i=m$,\\
D(i) &otherwise,
\end{cases}
\]
and
\[
D''(i)=\begin{cases}
D(m) &if $i=m$,\\
j &otherwise.
\end{cases}
\]
It is clear that $\rank(D')\ge (\rank(D)-1)\vee 2$ and $\rank(D'')=2$,
so that
\[
\frac{1}{2}\log\rank(D)+\frac{1}{2}\log\rank(\useless_j)
< \frac{1}{2}\log\rank(D')+\frac{1}{2}\log\rank(D''),
\]
and therefore $\{D, \useless_j\}$ is not $\intrinsic$-maximized.
\end{proof}

\begin{example}
If
\[
W
= \begin{pmatrix}
1/n &1/n &\cdots &1/n\\
1/n &1/n &\cdots &1/n\\
\vdots &\vdots &\ddots &\vdots\\
1/n &1/n &\cdots &1/n
\end{pmatrix},
\]
which is the probability transition matrix seen in the well-known random binning scheme,
then $\lowerIC_{11}(W)=0$ and $\upperIC_{11}(W)=\log(m\wedge n)$
(Theorems~\ref{lowerICBound} and \ref{upperICBound}).
\end{example}

\begin{example}\label{concrete}
\[
W = \begin{pmatrix}
0.3 &0.3 &0.4\\
0.2 &0.5 &0.3\\
0.4 &0.1 &0.5
\end{pmatrix}
\]
It can be computed using linear programming that
$\lowerIC_{11}(W)=0.4$ and $\upperIC_{11}(W)=0.2+0.8\log 3\approx 1.4680$.
The decompositions of $W$ for $\lowerIC_{11}(W)$ and $\upperIC_{11}(W)$ are
\begin{eqnarray*}
W
&= &0.2\begin{pmatrix}1 &0 &0\\ 1 &0 &0\\ 1 &0 &0\end{pmatrix}
 + 0.1\begin{pmatrix}0 &1 &0\\ 0 &1 &0\\ 0 &1 &0\end{pmatrix}
 + 0.3\begin{pmatrix}0 &0 &1\\ 0 &0 &1\\ 0 &0 &1\end{pmatrix}\\
& &+ 0.1\begin{pmatrix}1 &0 &0\\ 0 &1 &0\\ 1 &0 &0\end{pmatrix}
 + 0.1\begin{pmatrix}0 &1 &0\\ 0 &1 &0\\ 1 &0 &0\end{pmatrix}
 + 0.1\begin{pmatrix}0 &1 &0\\ 0 &1 &0\\ 0 &0 &1\end{pmatrix}
 + 0.1\begin{pmatrix}0 &0 &1\\ 0 &1 &0\\ 0 &0 &1\end{pmatrix}
\end{eqnarray*}
and
\begin{eqnarray*}
W
&= &0.1\begin{pmatrix}0 &0 &1\\ 1 &0 &0\\ 0 &0 &1\end{pmatrix}
 + 0.1\begin{pmatrix}0 &0 &1\\ 0 &1 &0\\ 0 &0 &1\end{pmatrix}\\
& &+ 0.1\begin{pmatrix}0 &0 &1\\ 0 &1 &0\\ 1 &0 &0\end{pmatrix}
 + 0.3\begin{pmatrix}0 &1 &0\\ 0 &0 &1\\ 1 &0 &0\end{pmatrix}
 + 0.1\begin{pmatrix}0 &0 &1\\ 1 &0 &0\\ 0 &1 &0\end{pmatrix}
 + 0.3\begin{pmatrix}1 &0 &0\\ 0 &1 &0\\ 0 &0 &1\end{pmatrix},
\end{eqnarray*}
respectively.
Using Theorems~\ref{lowerICBound} and \ref{upperICBound} and
Proposition~\ref{perfect}, we have
\[
0.4
\le \lowerIC_{11}(W)
\le 0.4\log 3
\approx 0.6340
\]
and
\[
1
\le \upperIC_{11}(W)
\le 0.1+0.9\log 3
\approx 1.5265.
\]
From Proposition~\ref{min.pattern.1}, it follows that the optimal
decomposition $\lambda'$ for $\lowerIC_{11}(W')$ can have at most one
perfect channel, so that $\rankProbability_{\lambda'}(3)\le 0.25$,
where
\[
W' = \begin{pmatrix}
0.25 &0.5 &0.25\\
0 &1 &0\\
0.5 &0 &0.5
\end{pmatrix}
\]
is computed by the formula in Theorem~\ref{lowerICBound}.
Then we have an improved bound:
$\lowerIC_{11}(W)\le 0.4\lowerIC_{11}(W')=0.3 + 0.1\log 3\approx 0.4585$.
\end{example}

\section{Capacity-Achieving Input Probability Distributions}

Let $W$ be a channel in $\channel_{\mathcal{X},\mathcal{Y}}$.
According to \cite[Theorem~4.5.1]{gallager_information_1968}, an input
probability distribution $\mu$ maximizes the mutual information
$I(\mu,W)$ iff
\[
D(W_{x,*}\|\tau)=C\quad\text{for $x\in\support(\mu)$}
\]
and
\[
D(W_{x,*}\|\tau)\le C\quad\text{for $x\notin\support(\mu)$},
\]
where $\tau=\mu W$.
Based on this sufficient and necessary condition, we have the
following results concerning the support of capacity-achieving input
probability distributions.
In the sequel, we denote by $\convexHull(V)$ the convex hull of all
vectors in $V$.

\begin{proposition}\label{capacityAchieving.1}
Let $A\subseteq\mathcal{X}$.
If all row vectors of $W$ are contained in
$\convexHull(\{W_{x,*}\}_{x\in A})$, then there exists a
capacity-achieving probability distribution $\mu$ such that
$\support(\mu)\subseteq A$.
\end{proposition}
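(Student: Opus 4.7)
The plan is to produce a capacity-achieving distribution supported on $A$ by averaging any given capacity-achieving distribution against the convex representations supplied by the hypothesis, and then verify via the convexity of the Kullback--Leibler divergence that no mutual information is lost in this averaging.

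First, I would fix an arbitrary capacity-achieving $\mu^*\in\probability_\mathcal{X}$ and set $\tau^*\eqdef\mu^*W$. By hypothesis, for each $x\in\mathcal{X}$ there exist coefficients $(\alpha_x(a))_{a\in A}\in\probability_A$ with
\[
W_{x,*}=\sum_{a\in A}\alpha_x(a)\,W_{a,*}.
\]
Define $\nu\in\probability_A$ by $\nu_a\eqdef\sum_{x\in\mathcal{X}}\mu^*_x\,\alpha_x(a)$; a direct swap of sums gives $\nu W=\sum_x\mu^*_x\sum_a\alpha_x(a)W_{a,*}=\sum_x\mu^*_xW_{x,*}=\tau^*$, so the output distribution is preserved and in particular $\support(\nu)\subseteq A$.

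Next, applying the convexity of $D(\cdot\,\|\,\tau^*)$ to the above representation of $W_{x,*}$ gives
\[
D(W_{x,*}\|\tau^*)\le\sum_{a\in A}\alpha_x(a)\,D(W_{a,*}\|\tau^*).
\]
Averaging this inequality against $\mu^*_x$ and recognising $\nu$ on the right-hand side yields
\[
\capacity(W)=I(\mu^*,W)=\sum_x\mu^*_x D(W_{x,*}\|\tau^*)\le\sum_a\nu_a D(W_{a,*}\|\tau^*)=I(\nu,W).
\]
Since $I(\nu,W)\le\capacity(W)$ by definition, equality holds throughout, and $\nu$ is the desired capacity-achieving distribution with $\support(\nu)\subseteq A$.

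There is no real obstacle here; Gallager's KKT-type characterisation cited above motivates the statement but is not actually invoked. The only subtlety worth flagging is that the hypothesis must genuinely furnish convex (not merely affine) coefficients, since both the legitimacy of $\nu$ as a probability distribution and the direction of the convexity inequality for $D$ rely on $(\alpha_x(a))_a$ being nonnegative and summing to one. This is exactly what membership in $\convexHull(\{W_{a,*}\}_{a\in A})$ provides.
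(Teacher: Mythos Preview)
Your argument is correct, and it takes a genuinely different route from the paper. The paper starts on the other end: it lets $\nu$ be a capacity-achieving input for the \emph{sub-channel} $W_{A,*}$, extends it by zeros to $\mu$ on $\mathcal{X}$, and then verifies via Gallager's KKT characterisation that $\mu$ is capacity-achieving for the full channel $W$ --- the key step being that for $x\notin A$ the convex representation of $W_{x,*}$ in terms of $(W_{a,*})_{a\in A}$ together with convexity of $D(\cdot\|\tau)$ forces $D(W_{x,*}\|\tau)\le C$. You instead start from a capacity-achieving $\mu^*$ for the \emph{full} channel and push it down to $A$ by averaging through the convex coefficients, then use convexity of $D(\cdot\|\tau^*)$ the other way to show no mutual information is lost. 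Your approach is arguably more elementary since it never invokes the KKT conditions, and the observation that the output distribution is preserved ($\nu W=\tau^*$) is a nice structural fact. The paper's approach, on the other hand, yields the slightly sharper statement that \emph{any} capacity-achieving input of the sub-channel $W_{A,*}$ already achieves the capacity of $W$, whereas your $\nu$ depends on the choice of $\mu^*$ and of the convex coefficients $\alpha_x(\cdot)$.
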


\begin{proof}
Let $\nu$ be a capacity-achieving probability distribution of the sub
matrix $W_{A,*}$.
Extending $\nu$ with zero values, we obtain a probability distribution
$\mu$ over $\mathcal{X}$.
It is clear that
\[
D(W_{x,*}\|\tau)=C\quad\text{for $x\in\support(\mu)$}
\]
and
\[
D(W_{x,*}\|\tau)\le C\quad\text{for $x\in A\setminus\support(\mu)$},
\]
where $\tau=\mu W=\nu W_{A,*}$.
It remains to show that
\[
D(W_{x,*}\|\tau)\le C\quad\text{for $x\notin A$},
\]
which is obvious, because
\[
D(W_{x,*}\|\tau)
= D\left(\sum_{a\in A}\alpha_aW_{a,*} \Big\| \tau\right)
\le \sum_{a\in A}\alpha_aD(W_{a,*}\|\tau)
\le C
\]
for some nonnegative coefficients $(\alpha_a)_{a\in A}$ with
$\sum_{a\in A}\alpha_a=1$.
\end{proof}

\begin{proposition}\label{capacityAchieving.2}
Let $\mu$ be a capacity-achieving probability distribution of $W$ and
let $A=\support(\mu)$.
For any $a\in A$ and any $b\notin A$,
\(
W_{a,*}
\notin \convexHull(\{W_{x,*}\}_{x\in A\cup\{b\}}\setminus\{W_{a,*}\})
\).
\end{proposition}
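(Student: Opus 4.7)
The plan is to argue by contradiction using the Kuhn--Tucker characterization of capacity-achieving distributions recalled at the start of this appendix, combined with strict convexity of relative entropy in its first argument. Write $\tau=\mu W$ and $B = \{W_{x,*}\}_{x\in A\cup\{b\}}\setminus\{W_{a,*}\}$. Suppose for contradiction that $W_{a,*}=\sum_{i=1}^{k} \beta_i v_i$ for pairwise distinct $v_1,\ldots,v_k \in B$ with $\beta_i > 0$ and $\sum_i \beta_i=1$.

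Each $v_i$ equals $W_{x_i,*}$ for some $x_i \in (A\setminus\{a\})\cup\{b\}$. The Gallager optimality condition yields $D(W_{x,*}\|\tau)=C$ for every $x\in A$ and $D(W_{b,*}\|\tau)\le C$, so that $\sum_{i=1}^k \beta_i D(v_i\|\tau)\le C$. Since $a\in A$, one also has $D(W_{a,*}\|\tau)=C$. I would then invoke strict convexity of $p\mapsto D(p\|\tau)$, which follows coordinatewise from the strict convexity of $x\mapsto x\log x$ on $[0,\infty)$, to conclude
\[
C = D(W_{a,*}\|\tau) < \sum_{i=1}^k \beta_i D(v_i\|\tau) \le C,
\]
the desired contradiction.

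The one subtlety, which is really the only obstacle, is that strict convexity of $D(\cdot\|\tau)$ delivers a \emph{strict} inequality only when at least two of the $v_i$ carry positive weight. The degenerate case $k=1$ would force $W_{a,*}=v_1$, but this is ruled out by the defining property $v_1\in B$. Hence the argument splits cleanly into an impossible degenerate case and a nondegenerate case handled by strict convexity, and no deeper difficulty arises.

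A conceptually equivalent route, which bypasses the direct KKT statement, would be to redistribute the mass $\mu_a$ onto the indices $x_i$ according to the weights $\beta_i$ to produce an input distribution $\mu'$ with $\mu' W = \mu W$; by strict concavity of entropy on distinct vectors, $I(\mu',W) > I(\mu,W) = C$, which is again absurd. Either phrasing delivers the result, and I would adopt the first one as it is shorter and shares the analytic tool ($D(\cdot\|\tau)$ strictly convex) with Proposition~\ref{capacityAchieving.1}.
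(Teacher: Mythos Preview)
Your argument is correct and follows essentially the same route as the paper's proof: assume $W_{a,*}$ lies in the stated convex hull, invoke the Kuhn--Tucker conditions $D(W_{x,*}\|\tau)=C$ for $x\in A$ and $D(W_{b,*}\|\tau)\le C$, and derive a contradiction from the strict convexity of $p\mapsto D(p\|\tau)$. The paper organizes the contradiction in two sub-cases (first excluding $b$, then including it and deducing $D(W_{b,*}\|\tau)>C$), whereas you treat both at once via the single chain $C=D(W_{a,*}\|\tau)<\sum_i\beta_i D(v_i\|\tau)\le C$; your handling of the degenerate case $k=1$ is exactly what makes this uniform treatment work. Your alternative mass-redistribution argument is also valid and indeed equivalent, since $\mu'W=\mu W$ together with strict concavity of entropy yields $I(\mu',W)>C$.
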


\begin{proof}
It is clear that $D(W_{x,*}\|\tau)=C$ for all $x\in A$, where
$\tau=\mu W$.
We first show that
$W_{a,*}\notin\convexHull(\{W_{x,*}\}_{x\in A}\setminus\{W_{a,*}\})$,
which corresponds to the case $W_{b,*}=W_{a,*}$.
If it is false, then
\[
W_{a,*} = \sum_{x\in A'} \alpha_xW_{x,*}
\]
where $A'=\set{x\in A}{W_{x,*}\ne W_{a,*}}$, $\alpha_x\ge 0$, and
$\sum_{x\in A'} \alpha_x = 1$.
It is clear that $\alpha_x<1$ for all $x\in A'$, so that
\[
D(W_{a,*}\|\tau) < \sum_{x\in A'} D(W_{x,*}\|\tau) = C,
\]
a contradiction.
Now suppose that
\[
W_{a,*}
\in\convexHull(\{W_{x,*}\}_{x\in A\cup\{b\}}\setminus\{W_{a,*}\})
\]
for some $b\notin A$ with $W_{b,*}\ne W_{a,*}$.
Let $A''=A'\cup \{b\}$.
Then
\[
W_{a,*} = \sum_{x\in A''} \alpha_xW_{x,*}
\]
where $\alpha_x\ge 0$ and $\sum_{x\in A''} \alpha_x = 1$.
It is clear that $0<\alpha_b<1$, and therefore
\begin{eqnarray*}
C
&= &D(W_{a,*}\|\tau)
< \alpha_bD(W_{b,*}\|\tau)
 + \sum_{x\in A''\setminus\{b\}} \alpha_x D(W_{x,*}\|\tau)\\
&= &\alpha_bD(W_{b,*}\|\tau)+(1-\alpha_b)C,
\end{eqnarray*}
so that $D(W_{b,*}\|\tau)>C$, which is absurd.
\end{proof}

\section{Counterexamples for Section~\ref{secIC10}}
\label{counterOfIC10}

\begin{example}\label{counterExampleOfIC10}
$\lowerIC_{10}(W)>\capacity(W)$ for
\[
W
= \begin{pmatrix}
0.8 &0.2 &0\\
0.6 &0.35 &0.05
\end{pmatrix}.
\]
\end{example}

\begin{proof}
Let $S=\set{D\in\deterministic}{D(1)\in\{1,2\}, D(2)=3}$.
It is then clear that, for every $\lambda\in\decompose(W)$,
\[
\sum_{D\in S}\lambda_D = 0.05.
\]
If we define the map $\map{u}{\deterministic}{\fromTo{1}{2}}$ by
\[
u(D)
= \begin{cases}
1 &$D\in S$,\\
2 &otherwise,
\end{cases}
\]
then the row vector
$v=(\sum_D \lambda_D D_{u(D),y})_{y\in\fromTo{1}{3}}$ is always on the
line segment $L$ with endpoints $(0.65,0.35,0)$ and $(0.6,0.4,0)$.

By numerical computation, we know that
\[
\divergence{W_{1,*}}{c})
= \divergence{W_{2,*}}{c}
\approx 0.03541501,
\]
where
\[
c
= \mu W
\approx (0.71339243,0.26495568,0.02165189)
\]
and
\[
\mu \approx (0.56696216,0.43303784)
\]
is the capacity-achieving input distribution of $W$.
Furthermore, it can be verified that all points $x$ of $L$ satisfy
\[
\divergence{x}{c}>0.0369.
\]
This implies that $\mu$, if extended to
$\fromTo{1}{2}^\deterministic$, cannot be a capacity-achieving
distribution (\cite[Theorem~4.5.1]{gallager_information_1968}).
In other words, for every $\lambda\in\decompose(W)$, the intrinsic
capacity $\capacity_{10}(\lambda)>\capacity(W)$, so that
$\lowerIC_{10}(W)>\capacity(W)$.
\end{proof}

\begin{example}\label{counterExampleOfIC10Conjecture}
Let state alphabet $\mathcal{S}=\fromTo{1}{2}$ and let
\begin{eqnarray*}
W
&= &\sum_{s\in\mathcal{S}} p_S(s) K^{(s)}
= \frac{17}{18}K^{(1)}+\frac{1}{18}K^{(2)}\\
&= &\begin{pmatrix}\delta\\ \gamma\end{pmatrix}
= \begin{pmatrix}0.05 &0.1 &0.85\\ 0 &0.05 &0.95\end{pmatrix},
\end{eqnarray*}
where
\[
K^{(1)}
=\begin{pmatrix}\alpha\\ \gamma\end{pmatrix}
=\begin{pmatrix}0 &0.1 &0.9\\ 0 &0.05 &0.95\end{pmatrix}
\]
and
\[
K^{(2)}
=\begin{pmatrix}\beta\\ \gamma\end{pmatrix}
=\begin{pmatrix}0.9 &0.1 &0\\ 0 &0.05 &0.95\end{pmatrix}.
\]
It is easy to show that $\mu=(0.603123,0.396877)$ is the
capacity-achieving input distribution for $W$, so that the output
distribution is
\[
\tau
\eqdef \mu_1\delta+\mu_2\gamma
\approx (0.01984385,0.06984385,0.9103123)
\]
and $D(\delta\|\tau)=D(\gamma\|\tau)\approx 0.0238286$.
However, for the channel
$\map{V}{\fromTo{1}{2}^\mathcal{S}}{\fromTo{1}{3}}$ given by
\[
V_{u,*} = \sum_{s\in\mathcal{S}} p_S(s) K^{(s)}_{u(s),*}
\qquad{\text{(\cite[Remark~7.6]{el_gamal_network_2011})}},
\]
if we choose the map $u(s)=s$, then the corresponding row vector
\begin{eqnarray*}
\zeta
&= &V_{u,*}
= p_1K^{(1)}_{1,*}+p_2K^{(2)}_{2,*}\\
&= &\frac{17}{18}\alpha+\frac{1}{18}\gamma
\approx (0,0.09722222,0.90277778)
\end{eqnarray*}
and $D(\zeta\|\tau)\approx 0.0246518 > D(\gamma\|\tau)$.
This implies that $\mu$, if extended to $\fromTo{1}{2}^\mathcal{S}$,
cannot be a capacity-achieving distribution for $V$
(\cite[Theorem~4.5.1]{gallager_information_1968}).
In other words, the capacity of $W$ can be increased by the causal
state information $S$ at the encoder.
\end{example}

\section{$\lowerIC_{01}(W)$ and $\upperIC_{01}(W)$}
\label{resultIC01}

\begin{proof}[Proof of Proposition~\ref{IC01}]
Because $m=2$, the binary uniform distribution is capacity-achieving
for every deterministic channel, rank $1$ or rank $2$.
Thus we have $\capacity_{01}(\lambda)=\capacity_{11}(\lambda)$ for
every $\lambda\in\decompose(W)$.
The remaining part is an easy consequence of
Propositions~\ref{lowerICBound} and \ref{upperICBound}.
\end{proof}

\begin{proposition}\label{counterExampleOfIC01}
Let $W$ be a channel $\fromTo{1}{3}\to\fromTo{1}{2}$.
If all probabilities $W_{i,j}$ are distinct and the sum of each column
of $W$ is greater than or equal to $1$, then
$\upperIC_{01}(W)<\upperIC_{11}(W)$.
\end{proposition}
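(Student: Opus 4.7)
The plan is to pin down $\upperIC_{11}(W) = 1$ using the column-sum hypothesis, to argue that under the distinctness hypothesis no $\lambda \in \decompose(W)$ can achieve $\capacity_{01}(\lambda) = 1$, and then to upgrade this pointwise strict inequality to a strict separation by compactness. For the first step, since $n = 2$, Theorem~\ref{upperICBound} gives $\upperIC_{11}(W) = 1 - \lowerRP_W(1)$, and Proposition~\ref{useless} gives $\lowerRP_W(1) = (g - m + 1)_+ = (g - 2)_+$ where $g$ is the maximum column sum. The two column sums of $W$ add to $m = 3$ and each is at least $1$ by hypothesis, so $g \le 2$, hence $\lowerRP_W(1) = 0$ and $\upperIC_{11}(W) = 1$.

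It then suffices to prove $\capacity_{01}(\lambda) < 1$ for every $\lambda \in \decompose(W)$; the conclusion will follow because $\decompose(W)$ is compact and $\capacity_{01}$ is continuous (Proposition~\ref{continuityOfC}), so the supremum is attained and stays strictly below $1$. For a deterministic channel $D$ with $n = 2$, one has $I(\mu, D) = H(\mu D) \le 1$, with equality when $\rank D = 2$ iff $\mu(D^{-1}(1)) = 1/2$, while $I(\mu, D) = 0$ when $\rank D = 1$. Therefore $\capacity_{01}(\lambda) = 1$ would force the existence of a $\mu \in \probability_{\fromTo{1}{3}}$ such that every $D \in \support(\lambda)$ has $\rank D = 2$ and $\mu(D^{-1}(1)) = 1/2$.

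The combinatorial heart of the argument is that on a three-element alphabet the collection of proper subsets $A \subsetneq \fromTo{1}{3}$ with $\mu(A) = 1/2$ is severely limited. A short case check on the coordinate pattern of $\mu$ shows that either (a) no such $A$ exists, or (b) $\mu$ has a single coordinate equal to $1/2$ with positive mass on both remaining points, in which case the only eligible subsets form a complementary pair $\{\{i\}, \{j,k\}\}$, or (c) $\mu = \frac{1}{2}(\canonicalBasis_i + \canonicalBasis_j)$ with zero-mass coordinate $k$, in which case the eligible subsets are exactly $\{i\}, \{j\}, \{i,k\}, \{j,k\}$. In case (b), every admissible $D$ satisfies $D_{j,*} = D_{k,*}$, so $W_{j,*} = \sum_D \lambda_D D_{j,*} = W_{k,*}$, and rows $j$ and $k$ of $W$ coincide, contradicting distinctness of entries. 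In case (c), a direct computation of $W = \sum_D \lambda_D D$ over those four channels yields $W_{i,*} = (p, 1 - p)$ and $W_{j,*} = (1 - p, p)$ for some $p$, so the values $p$ and $1 - p$ each appear twice among the entries of $W$, again contradicting distinctness. Case (a) is trivial.

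The main obstacle is precisely this enumeration: one must verify that every admissible support of $\mu$ on three points, coupled with the resulting admissible family of rank-$2$ deterministic channels, is forced to produce a repeated row or a repeated entry in $W$. The column-sum hypothesis enters only in Step 1, to secure $\upperIC_{11}(W) = 1$; the distinctness of entries is precisely what closes off the two dangerous cases (b) and (c) in Step 2 and delivers the strict gap $\upperIC_{01}(W) < \upperIC_{11}(W)$.
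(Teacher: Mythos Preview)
Your proof is correct and follows the same architecture as the paper's: establish $\upperIC_{11}(W)=1$ from the column-sum hypothesis, then argue by cases on the structure of a putative $\mu$ witnessing $\capacity_{01}(\lambda)=1$. The case split is identical (no coordinate equal to $1/2$; exactly one coordinate $1/2$ with both others positive; two coordinates $1/2$), and your case (c) computation matches the paper's ``at most four distinct values'' observation.

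The one genuine difference is your handling of case (b). The paper invokes Proposition~\ref{decVertex3aux} to force $\weight(\lambda)\ge\lceil\log_2 6\rceil=3$, contradicting the fact that only two admissible deterministic channels exist. You instead observe directly that both admissible channels satisfy $D_{j,*}=D_{k,*}$, whence $W_{j,*}=W_{k,*}$, contradicting distinctness of entries. Your argument is more elementary (it avoids the auxiliary support-size bound entirely) and arguably cleaner, since it parallels the row/entry-collision logic you use in case (c). You also make the compactness step (sup attained, hence strict) explicit, which the paper leaves implicit via Theorem~\ref{convexDec}.
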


\begin{proof}
By Proposition~\ref{perfect}, $\upperRP_W(2)=1$, so that $W$ can be
expressed as a convex combination of perfect channels and hence
$\upperIC_{11}(W)=1$.

Let
\[
S
= \set{\lambda\in\decompose(W)}{\rankProbability_\lambda(2)=1}.
\]
If $\upperIC_{01}(W)=1$, then there exists a $\lambda\in S$ such that
the capacity-achieving input distribution, denoted $\mu$, is
capacity-achieving for every perfect channel $D\in\support(\lambda)$.
Thus at least one entry of $\mu$ must be $1/2$.
With no loss of generality, we assume $\mu_1=1/2$.

If $\mu_2$ and $\mu_3$ are both positive, then $\mu$ is
capacity-achieving only for perfect channels
\[
\begin{pmatrix}
1 &0\\
0 &1\\
0 &1
\end{pmatrix}
\text{ and }
\begin{pmatrix}
0 &1\\
1 &0\\
1 &0
\end{pmatrix}.
\]
By Proposition~\ref{decVertex3aux}, every $\lambda\in\decompose(W)$
satisfies $\support(\lambda)\ge\ceil{\log_2 6}=3$, which implies that
$\mu$ is not capacity-achieving for $\lambda\in S$.

If $\mu_2=0$, then $\mu$ is capacity-achieving for perfect channels
\[
\begin{pmatrix}
1 &0\\
0 &1\\
0 &1
\end{pmatrix},
\begin{pmatrix}
0 &1\\
1 &0\\
1 &0
\end{pmatrix},
\begin{pmatrix}
1 &0\\
1 &0\\
0 &1
\end{pmatrix},
\begin{pmatrix}
0 &1\\
0 &1\\
1 &0
\end{pmatrix}.
\]
However, any convex combination of these four matrices can only yield
a channel matrix with at most four distinct probability values, and
hence $\mu$ is not capacity-achieving for $\lambda\in S$.

In all cases, we have shown that $\mu$ is not capacity-achieving,
which contradicts the assumption $\upperIC_{01}(W)=1$.
Therefore, we have $\upperIC_{01}(W)<1=\upperIC_{11}(W)$.
\end{proof}

\small
\bibliographystyle{IEEEtran}
\bibliography{IEEEabrv,ic}

\bigbreak
{%
\footnotesize\raggedleft

\texttt{(Version~\docVersion.\docBuildNumber)}

{\textsf{article} options: $\mathrm{\paperOption}$}\\
{\textsf{geometry} options: $\mathrm{\geometryOption}$}\\
{\textsf{hyperref} options: $\mathrm{\hyperrefOption}$}\par
}

\end{document}